\newtheorem{theorem}{Theorem}[section]
\newtheorem{lemma}[theorem]{Lemma}
\newtheorem{definition}[theorem]{Definition}
\newtheorem{proposition}[theorem]{Proposition}
\newtheorem{corollary}[theorem]{Corollary}
\newcommand{\lrb}[1]{\left ( #1 \right )}
\newcommand{\lrsb}[1]{\left [ #1 \right ]}
\newcommand{\lrcb}[1]{\left \{ #1 \right \}}
\newcommand{\abs}[1]{\left | #1 \right |}
\newcommand{\norm}[1]{\left \| #1 \right \|}
\newcommand{\mysin}[1]{\operatorname{sin}\lrb{#1}}
\newcommand{\mycos}[1]{\operatorname{cos}\lrb{#1}}
\newcommand{\ket}[1]{\left| #1 \right\rangle}
\newcommand{\bra}[1]{\left\langle #1 \right|}
\newcommand{\ketbra}[2]{|#1\rangle \langle #2|}
\newcommand{\braket}[2]{\langle #1|#2\rangle }
\newcommand{\zo}{\{0,1\}}
\newcommand{\myexp}[1]{\operatorname{exp} \lrb{#1}}
\newcommand{\mylog}[1]{\operatorname{log} \lrb{#1}}
\newcommand{\logab}[2]{\operatorname{log}_{#1} \lrb{#2}}
\newcommand{\myln}[1]{\operatorname{ln} \lrb{#1}}
\newcommand{\myArg}[1]{\operatorname{Arg} \lrb{#1}}
\newcommand{\R}{\mathbb{R}}
\newcommand{\C}{\mathbb{C}}
\newcommand{\Z}{\mathbb{Z}}
\newcommand{\N}{\mathbb{N}}
\renewcommand{\P}[1]{\mathbb{P}\lrsb{#1}}
\newcommand{\E}[1]{\mathbb{E}\lrsb{#1}}
\newcommand{\defeq}{\coloneqq}
\newcommand{\mymax}[1]{\operatorname{max}\lrb{#1}}
\newcommand{\mymin}[1]{\operatorname{min}\lrb{#1}}
\DeclareMathOperator{\realpart}{Re}
\DeclareMathOperator{\imaginarypart}{Im}
\renewcommand{\i}{\mathbf{i}}
\newcommand{\myO}[1]{\mathcal{O}\lrb{#1}}
\newcommand{\mytO}[1]{\tilde{\mathcal{O}}\lrb{#1}}
\newcommand{\myOmega}[1]{\Omega\lrb{#1}}
\newcommand{\myTheta}[1]{\Theta\lrb{#1}}
\DeclareMathOperator{\erfc}{erfc}
\newcommand{\hl}[1]{#1}
\title{Efficient ground-state energy estimation and certification on early fault-tolerant quantum computers}
\author[1]{Guoming Wang}
\author[2]{Daniel Stilck Fran\c{c}a}
\author[3]{Gumaro Rendon}
\author[3]{Peter D. Johnson}
\affil[1]{Zapata Computing Canada Inc., Toronto, ON M5C 3A1, Canada}
\affil[2]{Univ Lyon, ENS Lyon, UCBL, CNRS, Inria, LIP, F-69342, Lyon Cedex 07, France}
\affil[3]{Zapata Computing Inc., Boston, MA 02110 USA}
\date{}
\begin{document}

\maketitle

\begin{abstract}
A major thrust in quantum algorithm development over the past decade has been the search for the quantum algorithms that will deliver practical quantum advantage first. Today’s quantum computers—and even early fault-tolerant quantum computers—are limited in the number of operations they can implement per circuit. We introduce quantum algorithms for ground-state energy estimation (GSEE) that accommodate this design constraint. 

The first algorithm estimates ground-state energies, offering a quadratic improvement on the ground state overlap parameter compared to other methods in this regime. The second algorithm certifies that the estimated ground-state energy is within a specified error tolerance of the true ground-state energy, addressing the issue of gap estimation that beleaguers several ground state preparation and energy estimation algorithms. We note, however, that the scaling of this certification technique is currently less favorable than that of the GSEE algorithm.

To develop the certification algorithm, we propose a novel use of quantum computers to facilitate rejection sampling. After a classical computer generates initial samples, the quantum computer is used to accept or reject these samples, resulting in a set of accepted samples that approximate draws from a target distribution. Although we apply this technique specifically for ground-state energy certification, it may find broader applications.

Our work pushes the boundaries of what operation-limited quantum computers can achieve, bringing the prospect of quantum advantage closer to realization.
\end{abstract}

\section{Introduction}

Estimating the ground-state energy of quantum many-body systems is a fundamental problem in condensed matter physics, quantum chemistry, materials science, and quantum information. This problem underpins basic processes in drug discovery and materials design \cite{aspuru2005simulated, cao2019quantum, elfving2020will, goings2022reliably}.

In ground-state energy estimation, the system of interest is represented by an \( n \)-qubit Hamiltonian \( H \) with an unknown spectral decomposition \( H = \sum_j E_j \ket{E_j} \bra{E_j} \), where \( E_0 < E_1 \le E_2 \le \dots \) are the eigenvalues of \( H \) and the \( \ket{E_j} \) are the orthonormal eigenstates of \( H \). The goal is to estimate the lowest eigenvalue \( E_0 \) of \( H \) within an additive error \( \epsilon \) with high probability. Given its importance, this problem has been extensively studied, and various methods have been proposed to address it. Among these, two main approaches to ground-state energy estimation (GSEE) have been explored.

\paragraph{GSEE assuming a fault-tolerance cost model}
The first approach to GSEE is based on the quantum phase estimation (QPE) algorithm \cite{kitaev1995quantum, nielsen2001quantum}, which often provides rigorous performance guarantees. In this approach, we assume the existence of an efficient procedure to prepare a state \( \ket{\psi} \) that has a non-trivial overlap with the ground state of \( H \) (i.e., \( |\braket{E_0}{\psi}| = \Omega(1/\operatorname{poly}(n)) \)), a reasonable assumption in many practical settings. For instance, Refs.~\cite{babbush2015chemical, tubman2018postponing, sugisaki2018quantum, mcardle2020quantum} present several methods for generating such states for quantum chemistry Hamiltonians. Given this ansatz state \( \ket{\psi} \), we can apply standard QPE on the time evolution of \( H \) with \( \ket{\psi} \) as the initial state. Assuming \( p_0 \coloneqq |\braket{E_0}{\psi}|^2 \), we can obtain an \( \epsilon \)-accurate estimate of \( E_0 \) using \( \mathcal{O}(p_0^{-1}) \) QPE circuits, each utilizing \( \mathcal{O}(\log(\epsilon^{-1} p_0^{-1})) \) ancilla qubits and implementing controlled time evolution of \( H \) for up to \( \mathcal{O}(\epsilon^{-1} p_0^{-1}) \) time. The overall runtime of the algorithm scales as \( \mathcal{O}(\epsilon^{-1} p_0^{-2}) \). Many variants of QPE have been proposed to reduce the total evolution time to \( \tilde{\mathcal{O}}(\epsilon^{-1} p_0^{-1}) \) \cite{knill2007optimal, nagaj2009fast, poulin2009sampling, dong2022ground}, minimize the maximum evolution time to \( \mathcal{O}(\epsilon^{-1} \log(p_0^{-1})) \) \cite{knill2007optimal, nagaj2009fast, poulin2009sampling, lin2022heisenberg, wan2022randomized, dong2022ground}, or limit the number of ancilla qubits to \( \mathcal{O}(1) \) \cite{berry2009perform, higgins2007entanglement, lin2022heisenberg, dong2022ground}. 

Despite these improvements, all these algorithms require implementing controlled time evolution of \( H \) for up to \( \mathcal{O}(\epsilon^{-1}) \) time, resulting in high gate counts when small \( \epsilon \) values are needed.
For example, Ref.~\cite{kim2022fault} estimated that the number of T gates required to implement ground-state energy estimation to sufficient accuracy for a class of small molecules is greater than $10^{10}$; the error rate of each gate must be significantly less than $10^{-10}$.
Today's quantum hardware and that of the near future are far more limited in the number of gates that can be implemented per circuit.
For instance, IBM’s \( 100 \times 100 \) challenge \cite{gambetta2022hundred} aims to achieve accurate expectation value calculations using circuits with approximately \( 10^4 \) gates, each having an error rate of \( 10^{-3} \). There is broad consensus that scaling from \( 10^4 \) to over \( 10^{10} \) operations requires large-scale quantum architectures capable of low-error fault-tolerant operations.

The GSEE algorithms above were developed with a fault-tolerant quantum computer in mind. It is by now standard to assume a \emph{cost model} where the overhead of reducing error rates (through fault-tolerant protocols) does not change the polynomial scaling of a quantum algorithm; that is, the cost is logarithmic in the number of operations.
Under this model, it is generally preferable to reduce algorithm runtime at the cost of an increased number of operations per circuit.
For this reason, methods for ground-state energy estimation have been developed \cite{lin2020near} that increase the number of ancilla qubits and operations to reduce the overall runtime.

\paragraph{GSEE assuming a pre-fault-tolerance cost model}
The second approach to GSEE is the variational quantum eigensolver (VQE) \cite{peruzzo2014variational, aspuru2005simulated, mcclean2016theory}, initially conceived for a cost model without fault-tolerant protocols, where gate error rates are fixed. For a given circuit error rate, this limits the maximum number of quantum operations per circuit. VQE uses an ansatz circuit with a limited number of operations to approximate the ground state of \( H \). Although suitable for near-term implementation, VQE lacks performance guarantees for two reasons. First, the accuracy of its output depends on the representational power of the chosen variational ansatz, which is often difficult to assess. Second, finding the optimal ansatz parameters requires solving a non-convex optimization problem, which can be challenging~\cite{Bittel2021}. Furthermore, since VQE repeatedly evaluates the energy of the ansatz state (either via direct sampling or more sophisticated methods \cite{wang2021minimizing, koh2022foundations}), the algorithm becomes impractical for industrially-relevant chemical system sizes, as shown in recent work~\cite{gonthier2020measurements}, and is more sensitive to noise than previously anticipated~\cite{StilckFrana2021,DePalma2023,quek2024exponentially,wang2021noise}.

\paragraph{GSEE assuming a limited-number-of-operations cost model}
The two models of ground-state energy estimation algorithms introduced so far were developed under two very different cost models: logarithmic-overhead error reduction via fault tolerance, and fixed error rates. These models correspond to future fault-tolerant architectures and today's near-term intermediate-scale devices, respectively. However, these cost models may not fully capture the capabilities and limitations of early fault-tolerant quantum computers \cite{fellous2021limitations}, motivating the development of models that interpolate between these two regimes. While rigorously developing such intermediate cost models is beyond the scope of our work, we assume a proxy for this model: \emph{any early fault-tolerant quantum computer will be limited in the number of logical operations it can implement per circuit while maintaining the total error rate below a specified threshold.} This assumption motivates the development of quantum algorithms that reduce the number of operations below the device's limit, albeit at the cost of increased runtime.

Several recent works have been developed under similar premises. The first was the so-called \( \alpha \)-VQE method \cite{wang2019accelerated}, which introduced a variable-depth amplitude estimation. Later, Wang et al. presented a variable-depth amplitude estimation algorithm that is robust to significant levels of error \cite{wang2021minimizing}. Similar approaches have been explored in the context of quantum algorithms for finance \cite{alcazar2022quantum, giurgica2022lowa}, with some of these methods implemented on quantum hardware \cite{katabarwa2021reducing, giurgica2022lowb}. Other algorithms targeting early fault-tolerant quantum computers aim to reduce the number of ancilla qubits compared to their traditional counterparts \cite{suzuki2020amplitude, lin2022heisenberg, zhang2022computing}. Additionally, low-depth quantum phase estimation algorithms have been proposed to address cases where the overlap between the initial state and the target eigenstate is large \cite{ding2022even, ni2023low}.

Recently, Wang et al.~\cite{wang2023quantum} applied this methodology to the task of ground-state energy estimation. They asked whether a method exists for estimating ground-state energy that requires fewer operations per circuit than previous approaches (i.e., the advantage of VQE) while providing rigorous performance guarantees (i.e., the advantage of traditional GSEE methods). Inspired by the recent work of Lin and Tong \cite{lin2022heisenberg}, they showed that, given a lower bound \( \Delta \) on the spectral gap of \( H \) around \( E_0 \), one can estimate \( E_0 \) within additive error \( \epsilon \) with high probability by using \( \tilde{\mathcal{O}}(\Delta^2 \epsilon^{-2} p_0^{-2}) \) quantum circuits. Each circuit evolves \( H \) for up to \( \mathcal{O}(\Delta^{-1} \operatorname{polylog}(\Delta \epsilon^{-1} p_0^{-1})) \) time and requires only a single ancilla qubit. Notably, the maximum evolution time is only poly-logarithmic in \( 1/\epsilon \), which is exponentially shorter than in previous methods. Furthermore, the parameter \( \Delta \) can be chosen\footnote{\label{ft:gap_comment}\hl{As described in previous work \cite{wang2023quantum}, it is possible to choose a value of $\Delta$ between $\Delta_{\rm true}$ and $\epsilon$. In practice, the exact value of $\Delta_{\rm true}$ is generally unknown, so a sufficiently accurate estimate of $\Delta_{\rm true}$ must be obtained.
In Ref.~\cite{wang2023quantum}, EOM-CCSD calculations provided gap estimates with accuracy within \( \sim10\% \) of the true value, which sufficed to set a lower bound on \( \Delta_{\rm true} \) and thereby reduce the circuit depth.}} anywhere between \( \Theta(\epsilon) \) and \( \Theta(\Delta_{\rm true}) \), where \( \Delta_{\rm true} \coloneqq E_1 - E_0 \). By tuning \( \Delta \), they achieved a class of GSEE algorithms that smoothly interpolate between an algorithm with circuit depth \( \mathcal{O}(\Delta_{\rm true}^{-1} \operatorname{polylog}(\Delta_{\rm true} \epsilon^{-1} p_0^{-1})) \) and runtime \( \tilde{\mathcal{O}}(\Delta_{\rm true} \epsilon^{-2} p_0^{-2}) \) and an algorithm with circuit depth \( \mathcal{O}(\epsilon^{-1} \operatorname{polylog}(p_0^{-1})) \) and runtime \( \tilde{\mathcal{O}}(\epsilon^{-1} p_0^{-2}) \). These algorithms, therefore, transition smoothly from having low circuit depth to achieving Heisenberg-limit scaling in runtime with respect to \( \epsilon \). Later, Ding and Lin \cite{ding2022even} recovered this result using a different approach.

Although Refs.~\cite{wang2023quantum} and \cite{ding2022even} significantly reduce the circuit depth for GSEE, one aspect of their algorithms remains  unsatisfactory: their runtimes are quadratic in \( 1/p_0 \). This implies that if the overlap between the input state \( \ket{\psi} \) and the ground state \( \ket{E_0} \) is small, these algorithms may take a long time to complete. A natural question arises: can the dependence on \( 1/p_0 \) be improved to linear while retaining the desirable features of these algorithms? 

In this work, we answer this question affirmatively. Specifically, we show that, given a lower bound \( \Delta \) on the spectral gap of \( H \) around \( E_0 \), one can estimate \( E_0 \) within additive error \( \epsilon \) with high probability by using \( \tilde{\mathcal{O}}(\Delta^2 \epsilon^{-2} p_0^{-1}) \) quantum circuits. Each circuit has a depth of \( \mathcal{O}(\Delta^{-1} \operatorname{polylog}(\Delta \epsilon^{-1} p_0^{-1})) \) and requires \( \mathcal{O}(1) \) ancilla qubits. These circuits involve only controlled time evolution of $H$ along with elementary quantum gates. Additionally, as in Ref.~\cite{wang2023quantum}, we can tune \( \Delta \) anywhere between \( \Theta(\epsilon) \) and \( \Theta(\Delta_{\rm true}) \). Doing so yields a class of GSEE algorithms that smoothly transition from circuit depth \( \mathcal{O}(\Delta_{\rm true}^{-1} \operatorname{polylog}(\Delta_{\rm true} \epsilon^{-1} p_0^{-1})) \) and runtime \( \tilde{\mathcal{O}}(\Delta_{\rm true} \epsilon^{-2} p_0^{-1}) \) to circuit depth \( \mathcal{O}(\epsilon^{-1} \operatorname{polylog}(p_0^{-1})) \) and runtime \( \tilde{\mathcal{O}}(\epsilon^{-1} p_0^{-1}) \). Thus, our algorithm can trade circuit depth for runtime, like the one in Ref.~\cite{wang2023quantum}, but with a shorter runtime.

\paragraph{ground-state energy certification via rejection sampling}
Although good estimates of the spectral gap are available for many physically relevant models~\cite{lee2023evaluating}, this is not always the case. Moreover, our algorithm can yield accurate ground-state energy estimates for evolution times \( t \ll \Delta^{-1} \) if the initial state has certain structural properties, such as large overlap with the ground state and minimal overlap with other low-energy states. Therefore, it is highly desirable to ensure that our algorithm provides reliable estimates even without knowledge of \( \Delta \) or when we have reason to believe the initial state is well-structured.

To this end, we devise a test that does not rely on knowledge of the spectral gap and only accepts a ground-state energy estimate if it is within a specified error tolerance of the true ground-state energy\footnote{Our test certifies that a ground-state energy estimate is \( \epsilon \)-close to the correct value, rather than certifying that the Hamiltonian's spectral gap is at least some \( \Delta_0 \). Furthermore, the test may reject correct estimates at depths smaller than the inverse of the spectral gap, but it only accepts correct estimates.}. Additionally, the test will accept with high probability if the evolution time is proportional to the inverse of the spectral gap. Our algorithm, combined with this certification test, is the first to deliver certifiably accurate energy estimates in the regime of depths \( o(\epsilon^{-1}) \). However, the test’s sample complexity of \( \tilde{\mathcal{O}}(\epsilon^{-4} p_0^{-3}) \) may make implementation challenging in some contexts.

A core component of our certification algorithm is a novel procedure for performing rejection sampling on a quantum computer. We use it to draw random values from a neighborhood around \( E_0 \). Specifically, suppose we wish to sample from a target Gaussian-like distribution \( \nu(x) \) that peaks around \( E_0 \), which may be challenging to do directly. Instead, we first draw a random value \( x \) from an alternative distribution \( \mu(x) \) and construct a unitary operator \( U_x(H) \) such that \( |\bra{0^k} U_x(H) \ket{0^k} \ket{\psi}|^2 \propto \frac{\nu(x)}{\mu(x)} \) for some integer \( k \). We then accept \( x \) if a measurement on the first \( k \) qubits of \( U_x(H) \ket{0^k}\ket{\psi} \) yields the outcome \( 0^k \); otherwise, we discard \( x \). This process is analogous to the standard rejection sampling algorithm. The operation \( U_x(H) \) can be implemented with a low-depth circuit using the recent technique from Ref.~\cite{dong2022ground}. Furthermore, the circuit requires only a few ancilla qubits, and the rejection sampling procedure is efficient, meaning that not many samples are discarded. Once we obtain a sufficient number of samples from the desired distribution \( \nu(x) \), we can certify the ground-state energy estimate based on the sample statistics.

The remainder of this paper is organized as follows. Section \ref{sec:main_results} formally defines the problem and summarizes the main results. Section \ref{sec:gsee_alg} introduces our GSEE algorithm, beginning with a description of the basic strategy and a proof of its correctness, followed by a detailed implementation and an analysis of its costs. In Section \ref{sec:numerical_results}, we numerically evaluate the algorithm's performance and compare it with an alternative approach of similar circuit depth and overall runtime. Section \ref{sec:certification_result} presents our method for performing rejection sampling on a quantum computer and demonstrates its application in devising an algorithm for certifying ground-state energy estimates. Finally, Section \ref{sec:conclusion} concludes the paper.

\section{Overview of the main results}
\label{sec:main_results}
In this section, we outline the problem to be solved and summarize the main results of this work.

\subsection{Problem formulation}
Let us begin with a formal definition of the ground-state energy Estimation (GSEE) problem:

\begin{definition}[GSEE]
Let $H$ be an $n$-qubit Hamiltonian with unknown spectral decomposition $H=\sum_{j=0}^{N-1} E_j \ket{E_j} \bra{E_j}$, where $N=2^n$, and the eigenvalues satisfy $0 \le E_0<E_1 \le E_2 \le ...\le E_{N-1} \le \pi$. Suppose we know a lower bound $\Delta>0$ on the spectral gap $E_1-E_0$ of $H$. Additionally, suppose we can prepare an $n$-qubit state $\ket{\psi}=\sum_{j=0}^{N-1} \gamma_j \ket{E_j}$, with an overlap with the ground state $\ket{E_0}$ satisfying $|\gamma_0|^2 \ge \eta$, where $\eta \in (0, 1)$ is known. 

Our goal is to estimate $E_0$ within additive error $\epsilon$ with probability at least $1-\delta$, for given $\epsilon>0$ and $\delta \in (0, 1)$. To achieve this, we use quantum circuits that involve controlled-$e^{\i Ht_j}$ operations for various $t_j$ values, along with additional elementary quantum gates. Each of these circuits acts on the state  $\ket{0^k}\ket{\psi}$ for some $k \ge 1$.
\label{def:gsee}
\end{definition}

We assume the ability to implement the controlled time evolution of \( H \), i.e., the unitary operation controlled-\( e^{\i H t} \coloneqq \ketbra{0}{0} \otimes I + \ketbra{1}{1} \otimes e^{\i H t} \) for any \( t \in \mathbb{R} \). We define the circuit depth of our algorithm as the maximal evolution time of \( H \) and the overall runtime of our algorithm as the total evolution time of \( H \), since Hamiltonian simulation often represents the dominant cost of the algorithm.

Our algorithm requires probabilistic implementations of various (nonunitary) functions of \( H \). Equivalently, we need to implement certain unitary operations that serve as block-encodings of specific functions of \( H \). Recall that block-encodings of linear opreators are formally define as follows:
\begin{definition}[block-encoding]
Let \( n, m \in \mathbb{N} \) and \( \alpha, \epsilon \in \mathbb{R}^+ \) be arbitrary. We say that an \( (n + m) \)-qubit unitary operator \( U \) is an \( (\alpha, m, \epsilon) \)-block-encoding of an \( n \)-qubit linear operator \( A \) if 
\begin{align}
    \norm{A - \alpha \left( \bra{0^m} \otimes I \right) U \left( \ket{0^m} \otimes I \right)} \le \epsilon.
\end{align}
\end{definition}
In this work, our block-encodings of $f(H)$ require a single ancilla qubit (i.e. $m=1$) and have a scaling factor $\alpha=1$, provided that $\norm{f(H)}\le 1$.

We employ the technique of \emph{quantum eigenvalue transformation of unitary matrices with real polynomials} (QET-U) \cite{dong2022ground} to implement these block-encoding operations. QET-U circuits have a specific structure: they interleave controlled (forward and backward) time evolutions of \( H \) with carefully chosen \( X \)-rotations on the ancilla qubit. In our implementation, we modify these circuits by inserting a \( Z \)-rotation on the ancilla qubit after each controlled time evolution of $H$. This modification allows us to effectively implement the controlled-\( e^{\i (H - \xi I)t} \) operation, which is necessary for realizing certain non-even functions of \( H \). In addition, we might need to feed the output of a modified QET-U circuit into a Hadamard test circuit to extract useful information from its measurement outcome. Figure \ref{fig:quantum_circuits} illustrates the quantum circuits used in this work. They involve only simple single-qubit operations besides the controlled time evolutions of $H$.

\begin{figure}[H]
    \centering
    \begin{displaymath}
    \Qcircuit @C=0.8em @R=1.2em {
        & & & &\\			
        \lstick{\ket{0}}
        &\gate{\mathrm{H}}	 &\ctrl{1}	& \gate{\mathrm{W}}
        & \gate{\mathrm{H}}			&\meter\\
        \lstick{\ket{\phi}} 	 & \qw & \gate{e^{\i H \tau}} 		 
        &\qw &\qw &\qw
    }		
    \end{displaymath}
        (a) \label{fig:hadamard_test}
        \begin{displaymath}
    \Qcircuit @C=0.6em @R=1.2em {
        & & & & & & & & & & & & &\\			
        \lstick{\ket{0}}
        &\gate{e^{\i \varphi_0 X}}	 &\ctrl{1}	& \gate{e^{\i\omega Z}}
        & \gate{e^{\i \varphi_1 X}}	&\ctrl{1}	& \qw & \push{\hspace{-5pt}...} &\ctrl{1} & \gate{e^{\i\omega Z}}
        & \gate{e^{\i \varphi_1 X}}	&\ctrl{1}& \gate{e^{-\i\omega Z}}
        & \gate{e^{\i \varphi_0 X}}	&\meter\\
        \lstick{\ket{\psi}} 	 & \qw & \gate{e^{\i H t}} 		 
        &\qw &\qw & \gate{e^{-\i H t}} & \qw & \push{\hspace{-5pt}...}		& \gate{e^{\i H t}}
        &\qw &\qw   & \gate{e^{-\i H t}} & \qw & \qw & \qw
    }		
    \end{displaymath}
        (b)

\caption{Quantum circuits used in this work. (a) Hadamard test circuit, where \( W = I \) or \( S^\dagger \), with \( S \) being the phase gate. The input state might be generated by a modified QET-U circuit. (b) Modified QET-U circuit. In addition to the \( X \)-rotations on the ancilla qubit used in the standard QET-U circuit, we insert \( Z \)-rotations on the same qubit to effectively implement controlled time evolutions of \( H - \xi I \) for any given \( \xi \in \R\).}
\label{fig:quantum_circuits}
\end{figure}

\subsection{Main results}

Our main results are threefold: a state-of-the-art early fault-tolerant algorithm for GSEE, a new method for performing rejection sampling on a quantum device, and techniques for certifying ground-state energy estimates. We will now informally summarize these technical contributions.

First, we develop a new GSEE algorithm that achieves a better depth-runtime tradeoff than previous algorithms:

\begin{theorem}[GSEE algorithm, informal]
Let $H$ be an $n$-qubit Hamiltonian with unknown spectral decomposition $H=\sum_{j=0}^{N-1} E_j \ket{E_j} \bra{E_j}$, where $N=2^n$, and the eigenvalues satisfy $0 \le E_0<E_1 \le E_2 \le ...\le E_{N-1} \le \pi$, and $E_1-E_0 \ge \Delta$ for a known $\Delta >0$. Moreover, suppose we can prepare an $n$-qubit state $\ket{\psi}$ such that $|\langle\psi|E_0\rangle|^2 \ge \eta$ for a known $\eta \in (0, 1)$. 

Then, assuming the ability to implement controlled time evolution of $H$, there exists an algorithm that, for any given $\epsilon>0$ and $\delta\in (0,1)$, estimates $E_0$ within additive error $\epsilon$ with probability at least $1-\delta$ by making use of 
$$\tilde{\mathcal{O}}\lrb{\eta^{-1} (\Delta^{2}\epsilon^{-2} + \mylog{\Delta^{-1}})\mylog{\delta^{-1}}}$$
 quantum circuits, where each circuit has depth proportional to $$\tilde{\mathcal{O}}\lrb{\Delta^{-1} \operatorname{log}(\eta^{-1}\Delta \epsilon^{-1})}.$$
     \label{thm:main_informal}
\end{theorem}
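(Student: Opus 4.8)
The plan is to reduce ground state energy estimation to drawing samples from a smoothed spectral measure via Theorem~\ref{thm:rejection_sampling}, and then to read off $E_0$ from the leftmost cluster of those samples. Concretely, I would fix the kernel $\nu$ to be a Gaussian (or Gaussian-like) density centered at $0$ with standard deviation $\sigma=\Theta(\Delta/\polylog{\Delta\epsilon^{-1}})$, and let $\mu$ be a proposal density whose support contains $\textrm{supp}(p\ast\nu)$ (e.g.\ a slight widening of $[-1,1]$) and which is matched to $p\ast\nu$ well enough that the rejection constant satisfies $|M|^2=\tilde{\mathcal{O}}(1)$. With this choice $p\ast\nu(y)=\sum_j p_j\,\nu(y-E_j)$ is a mixture of bumps centered at the eigenvalues; since $p_0\ge\eta$ and the bump at $E_0$ is separated from those at $E_1,E_2,\dots$ by the gap $\Delta\gg\sigma$, the leftmost component carries weight at least $\eta$ and its overlap with the rest is suppressed like a Gaussian tail $e^{-\Omega((\Delta/\sigma)^2)}$.

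The next step is to instantiate the hypotheses of Theorem~\ref{thm:rejection_sampling}. For each proposed $x$, I would construct a $(1,m,0)$-block-encoding $U_x$ of $h_x(H)$ with $|h_x(H)|^2=\nu(H-xI)/(|M|^2\mu(x))$ and $\norm{h_x(H)}\le 1$. Because $\nu$ is (close to) a Gaussian, $h_x$ is a shifted Gaussian in the eigenvalue, which admits an accurate polynomial approximation of degree $\tilde{\mathcal{O}}(\sigma^{-1})=\tilde{\mathcal{O}}(\Delta^{-1})$; applying QSVT \cite{qsvt_paper} in the block-encoding model --- or the analogous combinations of time evolutions in the Hamiltonian-evolution model, which I defer to the model-specific sections --- yields $U_x$ as a circuit of depth $\tilde{\mathcal{O}}\lrb{\Delta^{-1}\log{\eta^{-1}\Delta\epsilon^{-1}}}$ using $\mathcal{O}(1)$ ancillas. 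Theorem~\ref{thm:rejection_sampling} then delivers one sample from $p\ast\nu$ at an expected cost of $|M|^2=\tilde{\mathcal{O}}(1)$ such circuits.

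With a sampler in hand, I would estimate $E_0$ in two stages. First, a coarse localization: bin $[-1,1]$ at resolution $\Theta(\Delta)$ and draw $\tilde{\mathcal{O}}\lrb{\eta^{-1}\log{\Delta^{-1}}\log{\delta^{-1}}}$ samples to reliably identify the leftmost bin receiving weight $\gtrsim\eta$, pinning down $E_0$ to within $\mathcal{O}(\Delta)$; this produces the $\eta^{-1}\log{\Delta^{-1}}$ term. Second, a refinement: using the coarse estimate to place a threshold $\tau$ between the $E_0$-bump and the $E_1$-bump, I would take the empirical mean (or median) of the samples falling below $\tau$. Since $\nu$ is symmetric and both the truncation bias at $\tau$ and the contamination from higher bumps are of order $e^{-\Omega((\Delta/\sigma)^2)}$, this estimator is essentially unbiased for $E_0$ with per-sample variance $\mathcal{O}(\sigma^2)=\tilde{\mathcal{O}}(\Delta^2)$, so $\tilde{\mathcal{O}}(\Delta^2\epsilon^{-2})$ in-cluster samples --- hence $\tilde{\mathcal{O}}(\eta^{-1}\Delta^2\epsilon^{-2})$ total --- suffice to reach accuracy $\epsilon$. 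A median-of-means boosting over $\mathcal{O}\lrb{\log{\delta^{-1}}}$ repetitions gives the final $\log{\delta^{-1}}$ factor and the target failure probability, and combining the two stages through the sampler cost $|M|^2=\tilde{\mathcal{O}}(1)$ reproduces the claimed circuit count $\tilde{\mathcal{O}}\lrb{\eta^{-1}(\Delta^2\epsilon^{-2}+\log{\Delta^{-1}})\log{\delta^{-1}}}$.

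The main obstacle I anticipate is the joint calibration of $\sigma$, $\nu$, and $\mu$: the width $\sigma$ must be small enough relative to $\Delta$ that the leftmost cluster separates cleanly and the estimator's bias stays well below $\epsilon$, yet the function $h_x$ must remain implementable in depth $\tilde{\mathcal{O}}(\Delta^{-1})$ and $\mu$ must track $p\ast\nu$ closely enough to keep $|M|^2=\tilde{\mathcal{O}}(1)$. Verifying $\norm{h_x(H)}\le 1$ uniformly over $\textrm{supp}(\mu)$, so that each $U_x$ is a legitimate block-encoding, and controlling the polynomial-approximation error against the tiny allowed bias are the delicate quantitative points; the precise constructions are exactly what the block-encoding and Hamiltonian-evolution sections must supply.
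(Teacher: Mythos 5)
Your overall architecture (Gaussian smoothing of the spectral measure, rejection sampling, a coarse localization stage, then a mean-of-in-cluster-samples refinement) matches the paper's, but there is a genuine gap at exactly the point you flag as ``delicate'': the assumption that one can sample from $p\ast\nu$ over essentially all of $[-1,1]$ with rejection constant $|M|^2=\tilde{\mathcal{O}}(1)$. This is not a calibration issue --- it is impossible. The hypothesis of Theorem \ref{thm:rejection_sampling} requires $\|h_x(H)\|\le 1$, i.e. $\nu(y-x)\le |M|^2\mu(x)$ for every admissible eigenvalue location $y$; since the spectrum is unknown, the support of $\mu$ must cover all of $[-1,1]$ and the constraint must hold with $y=x$, forcing $|M|^2\mu(x)\ge \nu(0)=\Theta(\sigma^{-1})$ pointwise. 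Integrating over $[-1,1]$ gives $|M|^2\ge 2\nu(0)=\tilde{\Omega}(\Delta^{-1})$ for \emph{any} proposal $\mu$, uniform or not; ``matching $\mu$ to $p\ast\nu$'' any better would require already knowing where the spectral weight sits. Consequently each sample costs $\tilde{\Theta}(\Delta^{-1})$ circuits, your coarse stage costs $\tilde{\mathcal{O}}(\eta^{-1}\Delta^{-1}\log{\Delta^{-1}})$ circuits rather than $\tilde{\mathcal{O}}(\eta^{-1}\log{\Delta^{-1}})$, and your refinement costs $\tilde{\mathcal{O}}(\eta^{-1}\Delta\epsilon^{-2})$ rather than $\tilde{\mathcal{O}}(\eta^{-1}\Delta^{2}\epsilon^{-2})$ --- both a factor $\Delta^{-1}$ above the theorem's claim. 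Your coarse stage is also circular as stated: it consumes samples from the very sampler whose efficiency it is supposed to enable.

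The paper resolves this structurally, in two ways you would need to adopt. First, the coarse stage does not use the sampler at all: it is a binary search (Algorithm \ref{alg:basic_gsee}, following Ref.~\cite{dong2022ground}) built from block-encoded approximate step functions $f_{a,b;\epsilon'}(H)$, which costs $\tilde{\mathcal{O}}(\eta^{-1}\log{\Delta^{-1}}\log{\delta^{-1}})$ circuits of depth $\tilde{\mathcal{O}}(\Delta^{-1}\log{\eta^{-1}})$ and returns $\tilde{E}_0$ with $|\tilde{E}_0-E_0|\le w/2$, where $w=\tilde{\Theta}(\Delta)$. Second, the rejection sampling in the refinement is performed with $\mu$ uniform on the window $[\tilde{E}_0-w,\tilde{E}_0+w]$ only: on a window of width $\tilde{\Theta}(\sigma)$ the normalization is $\tilde{\mathcal{O}}(1)$ while still compatible with $\|h_x(H)\|\le 1$, the acceptance probability becomes $\tilde{\Omega}(\eta)$ (Theorem \ref{thm:main_Gaussian}), every accepted sample is automatically ``in cluster,'' and the bias of the sample mean is controlled by the truncated-Gaussian Lemmas \ref{lem:truncated_gaussian_mean_ideal} and \ref{lem:truncated_gaussian_mean_noisy}, which require precisely the guarantee $|\tilde{E}_0-E_0|\le w/2$ supplied by the first stage. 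With these two changes your accounting reproduces the claimed bounds; without them it does not.
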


For ease of presentation, the theorems provide algorithm costs in big-O notation. However, interested readers can find the constant and logarithmic pre-factors in Algorithms \ref{alg:basic_gsee}, \ref{alg:gsee} and \ref{alg:gsee_ref}. Note that, as is standard, the Hamiltonian — and thus the ground-state energy estimate — will likely need to be normalized and shifted so that its spectrum lies within the interval \( [0, \pi] \).

We emphasize that in Theorem \ref{thm:main_informal}, $\Delta$ is merely a lower bound on the true spectral gap $\Delta_{\rm true} \defeq E_1-E_0$ of $H$, not necessarily $\Delta_{\rm true}$ itself. \hl{Thus, in principle, the algorithm parameter $\Delta$ can be tuned
between $\Theta(\epsilon)$ and $\Theta(\Delta_{\rm true})$. Defining parameter $\beta\in[0,1]$ and setting $\Delta = \Delta_{\textrm{true}}^{1-\beta}\epsilon^{\beta}$ then leads to the existence of a class of GSEE algorithms}:

\begin{corollary}[GSEE algorithms, informal]
Let $H$ be an $n$-qubit Hamiltonian with unknown spectral decomposition $H=\sum_{j=0}^{N-1} E_j \ket{E_j} \bra{E_j}$, where $N=2^n$, and the eigenvalues satisfy $0 \le E_0<E_1 \le E_2 \le ...\le E_{N-1} \le \pi$, with a spectral gap $\Delta_{\rm true}=E_1-E_0$. Moreover, suppose we can prepare an $n$-qubit state $\ket{\psi}$ such that $|\langle\psi|E_0\rangle|^2 \ge \eta$ for a known $\eta \in (0, 1)$. 

Then, assuming the ability to implement controlled time evolution of $H$, for every $\beta \in [0, 1]$, there exists an algorithm that, for any given $\epsilon>0$ and $\delta\in (0,1)$, estimates $E_0$ within additive error $\epsilon$ with probability at least $1-\delta$ by making use of 
$$\tilde{\mathcal{O}}\lrb{\eta^{-1} \lrb{\epsilon^{-2+2\beta} \Delta_{\rm true}^{2 - 2\beta} + \mylog{ \epsilon^{-\beta}\Delta_{\rm true}^{\beta-1}}} \mylog{\delta^{-1}}}$$ quantum circuits, where each circuit has depth proportional to $$\tilde{\mathcal{O}}( \epsilon^{-\beta} \Delta_{\rm true}^{\beta-1} 
\operatorname{log}(\eta^{-1}\Delta_{\rm true}^{1-\beta} \epsilon^{\beta-1})).$$    
\label{cor:interpolation_informal}
\end{corollary}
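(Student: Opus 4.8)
The plan is to obtain \Cref{cor:interpolation_informal} as a direct instantiation of \Cref{thm:main_informal}, exploiting the fact that the lower bound $\Delta$ appearing there is a free parameter that may be set to any value not exceeding the true gap $\Delta_{\rm true}$. For a fixed $\beta \in [0,1]$ I would choose the geometric interpolant
$$\Delta = \epsilon^{\beta}\, \Delta_{\rm true}^{\,1-\beta},$$
which sweeps continuously from $\Delta = \Delta_{\rm true}$ at $\beta = 0$ to $\Delta = \epsilon$ at $\beta = 1$, matching the two endpoints discussed after \Cref{thm:main_informal}.

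The first step is to confirm admissibility, namely that this $\Delta$ is a legitimate lower bound on the spectral gap so that the hypotheses of \Cref{thm:main_informal} are met. Working in the regime of interest $\epsilon \le \Delta_{\rm true}$ (if $\epsilon > \Delta_{\rm true}$ the gap is already resolved at the target accuracy and one simply takes $\beta = 0$), we have $\epsilon^{\beta} \le \Delta_{\rm true}^{\beta}$ for every $\beta \in [0,1]$, whence $\Delta = \epsilon^{\beta}\Delta_{\rm true}^{1-\beta} \le \Delta_{\rm true}^{\beta}\Delta_{\rm true}^{1-\beta} = \Delta_{\rm true} = E_1 - E_0$. Thus \Cref{thm:main_informal} applies with this choice of $\Delta$, producing an $\epsilon$-accurate estimate of $E_0$ with probability at least $1-\delta$.

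The second step is purely a matter of substituting $\Delta = \epsilon^{\beta}\Delta_{\rm true}^{1-\beta}$ into the two complexity expressions of \Cref{thm:main_informal} and simplifying. For the circuit count, $\Delta^{2}\epsilon^{-2} = \epsilon^{-2+2\beta}\Delta_{\rm true}^{2-2\beta}$ and $\log{\Delta^{-1}} = \log{\epsilon^{-\beta}\Delta_{\rm true}^{\beta-1}}$, which reproduces the bound $\tilde{\mathcal{O}}(\eta^{-1}(\epsilon^{-2+2\beta}\Delta_{\rm true}^{2-2\beta} + \log{\epsilon^{-\beta}\Delta_{\rm true}^{\beta-1}})\log{\delta^{-1}})$. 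For the depth, $\Delta^{-1} = \epsilon^{-\beta}\Delta_{\rm true}^{\beta-1}$ and $\Delta\epsilon^{-1} = \Delta_{\rm true}^{1-\beta}\epsilon^{\beta-1}$, yielding $\tilde{\mathcal{O}}(\epsilon^{-\beta}\Delta_{\rm true}^{\beta-1}\log{\eta^{-1}\Delta_{\rm true}^{1-\beta}\epsilon^{\beta-1}})$. These coincide verbatim with the claimed bounds.

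Because the entire analytic content is packaged inside \Cref{thm:main_informal}, there is no substantive obstacle; the corollary is a reparametrization. The only points deserving care are the admissibility check above and the behavior at the endpoints of $\beta$: at $\beta = 1$ one has $\Delta = \epsilon$, the term $\Delta^{2}\epsilon^{-2}$ collapses to a constant and is overtaken by the additive $\log{\Delta^{-1}} = \log{\epsilon^{-1}}$, while the depth grows to $\epsilon^{-1}$, recovering the high-depth, Heisenberg-limited-runtime regime; at $\beta = 0$ one has $\Delta = \Delta_{\rm true}$, the $\Delta_{\rm true}^{2}\epsilon^{-2}$ term dominates and the depth is the minimal $\Delta_{\rm true}^{-1}$, recovering the short-depth regime. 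I would confirm that the stated $\tilde{\mathcal{O}}(\cdot)$ uniformly absorbs these transitions together with any polylogarithmic factors. A final presentational remark: setting $\Delta$ to the interpolant presumes access to $\Delta_{\rm true}$; if only a lower bound on the gap is available, the identical argument applies with $\Delta_{\rm true}$ replaced by that bound throughout the complexity estimates.
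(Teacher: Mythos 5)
Your proposal is correct and matches the paper's own proof: the formal counterparts (Corollary \ref{cor:interpolation_be} and Corollary \ref{cor:interpolation_he}) are each proved in one line by setting $\Delta = \Theta(\epsilon^{\beta}\Delta_{\rm true}^{1-\beta})$ in Theorem \ref{thm:main_in_be} and Theorem \ref{thm:main_in_he}, exactly the substitution you perform. Your additional checks of admissibility ($\Delta \le \Delta_{\rm true}$ when $\epsilon \le \Delta_{\rm true}$) and of the endpoint behavior at $\beta \in \{0,1\}$ are sound refinements that the paper leaves implicit.
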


Note that this result interpolates between the standard limit (\( \beta = 0 \)) and the Heisenberg limit (\( \beta = 1 \)) by exchanging the circuit depth for the number of samples. Importantly, instead of scaling as \( \eta^{-2} \) as in previous methods \cite{wang2023quantum, ding2022even}, the number of repetitions here is quadratically improved to \( \eta^{-1} \) scaling. This improvement is achieved without increasing the number of quantum operations per circuit, as is typically required by standard approaches to quadratic speedups, such as amplitude amplification.
\hl{Also, while Corollary \ref{cor:interpolation_informal} pertains to the existence of algorithms, one may ask to what extent $\Delta$ can be tuned between $\epsilon$ and $\Delta_{\rm true}$ \emph{in practice}. In order to tune the algorithm parameter $\Delta$ above $\epsilon$ while keeping it below $\Delta_{\rm true}$, one needs a sufficiently accurate estimate of $\Delta_{\rm true}$. We refer the reader to Footnote \ref{ft:gap_comment} and the reference therein for the practical considerations of setting $\Delta$.}

Figure \ref{fig:intro} compares the performance of our GSEE algorithm with others in the literature. Notably, our algorithm demonstrates the best performance in terms of \( \eta \) within the regime of circuit depths between \( \tilde{\mathcal{O}}(\Delta^{-1}_{\rm true}) \) and \( \tilde{\mathcal{O}}(\epsilon^{-1}) \).

\definecolor{mygray}{gray}{0.6}

\begin{figure}
    \centering

\begin{tikzpicture}

\draw[thick=0.1, ->] (0, 0) -- (7, 0) node[below=1mm] {};
\node[] at (3, -1) {Circuit depth (${\cal T}_{\max}$)};
\draw[thick=0.1,->] (0, 0) -- (0, 5) node[left] {};
\node[rotate=90] at (-2.5, 2.5) {Total runtime};
\node[rotate=90] at (-2.0, 2) {(${\cal T}_{\rm tot}$)};
\draw (1, 0.1) -- (1, 0);
\draw (0.1, 1.7) -- (0, 1.7) node[left]{\small $\eta^{-2}\epsilon^{-1}$};
\draw (0.1, 1) -- (0, 1) node[left]{\small $\eta^{-1}\epsilon^{-1}$};
\draw (0.1, 0.5) -- (0, 0.5) node[left]{\small $\eta^{-1/2}\epsilon^{-1}$};
\node[] at (1, -0.3) {\small $\Delta_{\textup{true}}^{-1}$};
\draw (4, 0) -- (4, 0.1);
\draw (0, 3) -- (0.1, 3) node[left] {\small $\eta^{-1}\epsilon^{-2}\Delta_{\textup{true}}$};
\draw (0, 4.25) -- (0.1, 4.25) node[left] {\small $\eta^{-2}\epsilon^{-2}\Delta_{\textup{true}}$};
\node[] at (4, -0.3) {\small $\epsilon^{-1}$};
\draw (5.5, 0) -- (5.5, 0.1);
\node[] at (5.5, -0.3) {\small $\eta^{-1/2}\epsilon^{-1}$};
\draw[loosely dashed, very thick, domain=1:4] plot (\x,{8.5/(\x+1)});
\draw[violet, densely dashed, very thick, domain=1:4] plot (\x,{4.5/(\x+0.5)});

\node at (1,4.25) [circle,fill=orange,inner sep=1.5pt]{};
\node at (1,3) [circle,fill=orange,inner sep=1.5pt]{};

\node at (4,1.7) [circle,fill=orange,inner sep=1.5pt, label={[shift={(0.8,-0.4)}] }]{};
\node at (4,1) [circle,fill=orange,inner sep=1.5pt, label={[shift={(0.6,-0.4)}]}]{};
\node at (5.5,0.5) [circle,fill=orange,inner sep=1.5pt, label={[shift={(0.6,-0.4)}]}]{};

\node[] at (2.75, 2.75) {\small \cite{wang2023quantum, ding2022even}};
\node[] at (2.1, 1.25) {\small Cor. \ref{cor:interpolation_informal}};
\node at (4,1.7) [circle,fill=mygray,inner sep=1.5pt, label={[shift={(0.8,-0.4)}]\small \cite{lin2022heisenberg,wan2022randomized}}]{};
\node at (4,1) [circle,fill=mygray,inner sep=1.5pt, label={[shift={(0.6,-0.4)}]\small \cite{dong2022ground}}]{};
\node at (5.5,0.5) [circle,fill=mygray,inner sep=1.5pt, label={[shift={(0.6,-0.4)}]\small \cite{dong2022ground}}]{};

\end{tikzpicture}
    \caption{This figure shows the landscape of early fault-tolerant GSEE algorithms plotted according to their runtimes and circuit depths. The loosely dashed green curve represents our previous interpolation result \cite{wang2023quantum} (which is also implied by Ref.~\cite{ding2022even}), while the lower densely dashed violet curve depicts the result in Corollary \ref{cor:interpolation_informal}. For the purposes of this plot, we have assumed that $\epsilon<\eta\Delta_{\rm true}$, leading us to place $\eta^{-1}\epsilon^{-2}\Delta_{\textup{true}}$ above $\eta^{-2}\epsilon^{-1}$, though this assumption is not necessary for the main results of our paper.}
    \label{fig:intro}
\end{figure}

Next, we propose a novel method for performing rejection sampling on a quantum computer:

\begin{theorem}[Rejection sampling from the spectral measure]
Let \( H \) be an \( n \)-qubit Hamiltonian with spectral decomposition \( H = \sum_i E_i \ketbra{E_i}{E_i} \), and let \( \ket{\psi} \) be an \( n \)-qubit state. Define the probability measure \( p \) on \( \mathbb{R} \) as 
\begin{align}
    p(x) = \sum_i p_i \delta(x - E_i), \quad p_i = |\langle E_i | \psi \rangle|^2,
\end{align}
where \( \delta \) denotes the Dirac delta function. Furthermore, let \( \nu \) and \( \mu \) be two probability density functions on \( \mathbb{R} \) such that \( \textrm{supp}(p \ast \nu) \subset \textrm{supp}(\mu) \), where \( \ast \) denotes convolution. Finally, assume that there exists a constant \( M \in \mathbb{C} \) such that for every \( x \in \textrm{supp}(\mu) \), we have access to a \( (1, m, 0) \)-block-encoding \( U_x \) of an operator \( h(H - xI) \) which satisfies
\begin{align}
    |h(H - xI)|^2 = \frac{\nu(H - xI)}{|M|^2 \mu(x)}, \quad \|h(H - xI)\| \leq 1,
\end{align}
and that we can generate samples distributed according to \( \mu \). Then we can generate a sample distributed according to \( p \ast \nu \) from an expected \( |M|^2 \) uses of such \( U_x \)'s.
\label{thm:rejection_sampling}
\end{theorem}

So far, we have only considered algorithms for GSEE with circuit depths proportional to \( \Delta^{-1} \), where \( \Delta > \epsilon \) is a known lower bound on the spectral gap \( \Delta_{\rm true} = E_1 - E_0 \) of \( H \). However, it is desirable to develop algorithms that do not rely on this prior knowledge to accurately estimate the ground-state energy. To address this, we develop a method for certifying the ground-state energy estimate, albeit with a higher sample complexity:

\begin{theorem}[ground-state energy certification, informal] Under the same conditions as Corollary \ref{cor:interpolation_informal}, there exists an algorithm that is given, as an input, arbitrary $\sigma>0$ and $\hat{E}_0 \in [E_0-\sigma, E_0+\sigma]$ (but no lower bound on $\Delta_{\rm true}$), makes use of 
\begin{align}
\tilde{\mathcal{O}}(\eta^{-3}\epsilon^{-4}\mylog{\delta^{-1}})
\end{align}
quantum circuits of depth $\tilde{\mathcal O}(\sigma^{-1})$ and produces a new estimate $\hat{E}_0'$ of $E_0$ that it either accepts or rejects. If $|\hat{E}_0'-E_0|>\epsilon$, it rejects $\hat{E}_0'$ with probability at least $1-\delta$. If it accepts, then $|\hat{E}_0'-E_0|\leq \epsilon$ with probability of incorrect acceptance at most $\delta$.
Furthermore, the maximal $\sigma$ for which the algorithm outputs an estimate $\hat{E}_0'$ it accepts satisfies $|\hat{E}_0'-E_0|\le \epsilon$ is $\sigma=\Omega\left( {\Delta_{\rm{true}}}/{\mylog{\epsilon^{-1}\eta^{-1}}}\right)$.

\end{theorem}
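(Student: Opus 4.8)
The plan is to reduce certification to a statistical test on samples drawn, via the rejection-sampling primitive of Theorem \ref{thm:rejection_sampling}, from a blurred copy of the spectral measure. Concretely, I would instantiate that theorem with a symmetric, band-limited convolution kernel $\nu$ of width $\Theta(\sigma)$---the same QSVT-based construction developed for the block-encoding and Hamiltonian-evolution models in Sections \ref{sec:be_model} and \ref{sec:he_model}---so that each draw from $p\ast\nu$ costs circuits of depth $\tilde{\mathcal O}(\sigma^{-1})$. Since the Fourier support of such a kernel is only $\tilde{\mathcal O}(\sigma^{-1})$, the achievable resolution of individual eigenvalues is $\Theta(\sigma)$, which may be far larger than $\epsilon$; the whole difficulty is to nonetheless certify an $\epsilon$-accurate energy. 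The key structural facts are that $E_0$ is the \emph{smallest} eigenvalue with weight $p_0\ge\eta$, and that $\hat E_0$ already localizes it to radius $\sigma$. I therefore fix a threshold $\tau=\hat E_0+\Theta(\sigma\,\log{\eta^{-1}\epsilon^{-1}})$, discard every sample exceeding $\tau$, and let $\hat E_0'$ be the empirical mean of the retained (accepted) samples.

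Because $\nu$ is symmetric, convolution leaves the mean of the point mass at $E_0$ unchanged, so if the retained samples came only from the $E_0$-peak then $\E{X\mid X\le\tau}=E_0$ up to an exponentially small truncation bias. The enemy is \emph{contamination}: if another eigenvalue sits within $\mathcal O(\sigma)$ of $E_0$, its blurred peak overlaps the window, shifts the conditional mean, and cannot be resolved away at depth $\tilde{\mathcal O}(\sigma^{-1})$. Without a lower bound on $\Delta_{\rm true}$ we cannot assume this does not occur, so the test must \emph{detect} it from the data alone.

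The certificate I would use is the conditional second moment, via a bias--variance inequality. Writing the accepted distribution as a weighted mixture $\sum_i w_i\rho_i$ of (truncated) kernels centred at the $E_i$, the ground-state weight satisfies $w_0=p_0/q\ge\eta$ (since the accepted mass $q\le 1$); the between-group variance is at least $w_0(\text{bias})^2\ge\eta(\text{bias})^2$, while each within-group variance equals the known kernel value $\Theta(\sigma^2)$ up to truncation corrections that matter only for contaminants sitting near $\tau$, hence far from $E_0$. Subtracting this baseline yields an \emph{excess} conditional variance $\xi\defeq\operatorname{Var}(X\mid X\le\tau)-\Theta(\sigma^2)$ obeying $(\text{bias})^2\le\eta^{-1}\xi$, the dangerous case being a contaminant within $\mathcal O(\sigma)$ of $E_0$, for which both kernels are essentially untruncated and $\xi$ equals the between-group variance. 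The test then accepts iff the empirical excess variance is below $\Theta(\eta\epsilon^2)$, which forces $(\text{bias})^2\le\epsilon^2$ and, with an $\mathcal O(\epsilon)$ statistical error on the mean, gives $\abs{\hat E_0'-E_0}\le\epsilon$; conversely a true bias above $3\epsilon$ forces $\xi\ge 9\eta\epsilon^2$ and the test rejects. This is exactly where the complexity is set: certifying $\xi$ to additive precision $\Theta(\eta\epsilon^2)$ requires $\tilde{\mathcal O}(\eta^{-2}\epsilon^{-4})$ \emph{accepted} samples (the fourth moment is $\mathcal O(1)$ since the spectrum lies in $[-1,1]$), and since only a fraction $q\ge\eta$ of raw samples is accepted, the total is $\tilde{\mathcal O}(\eta^{-3}\epsilon^{-4})$ circuits of depth $\tilde{\mathcal O}(\sigma^{-1})$. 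I expect the main obstacle to be precisely this step---converting ignorance of the gap into a certificate of bias---and in particular the bookkeeping that a \emph{nearby} $\mathcal O(\sigma)$ contaminant, which a cruder tail-mass test would miss, is caught only by the second moment, together with the careful treatment of kernel truncation near $\tau$ when subtracting the baseline variance.

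The two probabilistic guarantees then follow from standard concentration (Hoeffding/Chebyshev for the empirical mean and variance) and a union bound, with the $\epsilon$-versus-$4\epsilon$ dead zone absorbing the test margin and the $\log{\delta^{-1}}$ factor coming from boosting. Finally, to establish the maximal tolerable $\sigma$, I would show the test \emph{accepts} whenever $\sigma=\mathcal O(\Delta_{\rm true}/\log{\epsilon^{-1}\eta^{-1}})$: in that regime the nearest eigenvalue $E_1$ lies at distance $\ge\Delta_{\rm true}\ge\Omega(\sigma\log{\epsilon^{-1}\eta^{-1}})$, so the exponential tail $e^{-\Omega(\Delta_{\rm true}/\sigma)}$ of the band-limited kernel makes the contaminating mass below $\tau$ at most $\operatorname{poly}(\eta\epsilon)$; both the bias and the excess variance are then negligible and the acceptance test passes. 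It is this exponential---rather than Gaussian---tail of the implementable kernel that produces the stated $\log$ (rather than $\sqrt{\log}$) dependence.
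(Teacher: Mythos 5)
Your proposal follows the same route as the paper's Appendix~\ref{app:certification}: rejection-sample from $p\ast n_\sigma$ at depth $\tilde{\mathcal{O}}(\sigma^{-1})$, condition on a window of half-width $\tilde{\mathcal{O}}(\sigma)$ around $\hat{E}_0$, certify via the law of total variance that a small excess of the conditional variance over the kernel baseline $\sigma^2$ forces the bias of the conditional mean below $\epsilon$ (this is exactly Lemma~\ref{lem:variance_exceeds}), output the conditional mean, and pay $\tilde{\mathcal{O}}(\eta^{-2}\epsilon^{-4})$ conditioned samples times $\tilde{\mathcal{O}}(\eta^{-1})$ rejection overhead, with acceptance guaranteed once $\sigma=\mathcal{O}\lrb{\Delta_{\rm true}/\log{\epsilon^{-1}\eta^{-1}}}$. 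However, there is a genuine gap: your test consists of the \emph{single} criterion ``empirical excess conditional variance $\le\Theta(\eta\epsilon^2)$,'' and that test by itself is unsound. The step that fails is the one you relegated to bookkeeping: the inequality $(\mathrm{bias})^2\le\eta^{-1}\xi$ requires every mixture component in the window to contribute within-group variance at least the subtracted baseline $\sigma^2$, and truncation at the window edge violates exactly this.

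Concretely, write $\ell \defeq \log{\eta^{-1}\epsilon^{-1}}$, so the truncation point sits at distance $D=\Theta(\sigma\ell)$ from $E_0$. Take $p_0=\eta$ at $E_0$ and a contaminant of weight $p_1=\Theta(\eta\ell^2)$ at $E_1=\tau+\sigma$; since no lower bound on $\Delta_{\rm true}$ is assumed, this instance is admissible whenever $\eta\ell^2=\mathcal{O}(1)$, i.e.\ for small $\eta$. The contaminant leaks mass $m_1=\Theta(\eta\ell^2)$ below $\tau$ as a near-point-mass hugging the edge, with within-window mean at distance $D'=\Theta(\sigma\ell)$ from $E_0$ and within-window variance $v_1\le 0.2\sigma^2$. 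The normalized ground weight is then $w_0=\eta/(\eta+m_1)=\Theta(1/\ell^2)$, and the excess variance is $\xi=w_1\lrsb{w_0 D'^2-(\sigma^2-v_1)}$, in which $w_0D'^2=\Theta(\sigma^2)$ can be tuned (by adjusting $p_1$) to satisfy $w_0D'^2\le\sigma^2-v_1$, or even $\xi\approx 0$, which defeats a two-sided version of the test as well. The test accepts, yet the conditional mean is $E_0+\Theta(\sigma\ell)\gg 4\epsilon$, so both the soundness claim and the returned estimate $\hat{E}_0'$ are wrong. This is precisely why the paper's Algorithm~\ref{alg:certification} has a \emph{first stage} that your proposal lacks: estimate the mass of $p\ast n_\sigma$ in the annulus $[\hat{E}_0+L/2,\,\hat{E}_0+2L]$ and reject unless it is $\mathcal{O}(\epsilon^2\eta^2)$ (the ``peakedness'' check). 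Only under that condition do Lemma~\ref{lem:contribution_small} and Corollary~\ref{cor:moments_similar} show that the conditional law is, up to negligible corrections, a mixture of \emph{untruncated} Gaussians with means well inside the window — which is what makes the variance certificate sound; in the instance above the annulus carries mass $\Theta(\eta\ell^2)\gg\epsilon^2\eta^2$ and the paper's algorithm rejects at stage one. So the tail-mass test you dismissed as ``cruder'' is not an alternative to the second-moment test; it is a prerequisite for it.
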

Thus, with a higher sample complexity, we can also produce a certificate that the estimate is correct, even without knowing the spectral gap \( \Delta_{\rm true} \) of \( H \). Furthermore, the algorithm described above will consistently accept with high confidence if we use depths proportional to the inverse of the true spectral gap, the minimum depth required in Corollary \ref{cor:interpolation_informal}. Note, however, that in certain situations, the algorithm may still accept—and therefore provide a correct estimate—even if \( \sigma \gg \Delta_{\rm true} \). As discussed in more detail in Section \ref{sec:certifcation_sec}, this can occur, for example, when the initial state has minimal support on low-energy excitations.

We note that our certification algorithm \emph{does not} certify the spectral gap itself; rather, it certifies only our ground-state energy estimate. To the best of our knowledge, this is the first method to achieve certifiable GSEE in the regime where the circuit depth is \( \tilde{\mathcal{O}}(\Delta_{\rm true}^{-1}) \). Additionally, the assumption that we are given an estimate satisfying \( |E_0 - \hat{E}_0| \leq \sigma \) does not increase the overall circuit depth required for the certification method, as we will show how to obtain such an estimate with circuits of depth \( \tilde{\mathcal{O}}(\sigma^{-1}) \) in Section \ref{sec:gsee_alg}.

Finally, let us briefly comment on the role of \( \delta \). Our algorithm relies on empirically estimating the variance of a random variable with a certain precision. As with similar protocols, the empirical estimate will fall within a tolerated error margin with a specified failure probability. Importantly, our protocol always correctly rejects when given variance estimates that fall within the tolerated error margin. Thus, $\delta$ quantifies the probability that the estimates deviate beyond the acceptable range for verification, in which case the performance guarantee no longer holds.

\section{Low-depth ground-state energy estimation}
\label{sec:gsee_alg}
In this section, we present our low-depth quantum algorithm for ground-state energy estimation. Section~\ref{subsec:basic_idea} introduces the basic strategy and establishes its correctness. Then, Section~\ref{subsec:concrete_implementation} provides concrete implementations of the algorithm, along with an analysis of its circuit depth and runtime.

\subsection{Basic strategy}
\label{subsec:basic_idea}

Let us first introduce a basic GSEE algorithm, referred to as Algorithm \ref{alg:basic_gsee}, which generalizes the approach from Ref.~\cite{dong2022ground}. This algorithm aims to generate a coarse estimate of \( E_0 \).

To build intuition, we begin by describing a simplified and idealized version of the algorithm. Suppose that, for some \( \lambda \in (0, \pi) \), we can implement a \( (\alpha, m, 0) \)-block-encoding \( V_\lambda \) of the operator \( f_\lambda(H) \), where \( f_\lambda : [0, \pi] \to [0, 1] \) is defined such that \( f_\lambda(x) = 1 \) for \( x \in [0, \lambda] \) and \( f_\lambda(x) = 0 \) for \( x \in (\lambda, \pi] \). Let \( \ket{\psi} \) be an initial state satisfying \( |\langle \psi | E_0 \rangle|^2 \geq \eta \) for some known \( \eta \in (0, 1) \). 

If we choose \( \lambda > E_0 \), then \( \|f_\lambda(H) \ket{\psi}\|^2 \geq \eta \), as the operator \( f_\lambda(H) \) acts as the identity on the ground state space of \( H \). Conversely, if \( \lambda < E_0 \), we have \( \|f_\lambda(H) \ket{\psi}\|^2 = 0 \), since \( f_\lambda(H) \) maps all eigenspaces of \( H \) to zero.

By the definition of block-encoding, measuring the \( m \) ancilla qubits of the state \( V_\lambda \ket{0^m}\ket{\psi} \) gives outcome \( 0^m \) with probability \( \alpha^{-2} \|f_\lambda(H) \ket{\psi}\|^2 \). Thus, if such a block-encoding were available, we could infer the ground-state energy by performing a binary search to find the smallest \( \lambda \) for which \( \|f_\lambda(H) \ket{\psi}\|^2 \geq \eta \). For a given \( \lambda \), we can determine whether \( \|f_\lambda(H) \ket{\psi}\|^2 \geq \eta \) or \( \|f_\lambda(H) \ket{\psi}\|^2 = 0 \) with a success probability of \( 2/3 \) by repeating the experiment \( \mathcal{O}(\eta^{-1}) \) times and checking how often the measurement yields \( 0^m \). This approach would result in an algorithm with a runtime of \( \mytO{\eta^{-1} \mylog{\epsilon^{-1}}} \) to estimate the ground-state energy with accuracy \( \epsilon \).

However, implementing block-encodings of non-smooth functions such as \( f_\lambda \) is not feasible directly, necessitating approximations using smooth functions. This motivates the introduction of the function \( f_{a,b} \) below. Approximating \( f_\lambda \) introduces some subtleties in how we post-process the samples and conduct the binary search, but the overall strategy remains comparable to the one described above.

\begin{algorithm}[ht]
\caption{Basic GSEE algorithm}\label{alg:basic_gsee}
\begin{algorithmic}[1]
\Procedure{Basic\_GSEE}{$\epsilon$, $\delta$, $\eta$, $\alpha$, $m$} 
    \State Set $L = \lceil \logab{3/2}{\pi/\epsilon}\rceil$, $M =\lceil 11.25 \alpha^2 \eta^{-1} \myln{L \delta^{-1}} \rceil$, $\epsilon' = \min(\sqrt{0.1 \eta}, 0.05)$;
    \State $l \gets 0$, $r \gets \pi$;
    \While{$r - l > \epsilon $}
        \State $a \gets (2l+r)/3$, $b \gets (l+2r)/3$; 
        \State Let $V_{a,b}$ be a $(\alpha, m, 0)$-block-encoding of $f_{a,b}(H)$,  where $f_{a,b}: [0, \pi] \to [-1, 1]$ satisfies 
        \Statex \hspace{30pt}  $f_{a,b}(x) \ge 1-\epsilon'$ for all $x \in [l, a]$, and  $|f_{a,b}(x)| \le \epsilon'$ for all $x \in [b, \pi]$;   \label{step:Vab_block_encode}
        \State Prepare \( M \) copies of the state \( \ket{0^m}\ket{\psi} \). For each copy, apply the operation \( V_{a,b} \), then  
        \Statex \hspace{30pt}  measure the first \( m \) qubits in the standard basis. Count the number \( K \) of times the  
        \Statex \hspace{30pt} measurement outcome is \( 0^m \);
        \If{$K < 0.5 \alpha^{-2} \eta M$} 
        \State $l \gets a$; 
        \Else
        \State $r \gets b$; 
        \EndIf 
    \EndWhile
    \State \Return $[l, r]$.
\EndProcedure
\end{algorithmic}
\end{algorithm}

\begin{proposition}
Let $H$ be an $n$-qubit Hamiltonian with unknown spectral decomposition $H=\sum_{j=0}^{N-1} E_j \ket{E_j} \bra{E_j}$, where $N=2^n$, and the eigenvalues satisfy $0 \le E_0<E_1 \le E_2 \le ...\le E_{N-1} \le \pi$. Suppose we can prepare an $n$-qubit state $\ket{\psi}=\sum_{j=0}^{N-1} \gamma_j \ket{E_j}$, with an overlap with the ground state $\ket{E_0}$ satisfying $|\gamma_0|^2 \ge \eta$, where $\eta \in (0, 1)$ is known. 

Then, for any $\epsilon>0$ and $\delta \in (0, 1)$, the output of Algorithm \ref{alg:basic_gsee} is an interval of length at most $\epsilon$ that contains the ground-state energy $E_0$ with probability at least $1-\delta$. 
\label{thm:basic_gsee} 
\end{proposition}
\begin{proof}
We claim that in each iteration of the {\bf while}-loop, if $E_0 \le a=(2l+r)/3$, then $K > 0.5 \alpha^{-1} \eta M$ with probability at least $1-\delta/L$; if $E_0 \ge b=(l+2r)/3$, then $K < 0.5 \alpha^{-1} \eta M$ with probability at least $1-\delta/L$. If this is true, then after each iteration, $E_0$ remains between $l$ and $r$ with probability at least $1-\delta/L$. Since there are at most $L$ iterations in the {\bf while}-loop, $E_0$ is between the final $l$ and $r$ with probability at least $1-\delta$.

To prove the above claims, we need to bound the probability of obtaining measurement outcome $0^m$ in each iteration of the {\bf for}-loop in the cases $E_0 \le a$ and $E_0 \ge b$ respectively. If $E_0 \le a$, then this probability is 
\begin{align}
\alpha^{-2}\sum_{j=0}^{N-1} p_j f^2_{a,b}(E_j)
\ge 
\alpha^{-2} p_0 f^2_{a,b}(E_0)
\ge \alpha^{-2} \eta (1-\epsilon')^2
\ge 0.9 \alpha^{-2} \eta.
\end{align}
On the other hand, if $E_0 \ge b$, then this probability is 
\begin{align}
\alpha^{-2}\sum_{j=0}^{N-1} p_j f^2_{a,b}(E_j)
\le 
\alpha^{-2} \max_{0\le j\le N-1} f^2_{a,b}(E_j)
\le \alpha^{-2} (\epsilon')^2
\le 0.1 \alpha^{-2} \eta.
\end{align}
Next, we invoke the Chernoff-Hoeffding Theorem which states that if $X_1, X_2, \dots, X_n$ are independent and identically distributed random variables which take values in $\zo$. Let $p = \E{X_1}$. Then for any $q \in (p, 1]$ and $r \in [0, p)$, we have
\begin{align}
    \P{\frac{1}{n} \sum_{i=1}^n X_i \ge q} \le e^{-n D(q \| p)}, \\
    \P{\frac{1}{n} \sum_{i=1}^n X_i \le r} \le e^{-n D(r \| p)},    
\end{align}
where $D(x\|y) \defeq x \myln{x/y} + (1-x) \myln{(1-x)/(1-y)}$ is the Kullback–Leibler divergence between Bernoulli random variables with parameters $x$ and $y$ respectively. We also use the fact that
\begin{align}
    D(x\|y) \ge \frac{(x-y)^2}{2 \max(x, y)}, ~&~\forall x,y \in (0, 1).
\end{align}

If $E_0 \le a$, then we have $p \ge 0.9 \alpha^{-2} \eta$, and hence
\begin{align}
    \P{K/M \le 0.5 \alpha^{-2}\eta} \le e^{-M D(0.5 \alpha^{-2}\eta \| p)}
\end{align}
in which
\begin{align}
    D(0.5\alpha^{-2} \eta \| p ) \ge \frac{(p-0.5\alpha^{-2}\eta)^2}{2 p} \ge  \frac{4\eta}{45\alpha^{2}},
\end{align}
as $\frac{(p-0.5\alpha^{-2}\eta)^2}{2 p}$ increases monotonically for $p \ge 0.5\alpha^{-2}\eta$. Then by our choice of $M$, we know that $\P{K \le 0.5\alpha^{-2}\eta M} \le \delta/L$.

On the other hand, if $E_0 \ge b$, we have $p \le 0.1 \alpha^{-2} \eta$, and hence
\begin{align}
    \P{K/M \ge 0.5\alpha^{-2}\eta} \le e^{-M D(0.5\alpha^{-2}\eta \| p)}
\end{align}
in which
\begin{align}
    D(0.5\alpha^{-2}\eta \| p ) \ge \frac{(0.5\alpha^{-2}\eta-p)^2}{2 \cdot 0.5\alpha^{-2}\eta} \ge \frac{4\eta}{25\alpha^2}.
\end{align}
Then by our choice of $M$, we obtain that $\P{K \ge 0.5\alpha^{-2}\eta M} \le \delta/L$.

The theorem is thus proved.    
\end{proof}

Next, we describe the idea behind our main GSEE algorithm. We begin by using Algorithm \ref{alg:basic_gsee} to find an interval \( [l, r] \) of length \( \mathcal{O}(\Delta) \) that contains \( E_0 \) with high probability. This step requires quantum circuits of depth \( \tilde{\mathcal{O}}(1/\Delta) \). Given that the Hamiltonian \( H \) has a spectral gap of at least \( \Delta \), we know that all excited state energies are at least \( r + \myOmega{\Delta} \). 

Equipped with this knowledge, we can use techniques such as quantum signal processing (QSP) \cite{qsp_paper} and its variants \cite{dong2022ground} to filter out the unwanted spectral components in the input state \( \ket{\psi} \), obtaining a good approximation of the ground state \( \ket{E_0} \). Furthermore, the efficiency of this procedure depends on the overlap between $\ket{\psi}$ and \( \ket{E_0} \). Once we obtain a high-fidelity approximation of \( \ket{E_0} \), running quantum phase estimation (QPE) to determine the ground-state energy \( E_0 \) might seem like the next step. However, we find that this is unnecessary. Instead, performing Hadamard tests on the state allows us to infer \( E_0 \) efficiently with only a small number of samples.

Our main GSEE algorithm is formally presented in Algorithm \ref{alg:gsee}. It utilizes Algorithm \ref{alg:basic_gsee} to obtain a coarse estimate of $E_0$ and Algorithm \ref{alg:gsee_ref} to refine it. 

\begin{algorithm}[ht]
\caption{Advanced GSEE algorithm}\label{alg:gsee}
\begin{algorithmic}[1]
\Procedure{ADV\_GSEE}{$\epsilon, \delta, \Delta, \eta, \alpha_1, m_1, \alpha_2, m_2$} 
    \If{$\epsilon \ge \Delta/4$} 
        \State $[l, r] \gets \mathrm{BASIC\_GSEE}(2\epsilon, \delta, \eta, \alpha_1, m_1)$; \label{step:large_error}
        \State \Return $(l+r)/2$.
    \EndIf
    \State $[l, r] \gets \mathrm{BASIC\_GSEE}(\Delta/2, \delta/3, \eta,  \alpha_1, m_1)$;  \label{step:small_error}
    \State \Return $\mathrm{REFINE\_GSE\_ESTIMATE}(l, r, \epsilon, \delta, \Delta, \eta, \alpha_2, m_2)$. \label{step:refine_estimate}
\EndProcedure
\end{algorithmic}
\end{algorithm}

\begin{algorithm}[ht]
\caption{Refinement of ground-state energy estimate}\label{alg:gsee_ref}
\begin{algorithmic}[1]
\Procedure{REFINE\_GSE\_ESTIMATE}{$l, r, \epsilon, \delta, \Delta, \eta, \alpha_2, m_2$} 
    \State Choose constants $\zeta, \beta, \omega, c \in (0, 1)$;
    \State Set $t = \frac{4\pi \omega}{\Delta+4\epsilon}$, $\gamma = \frac{\beta \sin(t \epsilon)}{2}$, $\epsilon_1 = \mymin{\frac{\gamma \eta c}{(1 - \gamma)(1 - \eta)}, (1-\sqrt{c})^2}$, $\epsilon_2 = \frac{(1 - \beta) \sin(t \epsilon)}{\sqrt{2}}$;
    \State Set $K = \lceil 2 \myln{12/\delta} / \epsilon_2^2 \rceil$, $M = \lceil \mymax{2\alpha_2^2 K / (c \eta (1-\zeta)), 2\alpha_2^2 \myln{3/\delta} / (c \eta \zeta^2)} \rceil$;
    \If{$\eta \ge 1-\gamma$}
        \State Set $\ket{\phi}=\ket{\psi}$;
        \State Go to Step \ref{step:hadamard_test};
    \EndIf
    \State Let $V$ be a $(\alpha_2, m_2, 0)$-block-encoding of $f(H)$ where $f: [0, \pi] \to [-1, 1]$ satisfies   $f(x) \geq \sqrt{c}$  
    \Statex \hspace{15pt} for all $x \in [l, r]$, and $\lvert f(x) \rvert \leq \sqrt{\epsilon_1}$ for all $x \in [r + \Delta/2, \pi]$; \label{step:V_block_encode}
    \State Repeat the following experiment until either \( 2K \) copies of \( \ket{\phi} \coloneqq \frac{f(H) \ket{\psi}}{\|f(H) \ket{\psi}\|} \) are collected or \Statex \hspace{15pt} the experiment has been repeated \( M \) times: Prepare the state \( \ket{0^{m_2}} \ket{\psi} \), apply the operation \Statex \hspace{15pt}   \( V \) to it, and then measure the first \( m_2 \) qubits in the standard basis. Retain the state if the \Statex \hspace{15pt}  measurement outcome is \( 0^{m_2} \).
     \label{step:collect_samples}    
    \If{fewer than $2K$ copies of $\ket{\phi}$ are collected}
        \State \textbf{abort}.
    \EndIf
    \State Run the circuit in Figure \ref{fig:quantum_circuits} (a) with initial state $\ket{\phi}$, setting $\tau = t$, and $W = I$ or $S^\dagger$. \Statex \hspace{15pt} Perform $K$ iterations for each case, and let $\hat{X}$ or $\hat{Y}$ be the average of the $K$ measurement \Statex\hspace{15pt} outcomes for $W=I$ or $S^\dagger$,
     respectively; \label{step:hadamard_test}
    \State Find the unique $\tilde{E}_0 \in [l-\epsilon, r+\epsilon]$ such that $\tilde{E}_0 t  \equiv \myArg{\hat{X} + i\hat{Y}} (\operatorname{mod}~2\pi)$;
    \State \Return $\operatorname{clip}(\tilde{E}_0, l, r)$.
\EndProcedure
\end{algorithmic}
\end{algorithm}

\begin{proposition}
\label{thm:main_Gaussian}
Let $H$ be an $n$-qubit Hamiltonian with unknown spectral decomposition $H=\sum_{j=0}^{N-1} E_j \ket{E_j} \bra{E_j}$, where $N=2^n$, and the eigenvalues satisfy $0 \le E_0<E_1 \le E_2 \le ...\le E_{N-1} \le \pi$. Suppose we know a lower bound $\Delta>0$ on the spectral gap $E_1-E_0$ of $H$. Additionally, suppose we can prepare an $n$-qubit state $\ket{\psi}=\sum_{j=0}^{N-1} \gamma_j \ket{E_j}$, with an overlap with the ground state $\ket{E_0}$ satisfying $|\gamma_0|^2 \ge \eta$, where $\eta \in (0, 1)$ is known. 

Then, for any $\epsilon>0$ and $\delta \in (0, 1)$, the output of Algorithm \ref{alg:gsee} is an estimate $\hat{E_0}$ of the ground-state energy $E_0$ such that $|E_0-\hat{E}_0|\le \epsilon$ with probability at least $1-\delta$. 
\label{thm:main}
\end{proposition}

\begin{proof}
If $\epsilon \ge \Delta/4$, by Theorem \ref{thm:basic_gsee}, $[l,r]$ at step \ref{step:large_error} of Algorithm \ref{alg:gsee} has length at most $2\epsilon$ and contains $E_0$ with probability at least $1-\delta$. Therefore, $(l+r)/2$ is $\epsilon$-close to $E_0$ with probability at least $1-\delta$, as desired.

We now consider the case where $\epsilon < \Delta/4$. By Theorem \ref{thm:basic_gsee}, $[l,r]$ at step \ref{step:small_error} of Algorithm \ref{alg:gsee} has length at most $\Delta/2$ and contains $E_0$ with probability at least $1-\delta/3$. Assuming this event occurs, we get $r+\Delta/2 \le  E_0 + \Delta \le E_1$. Thus, $E_0$ lies within $[l,r]$, while $E_1, E_2, \dots, E_{N-1}$ lie within $[r+\Delta/2, \pi]$. As a result, by the properties of $f$ at step \ref{step:V_block_encode} of Algorithm \ref{alg:gsee_ref}, we know that $f(E_0) \ge \sqrt{c}$, and $|f(E_j)|\le \sqrt{\epsilon_1}$ for all $j\ge 1$. 

If $\eta \ge 1-\gamma$, then $\ket{\phi}=\ket{\psi}$ and it satisfies 
$|\braket{E_0}{\phi}|^2 \ge \eta \ge 1-\gamma$. Otherwise, $\ket{\phi}=\frac{f(H)\ket{\psi}}{\norm{f(H)\ket{\psi}}}$. Given that $\ket{\psi}=\sum_{j=0}^{N-1} \gamma_j \ket{E_j}$ where $|\gamma_0|^2 \ge \eta$, we get 
\begin{align}
    |\braket{E_0}{\phi}|^2 &= \frac{|\bra{E_0}f(H)\ket{\psi}|^2}{\|f(H)\ket{\psi}\|^2}  \\
    &=\frac{f^2(E_0) |\gamma_0|^2}{
    f^2(E_0) |\gamma_0|^2 + \sum_{j=1}^{N-1}f^2(E_j) |\gamma_j|^2} \\
    &\ge 
    \frac{\eta c}{\eta c + (1-\eta) \epsilon_1} \\
   &\ge 1-\gamma,
\end{align}
where the last step follows from our choice of $\epsilon_1$. Therefore, we  always have $|\braket{E_0}{\phi}|^2 \ge 1-\gamma$ regardless of the value of $\eta$.

Let us write $\ket{\phi}=\sum_{j=0}^{N-1} \xi_j \ket{E_j}$ for some $\xi_j$'s in which $|\xi_0|^2 \ge 1-\gamma$. Then we claim that $\bra{\phi} e^{\i Ht} \ket{\phi}$ is close to $e^{\i E_0t}$. Precisely,
\begin{align}
    \abs{\bra{\phi} e^{\i Ht} \ket{\phi} - e^{\i E_0t}} =\abs{\sum_{j=1}^{N-1} |\xi_j|^2 e^{\i  E_jt} + (|\xi_0|^2-1) e^{\i E_jt}}
    \le \sum_{j=1}^{N-1} |\xi_j|^2  + ||\xi_0|^2-1| 
    \le 2\gamma.
\end{align}

When we run the circuit in Figure \ref{fig:quantum_circuits} (a) with initial state $\phi$, setting $\tau=t$ and $W=I$ or $S^\dagger$, the measurement outcome $d \in \lrcb{1, -1}$ has probability distribution
\begin{align}
    \P{d=\pm 1|W=I} = \frac{1 \pm \realpart{\bra{\phi} e^{\i Ht} \ket{\phi}}}{2}, \\
    \P{d=\pm 1|W=S^\dagger} = \frac{1 \pm \imaginarypart{\bra{\phi} e^{\i Ht} \ket{\phi}}}{2},
\end{align}
respectively. It follows that
\begin{align}
\E{d|W=I} =\realpart{\bra{\phi} e^{\i Ht} \ket{\phi}},\\
\E{d|W=S^\dagger} =\imaginarypart{\bra{\phi} e^{\i Ht} \ket{\phi}}.
\end{align}

Since $\hat{X}$ and $\hat{Y}$ are the averages of $K$ independent and identically distributed $\pm 1$ variables with means $\realpart{\bra{\phi} e^{iHt} \ket{\phi}}$ and $\imaginarypart{\bra{\phi} e^{iHt} \ket{\phi}}$, respectively, by Hoeffding’s inequality and our choice of $K$, we get 
\begin{align}
    \P{\abs{\hat{X}-\realpart{\bra{\phi} e^{\i Ht} \ket{\phi}}}\le \epsilon_2} \ge 1- 2 e^{-K\epsilon_2^2/2}
    \ge 1 - \frac{\delta}{6},     \label{eq:accurate_estimate_x}    \\
    \P{\abs{\hat{Y}-\imaginarypart{\bra{\phi} e^{\i Ht} \ket{\phi}}}\le \epsilon_2}  
    \ge 1 - 2 e^{-K\epsilon_2^2/2}
    \ge 1 - \frac{\delta}{6}.
    \label{eq:accurate_estimate_y}
\end{align}
Assuming these event occurs, we have
\begin{align}
    \abs{\hat{X}+\i \hat{Y} - e^{\i E_0 t}}
\le 
    \abs{\hat{X}+\i \hat{Y} - \bra{\phi} e^{iHt} \ket{\phi}} +
    \abs{\bra{\phi} e^{iHt} \ket{\phi}- e^{\i E_0 t}}
    \le \sqrt{2}\epsilon_2 + 2\gamma 
    = \mysin{t \epsilon},
\end{align}
where the last step follows from our choice of $\epsilon_2$ and $\gamma$. This implies that the argument of $\hat{X}+\i \hat{Y}$ is $t \epsilon$-close to $E_0 t$ modulo $2\pi$. Then since $E_0 \in [l, r]$ and $(r-l+2\epsilon)t \le 2\pi \omega < 2\pi$, there exists a unique $\tilde{E}_0 \in [l-\epsilon, r+\epsilon]$ such that
$\tilde{E}_0 t  \equiv \myArg{\hat{X} + i\hat{Y}} (\operatorname{mod}~2\pi)$. This $\tilde{E}_0$ satisfies $|\tilde{E}_0t-E_0t|\le t\epsilon$ and hence  $|\tilde{E}_0-E_0|\le \epsilon$. Since $E_0 \in [l, r]$, we can clip $\tilde{E}_0$ to $[l, r]$ and it will still be $\epsilon$-close to $E_0$.

It remains to prove that the algorithm succeeds with probability at least $1-\delta$. The algorithm fails only if one of the following events happens:
\begin{itemize}
    \item $[l,r]$ at step \ref{step:small_error} of Algorithm \ref{alg:gsee} does not contain $E_0$;
    \item Few than $2K$ copies of $\ket{\phi}$ are collected at step \ref{step:collect_samples} of Algorithm \ref{alg:gsee_ref}, if $\eta<1-\gamma$;
    \item $\hat{X}$ is $\epsilon_2$-far from $\realpart{\bra{\phi} e^{\i Ht} \ket{\phi}}$, or $\hat{Y}$ is $\epsilon_2$-far from $\imaginarypart{\bra{\phi} e^{\i Ht} \ket{\phi}}$.
\end{itemize}
The first event occurs with probability at most $\delta/3$ by Theorem \ref{thm:basic_gsee}, and the third event occurs with probability at most $\delta/3$ by Eqs.\eqref{eq:accurate_estimate_x} and \eqref{eq:accurate_estimate_y}. We will show below that the second event occurs with probability at most $\delta/3$. Therefore, the total probability of the algorithm failing is at most $\delta$.

Since $V$ is a $(\alpha_2,m_2,0)$-block-encoding of $f(H)$, we have
\begin{align}
    V\ket{0^{m_2}}\ket{\psi}
    =\alpha_2^{-1}\ket{0^{m_2}}f(H)\ket{\psi}
    + \sum_{j \in \zo^{m_2},~j \neq 0^{m_2}} \ket{j}\ket{\phi_j},
\end{align}
where $\ket{\phi_j}$'s are unnormalized $n$-qubit states. Therefore, when measuring the first $m$ qubits of this state, the probability of obtaining measurement outcome $0^m$ is
\begin{align}
    \mu \defeq \|\bra{0^{m_2}} V\ket{0^{m_2}}\ket{\psi}\|^2
    =\alpha_2^{-2}\norm{f(H)\ket{\psi}}^2
    =\alpha_2^{-2}\sum_{j=0}^{N-1} |\gamma_j|^2 f^2(E_j)
    \ge \alpha_2^{-2} |\gamma_0|^2 f^2(E_0)
    \ge \alpha_2^{-2} \eta c.
\end{align}
Consequently, the probability of keeping each state at step \ref{step:collect_samples} of Algorithm \ref{alg:gsee_ref} is at least $\alpha_2^{-2} \eta c$. Let $A$ denote the number of collected states after $M$ repetitions of the experiment at this step. By the multiplicative Chernoff bound, we obtain
\begin{align}
    \P{A \ge \mu(1-\zeta) M} 
    \ge 1-e^{-\zeta^2 \mu M/2}.
\end{align}
Our choice of $M$ ensures that $\mu(1-\zeta) M \ge 2K$
and $e^{-\zeta^2 \mu M/2} \le \delta/3$. Therefore, we have
\begin{align}
    \P{A \ge 2K} 
    \ge 1-\delta/3,
\end{align}
as claimed.

The theorem is thus proved.

\end{proof}

Note that in Algorithm \ref{alg:gsee_ref}, $f(H)$ is not necessarily close to a projection operator. Since there is a single eigenvalue of $H$, namely the ground-state energy $E_0$, that lies in $[l,r]$, we only need to ensure that $f(x)=\myOmega{1}$ for all $x\in [l,r]$. It is not necessary to have $f(x) \approx 1$ to for all such values of $x$. This flexibility in the construction of $f$ helps reduce the cost of implementing $f(H)$.

Moreover, the parameters \( \zeta, \beta, \omega, c \in (0,1) \) can be adjusted to optimize the algorithm's efficiency. These parameters are introduced to balance the costs of different components of the algorithm. For instance, choosing a smaller \( \omega \) results in a smaller \( t \), leading to shorter circuit depths. However, this also reduces \( \gamma \), \( \epsilon_1 \), and \( \epsilon_2 \), making the implementation of \( f(H) \) more costly and requiring more repetitions in step \ref{step:collect_samples}. Thus, there is a trade-off between circuit depth and overall runtime. 

Similarly, picking a smaller \( c\) can reduce the cost of implementing \( f(H) \), but it also increases the number of repetitions needed in step \ref{step:collect_samples}. This demonstrates the need to carefully balance these parameters to achieve optimal performance.

\subsection{Concrete implementation}
\label{subsec:concrete_implementation}
Next, we provide a concrete implementation of the operations $V_{a,b}$ in Algorithm \ref{alg:basic_gsee} and $V$ in Algorithm \ref{alg:gsee_ref} by utilizing the technique of \emph{quantum eigenvalue transformation of unitary matrices with real polynomials} (QET-U) \cite{dong2022ground}. This technique enables the implementation of certain low-degree trigonometric polynomials of $H/2$ using a small number of applications of controlled-$e^{\i H}$ and controlled-$e^{-\i H}$. Specifically, Theorem 1 of Ref. \cite{dong2022ground} states that:

\begin{lemma}
For any even real polynomial $F(x)$ of degree $d$ satisfying $|F(x)|\le 1$ for all $x \in [-1, 1]$, we can implement a $(1, 1, 0)$-block-encoding of $F(\cos(H/2))$ using $\mathcal{O}(d)$ queries of controlled-$e^{\i H}$ or controlled-$e^{-\i H}$, along with $\mathcal{O}(d)$ additional elementary quantum gates.
\label{lem:qetu}
\end{lemma}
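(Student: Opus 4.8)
The plan is to reduce this statement to quantum signal processing (QSP) applied to a single-ancilla block-encoding of the Hermitian operator $\cos(H/2)$, which is the essence of the quantum eigenvalue transformation of unitaries \cite{dong2022ground}.

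\textbf{Constructing the signal unitary.} First I would build, from controlled-$e^{\i H}$, a one-ancilla unitary whose encoded block is $\cos(H/2)$. Conjugating controlled-$e^{\i H}$ by a Hadamard on the ancilla produces the block $\frac{1}{2}(I + e^{\i H}) = e^{\i H/2}\cos(H/2)$ in the $\ketbra{0}{0}$ corner; its singular values are the desired $|\cos(E_j/2)|$, but a spurious phase $e^{\i H/2}$ is attached. The key move is to cancel this phase by a symmetric arrangement that also invokes controlled-$e^{-\i H}$, yielding a qubitized walk operator $W$ whose two-dimensional invariant subspaces (labeled by the eigenvectors $\ket{E_j}$) carry eigenphases $\pm\arccos(x_j)$ with signal variable $x_j = \cos(E_j/2)$. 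Since the eigenvalues satisfy $E_j/2 \in [-1/2, 1/2]$, this labeling is consistent, and on each block $W$ acts as a rotation by angle $\arccos(x_j)$ — exactly the signal oracle that QSP consumes in the variable $x_j$.

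\textbf{Applying QSP and enforcing reality.} Next I would interleave $d$ copies of $W$ with single-qubit $Z$-rotations $e^{\i \phi_k Z}$, $k = 0, 1, \dots, d$, on the ancilla. Restricted to each invariant block this is precisely the single-qubit QSP circuit, whose top-left entry is a degree-$d$ polynomial in $x_j$ fixed by the phase sequence $\Phi = (\phi_0, \dots, \phi_d)$. By the QSP representation theorem underlying QSVT \cite{qsvt_paper}, for any real polynomial $F$ of degree $d$ with definite parity and $|F(x)| \le 1$ on $[-1,1]$ there is a choice of $\Phi$ realizing $F$; the even-parity and boundedness hypotheses in the statement are exactly its feasibility conditions. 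Because the same $\Phi$ acts identically on every block, the full circuit block-encodes $F(\cos(H/2))$. A raw phase sequence generically yields a complex polynomial, so I would symmetrize the phase factors (or use a single-extra-Hadamard linear-combination trick that still spends only one ancilla) so that the encoded block is the Hermitian operator $F(\cos(H/2))$ exactly.

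\textbf{Resource count and main obstacle.} The circuit consists of $d$ applications of $W$, each using $\mathcal{O}(1)$ queries of controlled-$e^{\i H}$ and controlled-$e^{-\i H}$, plus $d+1$ single-qubit rotations and $\mathcal{O}(d)$ other gates, and it uses a single ancilla throughout — giving the claimed $(1,1,0)$-block-encoding with $\mathcal{O}(d)$ queries and $\mathcal{O}(d)$ additional gates. I expect the hard part to be the first step: fabricating a phase-free, single-ancilla qubitized walk whose eigenphases are precisely $\pm\arccos(\cos(E_j/2))$ (so the signal variable is exactly $\cos(H/2)$) while spending only $\mathcal{O}(1)$ controlled evolutions per signal call, and then confirming that the QSP stage can be forced to output a genuinely real, even polynomial without consuming further ancillas.
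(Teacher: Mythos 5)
The paper offers no proof of this lemma at all: it is quoted verbatim as Theorem~1 of Ref.~\cite{dong2022ground}, and the QET-U construction behind that cited theorem is exactly what you reconstruct --- alternating controlled-$e^{\i H}$ and controlled-$e^{-\i H}$ queries so that the spurious $e^{\pm\i E_j/2}$ phases cancel on each eigenspace, leaving a single-qubit QSP product in the signal variable $\cos(E_j/2)$, with the even-parity and boundedness hypotheses entering precisely as the phase-factor feasibility conditions and with symmetric phase factors (equivalently, an $X$-basis matrix element via the extra Hadamards) enforcing a genuinely real polynomial on a single ancilla. Your proposal is therefore correct and takes essentially the same route as the result the paper invokes.
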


Note that a function $f(x)$ can be expressed as $F(\mycos{x/2})$ for 
an even real polynomial $F$ of degree $d$ if and only if 
$f(x)=\sum_{j=0}^{d/2} \alpha_j\mycos{jx}$    
for some coefficients $\alpha_j \in \R$. It is sometimes more convenient to work with the expression $\sum_{j=0}^{d/2} \alpha_j\mycos{jx}$ than with $F(\mycos{x/2})$. Moreover, for any $\xi, t \in \R$, we can implement controlled-$e^{\i(H-\xi I)t}$ using one controlled-$e^{\i Ht}$ and one $Z$-rotation gate on the control qubit. Combining these facts and Lemma \ref{lem:qetu} yields:

\begin{corollary}
Let $d \in \N$ and $\alpha_0, \alpha_1, \dots, \alpha_d \in \R$ be such that
$G(x) \defeq \sum_{j=0}^d \alpha_j\mycos{jx}$ satisfies $|G(x)| \le 1$ for all $x \in \R$. Then for any $\xi \in \R$, we can implement a $(1, 1, 0)$-block-encoding of $G(H-\xi I)$ using $\mathcal{O}(d)$ queries of controlled-$e^{\i H}$ or controlled-$e^{-\i H}$, along with $\mathcal{O}(d)$ additional elementary quantum gates.
\label{cor:qetu}
\end{corollary}

To employ QET-U to implement the block-encodings of $f_{a,b}(H)$ in Algorithm \ref{alg:basic_gsee} and $f(H)$ in Algorithm \ref{alg:gsee_ref}, we need to find suitable trigonometric approximations of these functions. To this end, we prove the following:

\begin{lemma}
For any \(z \in (0, \pi) \),  \( \delta \in (0, \pi-z] \) and \(\epsilon \in (0, 1) \), there exists an 
even function $F_{\delta, \epsilon}(x) = \sum_{j=0}^d \beta_{\delta, \epsilon, j}\mycos{jx}$ for some integer \(d = \mathcal{O}\left(\delta^{-1} \mylog{\delta^{-1}\epsilon^{-1}}\right) \) and efficiently-computable  
$\beta_{\delta,\epsilon,0}, \beta_{\delta,\epsilon,1}, \dots, \beta_{\delta,\epsilon,d} \in \R$, such that \( |F_{\delta, \epsilon}(x)| \leq 1 \) for all \( x \in \R \). Furthermore, the function \( G_{z;\delta, \epsilon}(x) \coloneqq F_{\delta, \epsilon}(x - z-\delta/2+\pi/2) \) meets the following conditions:
\begin{itemize}
    \item \( 1 - \epsilon \leq G_{z;\delta, \epsilon}(x) \leq 1 \), for all \( x \in [0, z] \);
    \item \( |G_{z;\delta, \epsilon}(x)| \leq \epsilon \), for all \( x \in [z + \delta, \pi] \).
\end{itemize}
\label{lem:fourier_approx_threshold}
\end{lemma}

\begin{proof}
By Lemma 6 of Ref.~\cite{lin2022heisenberg} and its proof, we know that for every $\delta_1 \in (0, \delta^*)$, where $\delta^*=2\arctan(1-1/\sqrt{2})\approx 0.5698$, and $\epsilon_1 \in (0, 1)$, there exists an odd integer $d_1=\mathcal{O}\lrb{\delta_1^{-1} \mylog{\delta_1^{-1}\epsilon_1^{-1}}}$ and
an odd, $2\pi$-periodic function $Q_{\delta_1, \epsilon_1}(x)$ of the form
\begin{align}
Q_{\delta_1, \epsilon_1}(x)=\sum_{j=0}^{(d_1-1)/2} \alpha_{\delta_1,\epsilon_1,j} \mysin{(2j+1)x}
\end{align}
such that $Q_{\delta_1, \epsilon_1}$ is $\epsilon_1$-close to the $2\pi$-periodic Heaviside function i.e. $H(x)=1$ if $x \in [2k\pi, (2k+1)\pi)$, and $0$ if $x \in [(2k-1)\pi, 2k\pi)$, for $k \in \Z$, on the interval $[-\pi+\delta_1, -\delta_1] \cup [\delta_1, \pi-\delta_1]$. That is, $|Q_{\delta_1,\epsilon_1}(x)-H(x)|\le \epsilon_1$ for all $x\in [-\pi+\delta_1, -\delta_1] \cup [\delta_1, \pi-\delta_1]$. Furthermore, $-\epsilon_1/2 \le Q_{\delta_1,\epsilon_1}(x) \le 1+\epsilon_1$ for all $x\in \R$. Finally, the coefficients $\alpha_{\delta_1, \epsilon_1, j}$'s are real and efficiently computable.

Now we set $\delta_1=\mymin{\delta/2, \delta^*}$ and $\epsilon_1=\epsilon/2$, and define 
\begin{align}
F_{\delta,\epsilon}(x) &\defeq \frac{1-(1-\epsilon_1)Q_{\delta_1,\epsilon_1}(x-\pi/2)}{2} \\   
&=\frac{1+(1-\epsilon_1)\sum_{j=0}^{(d_1-1)/2}(-1)^j\alpha_{\delta_1,\epsilon_1,j}\mycos{(2j+1)x}}{2}.
\end{align}

We claim that $F_{\delta,\epsilon}$ satisfies all the desired properties. First, $F_{\delta, \epsilon}(x)$ is an even function and can be expressed as $ \sum_{j=0}^d \beta_{\delta, \epsilon, j}\mycos{jx}$ for some $d=\mathcal{O}\lrb{\delta^{-1} \mylog{\delta^{-1}\epsilon^{-1}}}$ and efficiently-computable  
$\beta_{\delta,\epsilon,0}$, $\beta_{\delta,\epsilon,1}$, $\dots$, $\beta_{\delta,\epsilon,d} \in \R$. 

In addition, since $-\epsilon_1/2 \le Q_{\delta_1,\epsilon_1}(x-\pi/2) \le 1+\epsilon_1$ and $\epsilon_1=\epsilon/2$, we obtain $|(1-\epsilon_1)Q_{\delta_1,\epsilon_1}(x-\pi/2)| \le 1$, and thus  $|F_{\delta,\epsilon}(x)|\le 1$, for all $x \in \R$.

Finally, since $|Q_{\delta_1,\epsilon_1}(x)-H(x)|\le \epsilon_1$ for all $x\in [-\pi+\delta_1, -\delta_1] \cup [\delta_1, \pi-\delta_1]$ and $\epsilon_1=\epsilon/2$, we get that:
\begin{itemize}
    \item $1-\epsilon \le F_{\delta,\epsilon}(x) \le 1$, for all $x \in [(2k-1/2)\pi+\delta_1, (2k+1/2)\pi-\delta_1]$;
    \item $|F_{\delta,\epsilon}(x)|\le \epsilon$, for all $x \in [(2k+1/2)\pi+\delta_1, (2k+3/2)\pi-\delta_1]$,
\end{itemize}
for any $k \in \Z$. Now given that  \( 0 < z < z + \delta \le \pi \), $\delta_1 \le \delta/2$, and \( G_{z;\delta, \epsilon}(x) = F_{\delta, \epsilon}(x - z - \delta/2+\pi/2) \), we obtain:
\begin{itemize}
     \item \( 1 - \epsilon \leq G_{z;\delta, \epsilon}(x) \leq 1 \), for all \( x \in [0, z] \subseteq  [z+\delta/2+\delta_1-\pi, z+\delta/2-\delta_1] \);
    \item \( |G_{z;\delta, \epsilon}(x)| \leq \epsilon \), for all \( x \in [z + \delta, \pi] \subseteq [z+\delta/2+\delta_1, z+\delta/2-\delta_1+\pi] \),
\end{itemize}
as claimed.
\end{proof}

It follows from Corollary \ref{cor:qetu} and Lemma  \ref{lem:fourier_approx_threshold} that:

\begin{corollary}
For any \(z \in (0, \pi) \),  \( \delta \in (0, \pi-z] \) and \(\epsilon \in (0, 1) \), let $G_{z; \delta,\epsilon}(x)$ be the function in Lemma \ref{lem:fourier_approx_threshold}. Then we can implement a \( (1, 1, 0) \)-block-encoding of \( G_{z;\delta,\epsilon}(H) \) using $\mathcal{O}\lrb{\delta^{-1} \mylog{\delta^{-1}\epsilon^{-1}}}$ queries of controlled-$e^{\i H}$ or controlled-$e^{-\i H}$, along with $\mathcal{O}\lrb{\delta^{-1} \mylog{\delta^{-1}\epsilon^{-1}}}$ additional elementary quantum gates.
\label{cor:qetu_implement_threshold}
\end{corollary}

With the concrete implementation of all components of Algorithm \ref{alg:gsee} in place, we are now ready to analyze its complexity:

\begin{theorem}
Let $H$ be an $n$-qubit Hamiltonian with unknown spectral decomposition $H=\sum_{j=0}^{N-1} E_j \ket{E_j} \bra{E_j}$, where $N=2^n$, and the eigenvalues satisfy $0 \le E_0<E_1 \le E_2 \le ...\le E_{N-1} \le \pi$. Suppose we know a lower bound $\Delta>0$ on the spectral gap $E_1-E_0$ of $H$. Additionally, suppose we can prepare an $n$-qubit state $\ket{\psi}=\sum_{j=0}^{N-1} \gamma_j \ket{E_j}$, with an overlap with the ground state $\ket{E_0}$ satisfying $|\gamma_0|^2 \ge \eta$, where $\eta \in (0, 1)$ is known. 

Then there exists a quantum algorithm that, for any given $\epsilon>0$ and $\delta \in (0, 1)$, estimates $E_0$ within additive error $\epsilon$ with probability at least $1-\delta$ by making use of $$\mytO{\eta^{-1} \lrb{\Delta^{2}\epsilon^{-2} + \mylog{\Delta^{-1}}}\mylog{\delta^{-1}}}$$ quantum circuits. Each circuit involves multiple applications of controlled-\( e^{\i Ht_j} \) for various \( t_j \) such that $$\sum_j |t_j| = \tilde{\mathcal{O}}(\Delta^{-1} \mylog{\eta^{-1} \Delta \epsilon^{-1}}),$$ and uses $$\mytO{\Delta^{-1} \operatorname{log}(\eta^{-1}\Delta \epsilon^{-1})}$$ additional elementary quantum gates.
\label{thm:main_in_he}
\end{theorem}
\begin{proof}
We run Algorithm \ref{alg:gsee} to estimate the ground-state energy \( E_0 \) of the Hamiltonian \( H \). This algorithm requires calling Algorithm \ref{alg:basic_gsee} at step \ref{step:large_error} or \ref{step:small_error}. In Algorithm \ref{alg:basic_gsee}, we need to find a function \( f_{a,b}: [0, \pi] \to [-1, 1] \) such that \( f_{a,b}(x) \ge 1 - \epsilon' \) for all \( x \in [l, a] \) and \( \lvert f_{a,b}(x) \rvert \le \epsilon' \) for all \( x \in [b, \pi] \). Lemma \ref{lem:fourier_approx_threshold} provides a trigonometric polynomial \( f_{a,b} \) that satisfies these properties. Then we can apply Corollary \ref{cor:qetu_implement_threshold} to implement a \( (1,1,0) \)-block-encoding of \( f_{a,b}(H) \). Thus, we set \( \alpha_1 = 1 \) and \( m_1 = 1 \) in Algorithm \ref{alg:gsee}.

Meanwhile, if \( \epsilon < \Delta/4 \), we need to invoke Algorithm \ref{alg:gsee_ref} at step \ref{step:refine_estimate}. In this subroutine, we need to find a function \( f: [0, \pi] \to [-1, 1] \) such that \( f(x) \geq \sqrt{c} \) for all \( x \in [l, r] \), and \( \lvert f(x) \rvert \leq \sqrt{\epsilon_1} \) for all \( x \in [r + \Delta/2, \pi] \), provided that \( \eta < 1 - \gamma \). Note that our choice of \( \epsilon_1 \) ensures that \( \sqrt{c} \le 1 - \sqrt{\epsilon_1} \). Thus, it suffices to find a trigonometric polynomial \( f \) such that \( f(x) \ge 1 - \sqrt{\epsilon_1} \) for all \( x \in [l, r] \), and \( \lvert f(x) \rvert \le \sqrt{\epsilon_1} \) for all \( x \in [r + \Delta/2, \pi] \). This can be accomplished using Lemma \ref{lem:fourier_approx_threshold} again. Then we apply Corollary \ref{cor:qetu_implement_threshold} to implement a \( (1,1,0) \)-block-encoding of \( f(H) \). Thus, we set \( \alpha_2 = 1 \) and \( m_2 = 1 \) in Algorithm \ref{alg:gsee}.

By Proposition \ref{thm:main}, Algorithm \ref{alg:gsee} outputs a value that is \( \epsilon \)-close to \( E_0 \) with probability at least \( 1 - \delta \), as desired.

It remains to analyze the cost of Algorithm \ref{alg:gsee}, which depends on the individual costs of Algorithms \ref{alg:basic_gsee} and \ref{alg:gsee_ref}. We will examine each of these algorithms separately.

In BASIC$\_$GSEE($\tilde{\epsilon}$, $\tilde{\delta}$, $\eta$, $1$, $1$), we set $L=\myO{\mylog{\tilde{\epsilon}^{-1}}}$, $M=\myO{\eta^{-1}\mylog{\tilde{\delta}^{-1}\mylog{\tilde{\epsilon}^{-1}}}}$ and $\epsilon'=\Theta\lrb{\sqrt{\eta}}$. Since we have $b-a=\myOmega{\tilde{\epsilon}}$ at step \ref{step:Vab_block_encode}, by Corollary \ref{cor:qetu_implement_threshold}, the implementation of $V_{a,b}$ takes $$\mytO{\tilde{\epsilon}^{-1} \mylog{\epsilon'^{-1}}}
=\mytO{\tilde{\epsilon}^{-1} \mylog{\eta^{-1}}}$$ queries of controlled-$e^{\i H}$ or controlled-$e^{-\i H}$, and $\mytO{\tilde{\epsilon}^{-1} \mylog{\eta^{-1}}}$ additional elementary quantum gates. We perform a total of 
\begin{align}
\myO{ML}=\mytO{\eta^{-1}\mylog{\tilde{\epsilon}^{-1}}\mylog{\tilde{\delta}^{-1}}} 
\end{align}
such operations in this algorithm. 

Note that we always set $\tilde{\epsilon}=\myOmega{\Delta}$ and $\tilde{\delta}=\myTheta{\delta}$ whenever  BASIC$\_$GSEE($\tilde{\epsilon}$, $\tilde{\delta}$, $\eta$, $1$, $1$) is called within Algorithm \ref{alg:gsee}. Therefore, this part of Algorithm \ref{alg:gsee} requires at most $$\mytO{\eta^{-1}\mylog{\Delta^{-1}}\mylog{ \delta^{-1}}}$$ quantum circuits, where each circuit uses
$$\mytO{ \Delta^{-1} \mylog{\eta^{-1}}}$$ queries of controlled-$e^{\i H}$ or controlled-$e^{-\i H}$, and $$\mytO{\Delta^{-1} \mylog{\eta^{-1}}}$$ additional elementary quantum gates.

Next, we analyze the cost of $\mathrm{REFINE\_GSE\_ESTIMATE}(l, r, \epsilon, \delta, \Delta, \eta, 1, 1)$, assuming $\epsilon<\Delta/4$. By our choice of parameters, we have $t=\myTheta{\Delta^{-1}}$, $\gamma=\myTheta{\epsilon \Delta^{-1}}$ \footnote{Note that $0<t\epsilon<4\pi \epsilon/(\Delta+4\epsilon) < \pi/2$, and hence $ 2 t\epsilon / \pi \le \mysin{t\epsilon} \le t\epsilon$.}, $\epsilon_1=\myTheta{\eta \gamma}=\myTheta{\eta \epsilon \Delta^{-1}}$, $\epsilon_2=\myTheta{\epsilon \Delta^{-1}}$, 
\begin{align}
K=\myO{\epsilon_2^{-2} \mylog{\delta^{-1}} }=\myO{\epsilon^{-2}\Delta^2 \mylog{\delta^{-1}}},    
\end{align}
and
\begin{align}
M=\myO{\mymax{K, \mylog{\delta^{-1}}}\cdot \eta^{-1}}=\myO{\eta^{-1}\epsilon^{-2}\Delta^2\mylog{\delta^{-1}}}.
\end{align}
By Corollary \ref{cor:qetu_implement_threshold}, the operation $V$ at step \ref{step:V_block_encode} can be implemented with  
$$\mytO{\Delta^{-1} \mylog{\epsilon_1^{-1}}}=\mytO{\Delta^{-1} \mylog{\eta^{-1}\Delta\epsilon^{-1}}}$$ 
queries of controlled-$e^{\i H}$ or controlled-$e^{-\i H}$, and $\mytO{\Delta^{-1} \mylog{\eta^{-1}\Delta\epsilon^{-1}}}$ additional elementary quantum gates. We perform at most 
\begin{align}
M=\myO{\eta^{-1}\epsilon^{-2}\Delta^2\mylog{\delta^{-1}}}
\end{align}
such operations in the algorithm. 

In addition, we also need to perform $2K$ Hadamard tests with an evolution time $t=\myTheta{\Delta^{-1}}$ on the collected states from step \ref{step:collect_samples}. This increases the maximum evolution time 
 of our circuits by $\myTheta{\Delta^{-1}}$, which is insignificant compared to the original evolution time of $\mytO{\Delta^{-1} \mylog{\eta^{-1}\Delta\epsilon^{-1}}}$.

Overall, when we account for the costs of all components of   ADV$\_$GSEE($\epsilon, \delta, \Delta, \eta, 1, 1, 1, 1)$, we conclude that
it requires
 $$\mytO{\eta^{-1} \lrb{\Delta^{2}\epsilon^{-2} + \mylog{\Delta^{-1}}}\mylog{\delta^{-1}}}$$ quantum circuits, where each circuit evolves $H$ up to a time of $$\tilde{\mathcal{O}}(\Delta^{-1} \mylog{\eta^{-1} \Delta \epsilon^{-1}}),$$ and uses $$\mytO{\Delta^{-1} \operatorname{log}(\eta^{-1}\Delta \epsilon^{-1})}$$ additional elementary quantum gates, as claimed.
\end{proof}

Although Lemma \ref{lem:fourier_approx_threshold} provides explicit  constructions for the functions \( f_{a,b} \) in Algorithm \ref{alg:basic_gsee} and \( f \) in Algorithm \ref{alg:gsee_ref}, these trigonometric polynomials may have unnecessarily high degrees. Alternatively, we can determine the lowest-degree trigonometric polynomial that meets the desired conditions by solving a sequence of linear programming problems, as detailed in Appendix \ref{sec:minimize_degree}. This approach helps to minimize the costs associated with implementing the operations \( V_{a,b} \) in Algorithm \ref{alg:basic_gsee} and \( V \) in Algorithm \ref{alg:gsee_ref}.

We emphasize that $\Delta$ in Theorem \ref{thm:main_in_he} is merely a lower bound on the true spectral gap $\Delta_{\rm true} \defeq E_1-E_0$ of the Hamiltonian $H$. By tuning it between $\Theta(\epsilon)$ and $\Theta(\Delta_{\rm true})$, we obtain a class of GSEE algorithms that smoothly transition from having low circuit depth to having Heisenberg-limit scaling of runtime with respect to $\epsilon$:

\begin{corollary}

Let $H$ be an $n$-qubit Hamiltonian with unknown spectral decomposition $H=\sum_{j=0}^{N-1} E_j \ket{E_j} \bra{E_j}$, where $N=2^n$ and the eigenvalues satisfy $0 \le E_0<E_1 \le E_2 \le ...\le E_{N-1} \le \pi$,
and let $\Delta_{\rm true} \defeq E_1-E_0$ be the spectral gap of $H$. Additionally, suppose we can prepare an $n$-qubit state $\ket{\psi}=\sum_{j=0}^{N-1} \gamma_j \ket{E_j}$, with an overlap with the ground state $\ket{E_0}$ satisfying $|\gamma_0|^2 \ge \eta$, where $\eta \in (0, 1)$ is known. 

Then for every $\beta \in [0, 1]$, there exists an algorithm that, for any given $\epsilon>0$ and $\delta\in (0,1)$, estimates $E_0$ within additive error $\epsilon$ with probability at least $1-\delta$ by making use of 
$$\tilde{\mathcal{O}}\lrb{\eta^{-1} \lrb{ \epsilon^{-2+2\beta} \Delta_{\rm true}^{2 - 2\beta} + \mylog{\epsilon^{-\beta} \Delta_{\rm true}^{\beta-1}}} \mylog{\delta^{-1}}}$$ 
quantum circuits. Each circuit involves multiple applications of controlled-\( e^{\i Ht_j} \) for various \( t_j \) such that 
$$ \sum_j |t_j| = \tilde{\mathcal{O}}(\epsilon^{-\beta} \Delta_{\rm true}^{-1+\beta} 
\operatorname{log}(\eta^{-1}\Delta_{\rm true}^{1-\beta} \epsilon^{\beta-1})),$$ and uses 
$$\tilde{\mathcal{O}}(\epsilon^{-\beta} \Delta_{\rm true}^{-1+\beta}
\operatorname{log}(\eta^{-1}\Delta_{\rm true}^{1-\beta} \epsilon^{\beta-1}))$$
additional elementary quantum gates.
\label{cor:interpolation_he}
\end{corollary}

\begin{proof}
This corollary is the direct consequence of setting $\Delta=\Theta(\epsilon^{\beta} \Delta_{\rm true}^{1-\beta})$ for $\beta \in [0, 1]$ in Theorem \ref{thm:main_in_he}.
\end{proof}

In particular, setting $\beta=0$ or $1$ yields:
\begin{itemize}
    \item $\Delta = \Theta(\Delta_{\rm true})$, for which we get an algorithm that
    makes use of $\mytO{\eta^{-1} \epsilon^{-2} \Delta_{\rm true}^{2}\mylog{\delta^{-1}}}$ quantum circuits, where each circuit evolves $H$ for a total time of $\mytO{\Delta_{\rm true}^{-1}\operatorname{log}(\eta^{-1} \Delta_{\rm true} \epsilon^{-1})}$, and uses $\mytO{\Delta_{\rm true}^{-1}\operatorname{log}(\eta^{-1} \Delta_{\rm true}\epsilon^{-1})}$ extra elementary quantum gates; or
    \item $\Delta = \Theta(\epsilon)$, for which we get an algorithm that makes use of $\mytO{\eta^{-1} \mylog{\epsilon^{-1}} \mylog{\delta^{-1}}}$ quantum circuits, where each circuit evolves $H$ for a total time of  $\mytO{\epsilon^{-1}\operatorname{log}(\eta^{-1})}$, and uses $\mytO{\epsilon^{-1}\operatorname{log}(\eta^{-1})}$ extra elementary quantum gates.
\end{itemize}

\section{Numerical simulation}
\label{sec:numerical_results}

In this section, we numerically assess the performance of our GSEE algorithm against an alternative approach with comparable circuit depth and overall runtime. The simulation results show that our algorithm outperforms the alternative, requiring both shallower circuits and shorter runtime.

\subsection{An alternative method based on robust phase estimation}
\label{subsec:alternative_method}
Consider the following GSEE algorithm, which builds directly on prior work. This algorithm utilizes Algorithm \ref{alg:basic_gsee} to produce an initial coarse estimate of \( E_0 \) but incorporates a distinct subroutine to refine this estimate. Specifically, this subroutine combines the ground state preparation method from Ref.~\cite{dong2022ground} with the phase estimation technique from Ref.~\cite{ni2023low} optimized for scenarios with large overlap.

Specifically, we begin by running Algorithm \ref{alg:basic_gsee} to obtain an interval $[l, r]$ of length at most $\Delta/2$ that contains $E_0$ with probability $1-\myTheta{\delta}$. With this information, we then apply the algorithm in Theorem 6 of Ref.~\cite{dong2022ground} to prepare a state $\ket{\phi}$ such that $|\braket{E_0}{\phi}|^2=1-\myTheta{\epsilon/\Delta}$, using $\ket{\psi}$ as the initial state for each circuit in this algorithm. This procedure requires
$$\mytO{\eta^{-1}}$$ quantum circuits, with each circuit evolving $H$ up to time $$\mytO{\Delta^{-1} \mylog{\eta^{-1}\Delta\epsilon^{-1}}}.$$

This state preparation enables us to apply Algorithm 2 in Ref.~\cite{ni2023low}, also known as Robust Phase Estimation (RPE), to estimate \( E_0 \) with accuracy \( \epsilon \) and confidence \( 1 - \myTheta{\delta} \). The RPE algorithm requires that the initial state has a substantial overlap with the ground state \( \ket{E_0} \) — specifically, an overlap of at least \( 4 - 2\sqrt{3} \approx 0.536 \). We fulfill this requirement by using \( \ket{\phi} \) from the previous step as the initial state. The performance of RPE depends on a parameter $\xi$: a smaller $\xi$ results in shallower circuits but longer overall runtime, whereas a larger $\xi$ leads to deeper circuits and shorter overall runtime. Here we set $\xi = \myTheta{\epsilon / \Delta}$ to ensure that the circuit depth remains within the desired regime. Consequently, this procedure requires 
$$ \mytO{\Delta^2 \epsilon^{-2} \mylog{\Delta^{-1}} \mylog{\delta^{-1}}} $$
quantum circuits, each evolving \( H \) up to time
$$ \mytO{\Delta^{-1}}. $$ 
Note that every copy of $\ket{\phi}$ demanded by RPE is prepared by the previous step. Thus, we need to repeat that step 
$$ \mytO{\Delta^2 \epsilon^{-2} \mylog{\Delta^{-1}} \operatorname{log}\lrb{\delta^{-1}}} $$
times to generate all the necessary states.

Taking into account the costs of all components, this alternative GSEE algorithm requires 
$$ \mytO{\eta^{-1}\Delta^2 \epsilon^{-2} \mylog{\Delta^{-1}} \operatorname{log}\lrb{\delta^{-1}}} $$
quantum circuits, where each circuit evolves \( H \) up to time
$$\mytO{\Delta^{-1} \mylog{\eta^{-1}\Delta\epsilon^{-1}}}.$$

For convenience, we will refer to the above algorithm as Algorithm 2-alt, and to its refinement stage (i.e., all steps except the initial stage for obtaining \( [l, r] \)) as Algorithm 3-alt. We will compare the maximal and total evolution times of Algorithm \ref{alg:gsee} (or Algorithm \ref{alg:gsee_ref}) with those of Algorithm 2-alt (or Algorithm 3-alt) in our experiments.

\subsection{Experimental setup and results}
\label{subsec:numerical_results}
In our experiments, we consider the one-dimensional transverse field Ising model (TFIM) defined on \( L \) sites with periodic boundary conditions:
\begin{align}
    H = -\left(\sum_{i=1}^{L-1} Z_i Z_{i+1} + Z_L Z_1\right) - g \sum_{i=1}^L X_i,
\end{align}
where \( X_i \) and \( Z_i \) are Pauli operators acting on the \( i \)-th site, and \( g \) is the coupling coefficient. We normalize this Hamiltonian
so that the eigenvalues lie within \( [-\pi/2, \pi/2] \). Then we estimate the ground-state energy of the rescaled Hamiltonian \( \tilde{H} \coloneqq \frac{\pi H}{2 \norm{H}} \) to within various accuracies. The circuit depth and overall runtime of the tested algorithms are measured in terms of the maximal and total evolution times of \( \tilde{H} \), respectively.

In Figure \ref{fig:compare_gsee_algs}, we present the performance of our algorithm (i.e., Algorithm \ref{alg:gsee} in Section \ref{sec:gsee_alg}) and the alternative algorithm (i.e., Algorithm 2-alt in Section \ref{subsec:alternative_method}) for the TFIM with \( L = 8 \) and \( g = 4 \). We select three different initial states with overlaps of $0.3$, $0.1$, and $0.03$ with the ground state, respectively. In Algorithm \ref{alg:gsee}, we set \( \omega = 0.99 \), \( c = 0.6 \), \( \beta = 0.5 \), and \( \zeta = 0.2 \). For Algorithm 2-alt, we choose \( \xi = \min(10 \epsilon/\Delta, 0.9) \) in the RPE subroutine. We apply the procedure outlined in Appendix \ref{sec:minimize_degree} to determine the minimal-degree trigonometric polynomials necessary for eigenstate filtering within these algorithms.

\begin{figure}[H]
  \centering
  \begin{tabular}{cc}
    \includegraphics[scale=0.45]{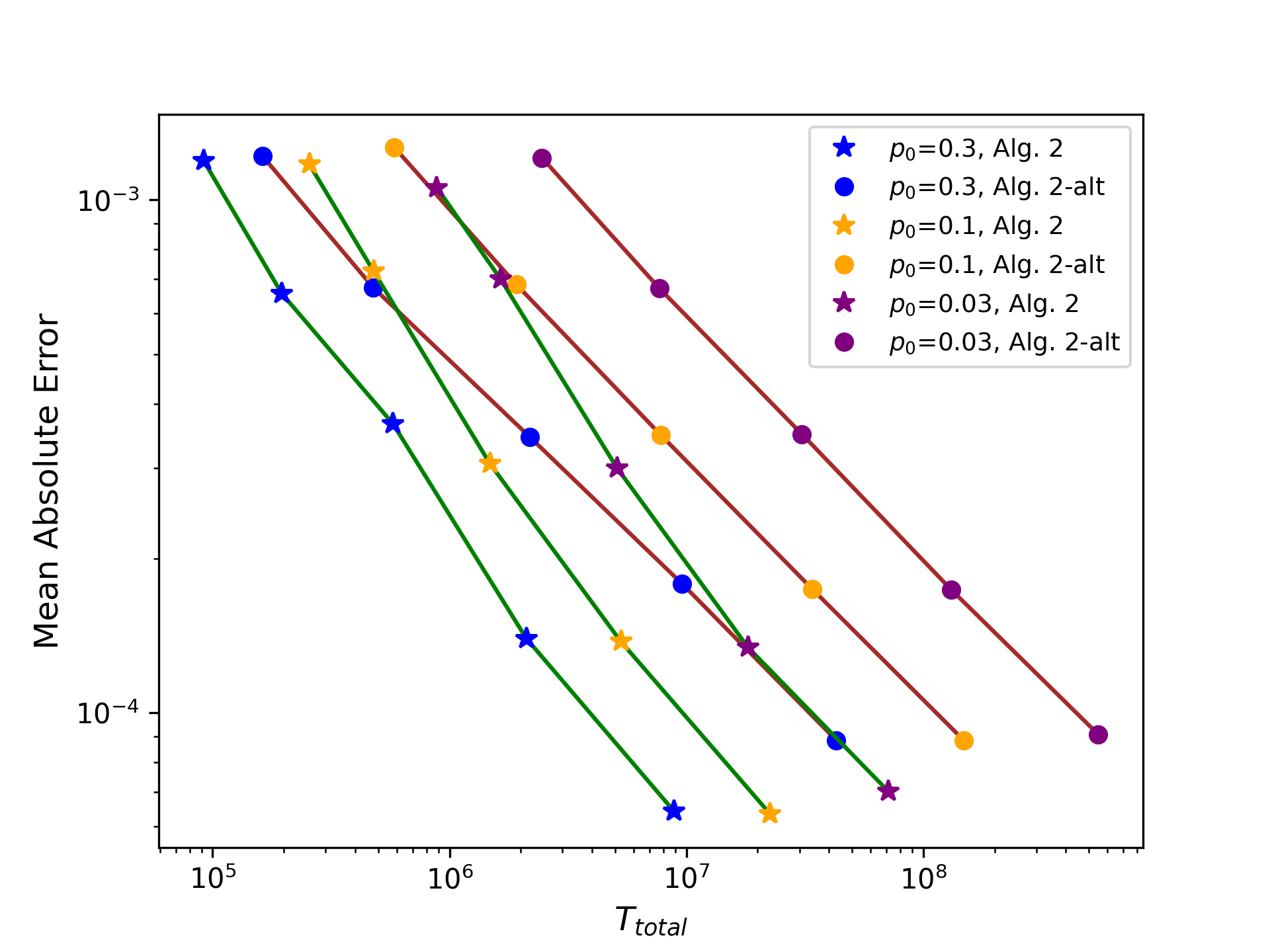} &
    \includegraphics[scale=0.45]{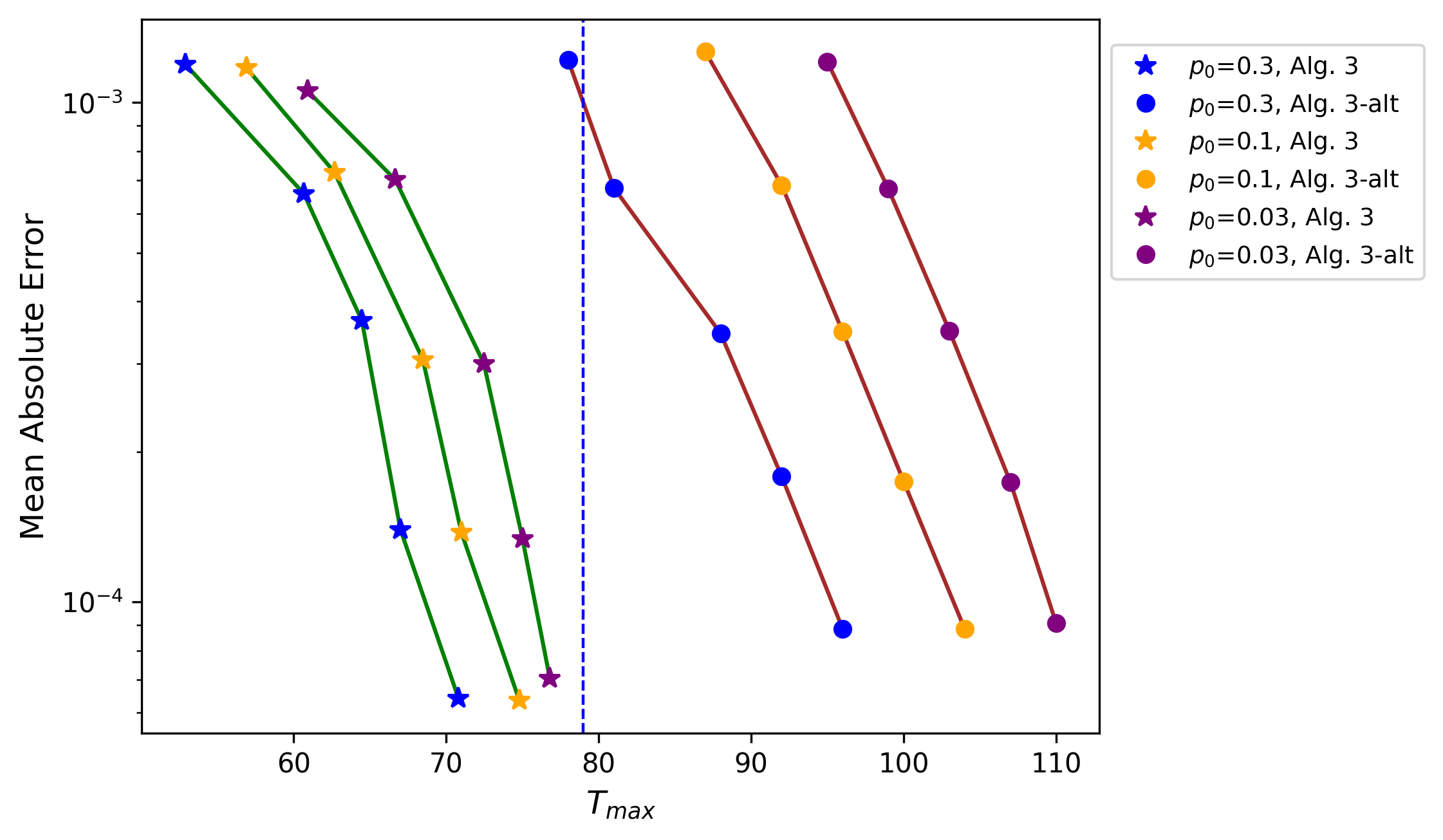}\\    
    (a) & (b)
  \end{tabular}
    \caption{Comparison of our Algorithm \ref{alg:gsee} and the alternative Algorithm 2-alt for the TIFM with $L=8$ and $g=4$. The initial overlap is set to $p_0=0.3$, $0.1$, and $0.03$. (a) Overall runtime $T_{\mathrm{total}}$ of the algorithms. (b) Circuit depth $T_{\mathrm{max}}$ during the refinement stages. Both algorithms share the same initial stage, and its circuit depth is indicated by the dashed vertical line.}
  \label{fig:compare_gsee_algs}
\end{figure}

Algorithms \ref{alg:gsee} and 2-alt share the same initial stage for obtaining a coarse estimate of \( E_0 \); they differ only in their refinement stages, specifically Algorithms \ref{alg:gsee_ref} and 3-alt, respectively. For each setting, we simulate both algorithms 1 million times and compute the mean absolute error of their outputs. As shown in Figure \ref{fig:compare_gsee_algs}, our algorithm achieves the same accuracy with less runtime than the alternative algorithm, despite using shallower circuits. In fact, the refinement stage of our algorithm uses circuits that are shallower than those of the initial stage, whereas the refinement stage of the alternative algorithm requires deeper circuits. We remark that it is possible to reduce the circuit depth of Algorithm 2-alt by choosing smaller \( \xi \), but this would increase its runtime and further widen the performance gap between Algorithms \ref{alg:gsee} and 2-alt.

Figure \ref{fig:compare_gsee_algs} also demonstrates that a smaller $p_0$ leads to a higher runtime, with the runtime scaling linearly in the $1/p_0$. This result is consistent with our theoretical complexity analysis.

We also test the impact of the parameter \( \omega \) on the performance of Algorithm \ref{alg:gsee} as follows. We consider the TFIM with \( L = 10 \) and \( g = 2 \), and fix \( p_0 = 0.1 \), \( c = 0.6 \), \( \beta = 0.5 \), and \( \zeta = 0.2 \), while varying \( \omega = 0.3, 0.6, 0.9 \), respectively. Figure \ref{fig:omega_impact} illustrates the simulation results. As expected, smaller \( \omega \) leads to shallower circuits but longer runtimes, and conversely, larger \( \omega \) leads to deeper circuits but shorter runtimes. Note also that the circuit depth of the refinement stage is often smaller than that of the initial stage unless \( \omega \) is close to 1 and the targe accuracy is high.

\begin{figure}[H]
  \centering
  \begin{tabular}{cc}
    \includegraphics[scale=0.45]{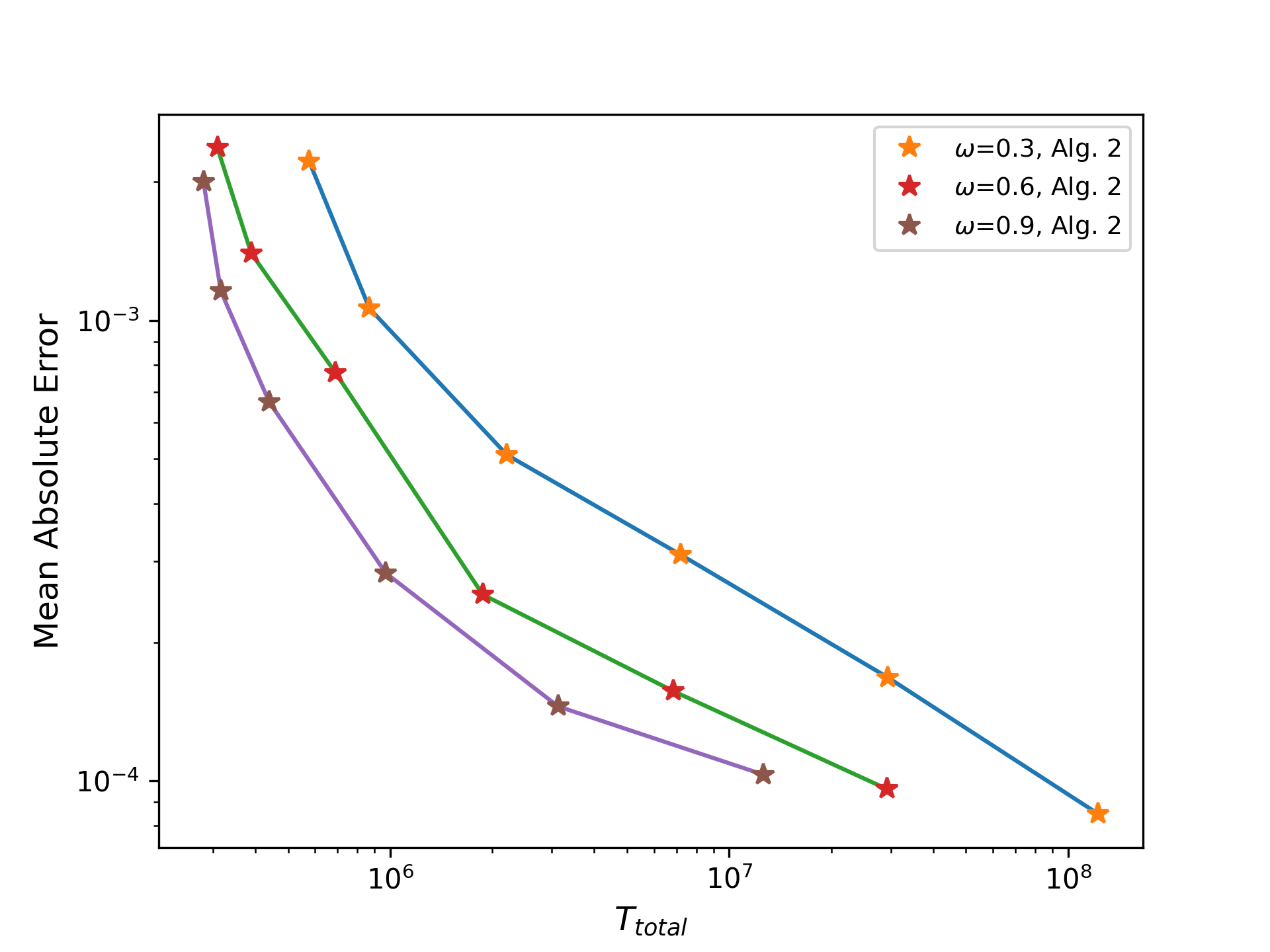} &
    \includegraphics[scale=0.45]{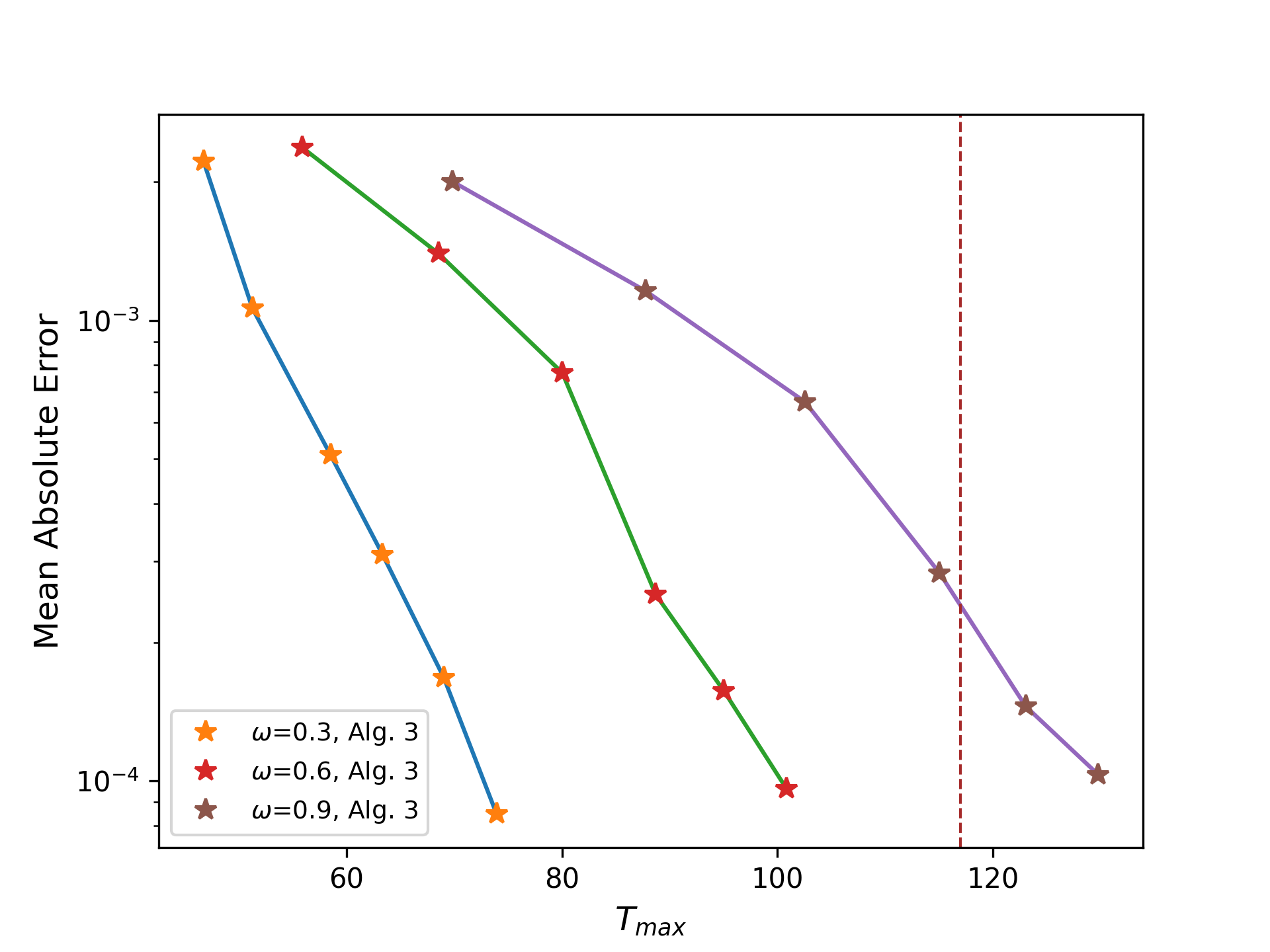}\\    
    (a) & (b)
  \end{tabular}
    \caption{Impact of the parameter $\omega$ on Algorithm \ref{alg:gsee} for the TFIM with $L=10$ and $g=2$. The initial overlap is set to $p_0=0.1$. (a) Overall runtime $T_{\mathrm{total}}$ of the algorithm. (b) Circuit depth $T_{\mathrm{max}}$ during the refinement stage. The circuit depth of the initial stage is indicated by the dashed vertical line.}
  \label{fig:omega_impact}
\end{figure}

\section{Certifying ground-state energy estimates via rejection sampling}\label{sec:certifcation_sec}
\label{sec:certification_result}
In this section, we propose a novel method for performing rejection sampling on quantum computers and develop a quantum algorithm for certifying the accuracy of ground-state energy estimates based on this technique.

\subsection{Rejection sampling from the spectral measure}
\label{subsec:rejection_sampling}
Given the spectral measure $p$ associated with an initial state $\ketbra{\psi}{\psi}$ and a Hamiltonian $H=\sum_i\ketbra{E_i}{E_i}$, i.e.
\begin{align}
p(x)=\sum_ip_i\delta(x-E_i),\quad p_i=|\langle E_i|\psi\rangle|^2,
\end{align}
where \( \delta \) denotes the Dirac delta function, we will show how to sample from a random variable with distribution $p\ast \nu$ on a quantum computer, where $\nu$ is another probability measure and $\ast$ denotes the convolution. It is then not difficult to see that the density of $p\ast \nu$ is given by:
\begin{align}
    (p\ast \nu)(x)=\sum_ip_i\nu(x-E_i).
\end{align}
Sampling from such distributions for various choices of $\nu$ gives us access to information about the spectrum of the Hamiltonian. For instance, if $\nu$ is itself close to a delta distribution at $0$, then sampling from this distribution is close to sampling from $p$ itself. This will, roughly speaking, correspond to our algorithm later, where we will pick $
\nu$ to be a Gaussian with small variance. 

Let us now discuss the sampling routine. We will assume that we can \emph{classically} sample from a random variable $X$ distributed according to a density $\mu$ such that $\textrm{supp}(p\ast \nu)\subset \textrm{supp}(\mu)$. Furthermore, for every $x \in \textrm{supp}(\mu)$, we can implement a $(1, m, 0)$-block-encoding $U_x$ of the operator $\frac{\tilde{\nu}(H-xI)}{M\tilde{\mu}(x)}$, where $\tilde{\nu}$ and $\tilde{\mu}$ satisfy:
\begin{align}\label{equ:requirement_block}
\left|\frac{\tilde{\nu}(H-xI)}{M\tilde{\mu}(x)}\right|^2=\frac{\nu(H-xI)}{|M|^2\mu(x)},
\end{align}
and $M\in \C$ is a constant such that for all $x\in\textrm{supp}(\mu)$:
\begin{align}
\sum_ip_i\frac{\nu(E_i-x)}{|M|^2\mu(x)}\leq 1.
\end{align}
There are many cases where explicit block-encoding circuits have been developed \cite{lee2021even, loaiza2022reducing}.

Let us now explain the quantum algorithm that allows us to sample from $p\ast \nu$ in this setting with an expected number $\mathcal{O}(|M|^2)$ runs of such $U_x$'s.
First, we generate a sample $x$ from the density $\mu$. After that, we implement the block-encoding $U_x$ of $\frac{\tilde{\nu}(H-xI)}{M\tilde{\mu}(x)}$ and apply it to $\ket{0^m}\ket{\psi}$. Then we measure the $m$ ancilla qubits of the resulting state. If the outcome is $0^m$, we accept the sample. Let us see why, conditioned on the acceptance, $x$ will be distributed according to $p\ast \nu$.

Note that:
\begin{align}
U_x\ket{0^m}\ket{\psi}=\ket{0^m}\frac{\tilde{\nu}(H-xI)}{M\tilde{\mu}(x)}\ket{\psi}+\sum_{j \in \zo^m,~j \neq 0^m }\ket{j}\ket{\psi_j'},
\end{align}
where the $\ket{\psi_j'}$'s are unnormalized states. By expanding $\frac{\tilde{\nu}(H-xI)}{M\tilde{\mu}(x)}\ket{\psi}$ in the eigenbasis of $H$, we see that the probability of measuring $0^m$ on the first register is expressed by:
\begin{align}
    \sum_i\left|\frac{\tilde{\nu}(E_i-x)}{M\tilde{\mu}(x)}\right|^2|\langle E_i|\psi\rangle|^2=\sum_ip_i\frac{\nu(E_i-x)}{|M|^2\mu(x)}=\frac{(p\ast \nu)(x)}{|M|^2\mu(x)}.
\end{align}
Thus, the probability of acceptance is nearly the same as if we ran the rejection sampling algorithm for the densities \( p \ast \nu \) and \( \mu \), except that it is now rescaled by a factor of \( 1/|M|^2 \). Furthermore, by standard properties of rejection sampling, we will need \( \mathcal{O}(|M|^2) \) trials in expectation to generate a single sample from the target distribution.

From the discussion above, we obtain our new rejection sampling routine:
\begin{theorem}
    Let  $H=\sum\limits_{i}E_i\ketbra{E_i}{E_i}$ be a Hamiltonian on $n$ qubits and $\ket{\psi}$ an $n$-qubit state. Define the probability measure $p$ on $\mathbb{R}$ as 
    \begin{align}
        p(x)=\sum_ip_i\delta(x-E_i),\quad p_i=|\langle E_i|\psi\rangle|^2,
        \end{align}
    where \( \delta \) denotes the Dirac delta function.
    Furthermore, let $\nu,\mu$ be two probability density functions on $\mathbb{R}$ such that $\textrm{supp}(p\ast \nu)\subset\textrm{supp}(\mu)$, where $\ast$ denotes the convolution. Finally, assume that there exists a constant $M\in \mathbb{C}$ such that, for every $x \in\textrm{supp}(\mu)$, we have access to a $(1,m,0)$-block-encoding  $U_x$ of an operator $h(H-x I)$ which satisfies
    \begin{align}
    |h(H-x I)|^2=\frac{\nu(H-x I)}{|M|^2\mu(x)},\quad \|h(H-x I)\|\leq 1,
    \end{align}
    and that we can generate samples distributed according to $\mu$. Then we can generate a sample distributed according to $p\ast \nu$ from an expected number $|M|^2$ uses of such $U_x$'s.
    \end{theorem}

Let us illustrate the rejection sampling routine using one of the central examples in this work: conditioned normal random variables. Suppose we wish to sample from the truncation of \( p \ast n_\sigma \) to the interval \( [a, b] \), where \( n_{\sigma} \) is the density of a normal random variable with mean \( 0 \) and variance \( \sigma^2 \). One possible strategy to sample from this distribution is to perform rejection sampling with the density \( \mu \) given by the uniform distribution over \( [a, b] \). Then we implement a block-encoding $U_{\sigma, \xi}$ of the operator
\begin{align}
    h_\sigma(H - \xi I) = e^{-\frac{(H - \xi I)^2}{4\sigma^2}},
\end{align}
where \( h_{\sigma}(x) = e^{-x^2/(4\sigma^2)} \). This function satisfies the requirement in Eq.~\eqref{equ:requirement_block} with \( |M|^2 \) given by
\begin{align}\label{equ:constant_rejection}
    |M|^2 = \frac{b - a}{\sqrt{2\pi} \sigma \int_{a}^{b}(p \ast n_\sigma)(x)  dx}.
\end{align}
With this choice of \( |M|^2 \), the ratio of the density \( p \ast n_\sigma|_{[a,b]} \) to the uniform distribution on \( [a, b] \) corresponds to the acceptance probability. Moreover, \( \|h_{\sigma}(H - \xi I)\| \) is clearly bounded above by \( 1 \) for all \( \xi \). Note that we do not need to know the constant in Eq.~\eqref{equ:constant_rejection} to execute the algorithm, only \( h_\sigma \). However, \( |M|^2 \) will determine the number of repetitions required.

In Appendix \ref{sec:implement_gaussian}, we prove that QET-U can be used to implement a block-encoding of $h_{\sigma}(H-\xi I)$  with a quantum circuit of depth \( \tilde{\mathcal{O}}(\sigma^{-1}) \). Consequently, we can draw samples from the truncated mixture of Gaussians $p \ast n_{\sigma}$ over an interval $[a,b]$ by employing multiple circuits of this depth.

We believe that the rejection sampling subroutine developed in this work is of independent interest and will find applications beyond ground-state energy certification. Indeed, to the best of our knowledge, it is the first quantum algorithm for sampling from distributions associated with the spectral measure with \emph{a constant number} of ancilla qubits.

\subsection{ground-state energy certification algorithm}
\label{subsec:gse_cert_alg}
Next, we propose a method for certifying the accuracy of the ground-state energy estimate produced by Algorithm~\ref{alg:gsee} or other GSEE algorithms. This approach is especially useful when there is uncertainty about the correctness of \( \Delta \), the lower bound on the energy gap. Furthermore, if the input state has additional structure (i.e., a sufficiently small weight on low-energy excited states), our algorithm can provide a correct estimate for evolution times that are arbitrarily shorter than the inverse of the spectral gap. To illustrate, consider the case where the input state is the ground state itself. In this situation, any constant evolution time suffices to obtain a reliable estimate. In both scenarios, it can be advantageous to assume that the algorithm parameters are correct, run Algorithm~\ref{alg:gsee}, and then certify the result. This way, we do not have to rely solely on our estimate of the spectral gap and can benefit from better-prepared input states.

We will describe a procedure that achieves exactly this. Specifically, given a lower bound \( \eta \) on the overlap of the input state with the ground state, access to samples from the distribution \( p \ast n_\sigma|_{[a,b]} \), an error tolerance \( \epsilon > 0 \), and an initial estimate \( \hat{E}_0 \) of the ground-state energy \( E_0 \) accurate within \( \sigma \), our procedure generates a new estimate \( \hat{E}'_0 \) of \( E_0 \) and certifies whether \( |\hat{E}'_0 - E_0| \leq \epsilon \), with a failure probability of at most \( \delta \) for incorrect certification. 

It is important to emphasize that our test \emph{does not certify} a spectral gap estimate; it certifies only the ground-state energy estimate. However, we will show that if $\sigma$ is proportional to the spectral gap, the test will accept the estimate \( \hat{E}'_0 \) with high probability. It is also worth noting that this test incurs a higher sample complexity than the estimation procedure itself, specifically \( \tilde{\mathcal{O}}(\epsilon^{-4} \eta^{-3}) \).

We provide an intuitive explanation of the verification process in Figures~\ref{fig:scheme1}, \ref{fig:scheme2}, \ref{fig:scheme3} and \ref{fig:scheme4}, and explain the protocol in detail in Appendix~\ref{app:certification}. The verification relies primarily on two observations. The first, formalized in Lemma~\ref{lem:contribution_small}, is that if the density of \( p \ast n_\sigma \) has a well-defined peak near our ground-state energy estimate $\hat{E}_0$, then \( p \ast n_\sigma \), conditioned on an interval of size \( \sim \sigma \) around \( \hat{E}_0 \), will approximately exhibit the moments of a convex combination of Gaussians. Furthermore, we demonstrate that the variance of the random variable conditioned on this interval will exceed a certain threshold if the estimate is incorrect. Finally, we show that if $\sigma$ is proportional to the spectral gap, the test will accept the estimate with high probability. We summarize the certification algorithm below:

\begin{theorem}
Under the same conditions as Theorem \ref{thm:main_in_he}, there exists an algorithm that, given arbitrary $\sigma>0$ and $\hat{E}_0 \in [E_0-\sigma, E_0+\sigma]$ (but no bound on the spectral gap $\Delta_{\rm true}$), uses
\begin{align}
\tilde{\mathcal{O}}(\eta^{-3}\epsilon^{-4}\mylog{\delta^{-1}})
\end{align}
block-encodings of Gaussian functions of $H$, as provided in Corollary \ref{cor:qetu_circuit_shifted_gaussian}, in expectation, and generates a new estimate $\hat{E}_0'$ of $E_0$. If $|\hat{E}_0'-E_0|\geq \epsilon$, it rejects $\hat{E}_0'$ with probability at least $1-\delta$. If it accepts, then $|\hat{E}_0'-E_0|\leq \epsilon$, with probability of incorrect acceptance at most $\delta$. Furthermore, there exists some
\begin{align}\label{equ:minimal_time}
\sigma_0=\Omega\left(\frac{\Delta_{\rm{true}}}{\mylog{\epsilon^{-1}\eta^{-1}}}\right)
\end{align}
such that if $\sigma\le \sigma_0$, then this algorithm outputs an estimate $\hat{E}_0'$ that it accepts with probability $1-\delta$ and is such that $|\hat{E}_0'-E_0|\leq \epsilon$.
\end{theorem}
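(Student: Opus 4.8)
The plan is to reduce certification to empirically estimating the \emph{mean} and \emph{variance} of the convolved spectral measure $p\ast n_\sigma$ \emph{conditioned} on a window $I=[\hat E_0-w,\hat E_0+w]$ around the supplied estimate, with $w=\Theta(\sigma\sqrt{\log(e\sigma/\epsilon)})$, and then to accept precisely when the conditional variance falls below a carefully chosen threshold. First I would invoke Theorem~\ref{thm:rejection_sampling} with $\nu=n_\sigma$ to draw i.i.d.\ samples $Y_1,Y_2,\dots$ from $(p\ast n_\sigma)$ restricted to $I$, reusing the same block-encoding $g_{\sigma,\xi;\epsilon''}(H)$ built in Corollary~\ref{cor:qsvt_circuit_shifted_gaussian}, whose degree fixes the circuit depth at $\tilde{\mathcal O}(\sigma^{-1})$. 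Because $\hat E_0\in[E_0-\sigma,E_0+\sigma]$ and $p_0\ge\eta$, the ground-state Gaussian alone deposits mass $\Omega(\eta)$ in $I$, so the rejection constant obeys $|M|^2=\tilde{\mathcal O}(\eta^{-1})$ and each accepted sample costs $\tilde{\mathcal O}(\eta^{-1})$ block-encoding uses. The reported estimate is the empirical conditional mean $\hat E_0'=\tfrac1K\sum_{k\le K}Y_k$; writing $\mu_Y=\E{Y_1}$ and $V_Y=\mathrm{Var}(Y_1)$, I would split $\hat E_0'-E_0=(\hat E_0'-\mu_Y)+(\mu_Y-E_0)$ into a statistical part and a bias part and control each separately.

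The heart of the argument is a variance--bias inequality. Building on Lemma~\ref{lem:contribution_small}, I would show that $Y$ matches, to the relevant orders, a convex combination of Gaussians $n_\sigma(\cdot-E_j)$ truncated to $I$, with conditional weights $q_j$ satisfying $q_0=\Omega(\eta)$, so that $V_Y=\sum_j q_j v_j+\sum_j q_j(m_j-\mu_Y)^2$ splits into within- and between-component terms. By Lemma~\ref{lem:truncated_gaussian_mean_ideal} the ground component is essentially centered, $m_0=E_0\pm\mathcal O(\epsilon)$, and its variance equals a reference value $V^\ast=\Theta(\sigma^2)$ (truncation negligible since $w\gg\sigma$). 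Keeping only the $j=0$ term of the between-part yields the target bound $V_Y\ge V^\ast+\Omega(\eta)\,(\mu_Y-E_0)^2-\mathcal O(\epsilon^2)$, so a bias $|\mu_Y-E_0|\ge 4\epsilon$ forces $V_Y\ge V^\ast+\Omega(\eta\epsilon^2)$, while $V_Y\le V^\ast+o(\eta\epsilon^2)$ forces $|\mu_Y-E_0|\le\epsilon$. The test therefore estimates $V_Y$ empirically and accepts iff $\hat V\le V^\ast+\tfrac12 c\,\eta\epsilon^2$ for the implied constant $c$.

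Both guarantees then follow by concentration. Since $Y$ is supported on $I$ of width $\tilde{\mathcal O}(\sigma)=\tilde{\mathcal O}(1)$, its fourth central moment is $\tilde{\mathcal O}(\sigma^4)$, so a Chernoff/Bernstein bound for the empirical variance gives additive accuracy $\Theta(\eta\epsilon^2)$ with failure probability $\delta$ once $K\gtrsim \sigma^4/(\eta\epsilon^2)^2=\tilde{\mathcal O}(\eta^{-2}\epsilon^{-4})$, the $\log\delta^{-1}$ entering as usual; the empirical mean concentrates to within $\epsilon/2$ of $\mu_Y$ using far fewer samples. Multiplying $K$ by the per-sample cost $\tilde{\mathcal O}(\eta^{-1})$ gives the claimed $\tilde{\mathcal O}(\eta^{-3}\epsilon^{-4}\log\delta^{-1})$ block-encoding uses. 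If $|\hat E_0'-E_0|\ge 4\epsilon$, then after absorbing the $\le\epsilon/2$ statistical mean error we have $|\mu_Y-E_0|\ge 7\epsilon/2$, so $V_Y$ exceeds the threshold by $\Omega(\eta\epsilon^2)$ and $\hat V$ does too with probability $\ge 1-\delta$, i.e.\ we reject. Conversely, acceptance ($\hat V\le$ threshold) forces $V_Y\le$ threshold $+\,\Theta(\eta\epsilon^2)$ with probability $\ge 1-\delta$, whence $|\mu_Y-E_0|\le\epsilon/2$ and $|\hat E_0'-E_0|\le\epsilon$; a union bound over the mean- and variance-estimation failures caps the total error probability by $\delta$.

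For the completeness claim I would take $\sigma\le\sigma_0=\Omega(\Delta_{\rm true}/\log(\epsilon^{-1}\eta^{-1}))$ so that $w<\Delta_{\rm true}/2$ and every excited eigenvalue lies outside $I$; the Gaussian tail then caps the total excited contribution to the conditional variance at $o(\eta\epsilon^2)$, which is exactly what forces the $\log(\epsilon^{-1}\eta^{-1})$ scale of $\sigma_0$ after dividing the tail by the overlap $\eta$. Consequently $Y$ is a single truncated Gaussian up to a perturbation below the threshold margin, $V_Y=V^\ast\pm o(\eta\epsilon^2)$ lies under the threshold, and $|\mu_Y-E_0|\le\epsilon$ by Lemma~\ref{lem:truncated_gaussian_mean_ideal}, so we accept with probability $1-\delta$. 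The main obstacle is the variance--bias inequality itself: the within-component term $\sum_j q_j v_j$ can drop below $V^\ast$ when displacing mass comes from eigenvalues whose Gaussians are truncated near an edge of $I$, so one cannot simply discard it. I would resolve this by a case analysis showing that precisely such edge components sit at distance $\Theta(w)\gg\sigma$ from $E_0$ and therefore inflate the between-component term by strictly more than the within-component deficit, so that $V_Y\ge V^\ast+\Omega(\eta)(\mu_Y-E_0)^2-\mathcal O(\epsilon^2)$ holds uniformly over the unknown gap and spectrum. Establishing this uniformly, with constants compatible with the $4\epsilon$-versus-$\epsilon$ promise gap, is the crux of the proof.
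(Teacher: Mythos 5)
There is a genuine gap, and it sits exactly where you flag it: the uniform variance--bias inequality $V_Y \ge V^\ast + \Omega(\eta)\,(\mu_Y - E_0)^2 - \mathcal{O}(\epsilon^2)$ is not just unproven in your write-up, it is false as stated. Consider a spectrum with the ground state at $E_0$ carrying weight exactly $\eta$, and a heavy cluster of excited eigenvalues (weight $1-\eta$) sitting right at the window edge $\hat{E}_0 + w$. Conditioning on $I$ keeps roughly half of the edge cluster's mass, so the conditional weights are $q_0 \approx 2\eta$ and $q_e \approx 1 - 2\eta$; each edge Gaussian, truncated at its own mean, has within-component variance $\approx (1-2/\pi)\sigma^2 \approx 0.36\,\sigma^2$. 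The within-component deficit relative to $V^\ast \approx \sigma^2$ is therefore $\approx 0.64\,q_e \sigma^2$, while the between-component gain is $q_0 q_e D'^2 \approx 2\eta\, w^2 \approx 4\eta\sigma^2 \log\lrb{\sigma/\epsilon}$. Whenever $\eta \lesssim 1/\log\lrb{\sigma/\epsilon}$ (perfectly allowed by the hypotheses), the gain is smaller than the deficit: the conditional variance drops \emph{below} $\sigma^2$ even though the conditional mean is displaced by $\approx w \gg 4\epsilon$. Your one-sided test $\hat V \le V^\ast + \tfrac12 c\,\eta\epsilon^2$ would accept this grossly wrong estimate, breaking soundness; and one can tune $\eta$ so that the deficit and gain cancel, defeating even a two-sided test $|\hat V - \sigma^2| \le \Theta(\eta\epsilon^2)$. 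The problem is structural: the between-component inflation is weighted by the ground state's conditional probability $q_0 = \Theta(\eta)$, whereas the truncation deficit of edge components is not, so no case analysis of the kind you sketch can close the argument uniformly in $\eta$.

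The paper escapes this with an ingredient your proposal omits: a \emph{peakedness pre-test}. Its certification algorithm (Algorithm in Appendix~\ref{app:certification}) first empirically estimates the mass $\int_{\hat{E}_0+L/2}^{\hat{E}_0+2L} (p\ast n_\sigma)(x)\,dx$ in the region straddling the window edge and rejects outright unless it is $\mathcal{O}(\epsilon^2\eta^2)$. This check kills the counterexample above (the edge cluster deposits $\Theta(1)$ mass there) and, more importantly, is precisely the hypothesis of Lemma~\ref{lem:contribution_small} and Corollary~\ref{cor:moments_similar}, which show that \emph{conditioned on passing the test} the first two moments of the windowed variable agree, up to $\mathcal{O}(\epsilon^2\eta^2)$, with those of a mixture of \emph{untruncated} Gaussians whose means lie well inside the window. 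Only then does the clean law-of-total-variance argument (Lemma~\ref{lem:variance_exceeds}), which is the statement you are implicitly trying to extend to truncated mixtures, apply. Your sample-complexity accounting, the rejection-sampling setup, and the completeness argument for $\sigma \le \sigma_0$ all match the paper, but without the peakedness stage the acceptance criterion does not certify anything.
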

We defer all proofs and detailed descriptions of the algorithm to Appendix~\ref{app:certification}. Note that the minimum evolution time required for the test to accept, as implied by Eq.~\eqref{equ:minimal_time}, is of the same order as the time at which our algorithm is guaranteed to succeed. However, the sample complexity of the test is higher than that of the algorithm at this depth. As a result, the test is advantageous only when we cannot fully rely on our estimate of the spectral gap or when we have reason to believe that the input state allows for a good estimate with shorter circuit depths.

Our algorithm is based on empirically estimating the variance of a random variable with a certain precision. As is typical in such protocols, the empirical estimate has a tolerated error that depends on the number of samples, up to a specified failure probability. Notably, our protocol consistently achieves accurate rejections when variance estimates fall within the tolerated error range. Therefore, the parameter \( \delta \) quantifies the probability that the estimates deviate beyond the acceptable threshold for verification, meaning that we cannot guarantee performance when inaccurate estimates are provided.

Additionally, the test can be combined with a binary search procedure to determine \( \sigma_0 \), the largest \( \sigma \) for which the test accepts. This approach could yield reliable ground-state energy estimates with depths potentially much smaller than \( \Delta_{\rm true}^{-1} \). However, the sample complexity of this algorithm would increase to \( \tilde{\mathcal{O}}(\epsilon^{-4} \eta^{-3}) \), which may still be advantageous if we aim to optimize the maximum circuit depth.

  \begin{center}
    \begin{minipage}{0.35\columnwidth}
    \centering
    \begin{tikzpicture}[scale=1.5]
        \draw[->] (-2,0) -- (1,0) node[right] {$x$};
        \draw[->] (0,-0.5) -- (0,1.5) node[above] {$y$};
        \draw[domain=-2:1,samples=200,smooth,variable=\x,blue] plot ({\x},{exp(-\x*\x/0.05)+0.3*exp(-(\x+1)*(\x+1)/0.05)});
        \draw[<-] (-1.05,0.36) -- (-1.7,0.7) node[left] {$\hat{E}_0$};
        \node[above right] at (0.2,1) {Peaked};
        \node[above right] at (0.2,0.5) {\color{green}{Check width}};
      
      \end{tikzpicture}
    \captionof{figure}{Example of density of mixture of Gaussians. In this case, the density quickly goes to $0$ around our estimate $\hat{E}_0$. We then continue to check the variance conditioned around $\hat{E}_0$ to see if we accept the estimate.}
    \label{fig:scheme1}
    \end{minipage}
\hspace{0.05\textwidth}
\begin{minipage}{0.35\columnwidth}
    \centering
    \begin{tikzpicture}[scale=1.5]
        \draw[->] (-2,0) -- (1,0) node[right] {$x$};
        \draw[->] (0,-0.5) -- (0,1.5) node[above] {$y$};
        \draw[domain=-2:1,samples=200,smooth,variable=\x,blue] plot ({\x},{exp(-(\x+0.3)*(\x+0.3)/0.05)+0.3*exp(-(\x+1)*(\x+1)/0.05)});
        \draw[<-] (-1.05,0.36) -- (-1.7,0.7) node[left] {$\hat{E}_0$};
        \node[above right] at (0.2,1) {Not peaked};
        \node[above right] at (0.2,0.5) {\color{red}{Reject}};
      
      \end{tikzpicture}
      \captionof{figure}{Example of density of mixture of Gaussians. In this case, the density quickly \emph{does not} go to $0$ quickly around $\hat{E}_0$. We reject the estimate.}
      \label{fig:scheme2}
    \end{minipage}
    \hspace{0.05\textwidth}
\begin{minipage}{0.35\columnwidth}
    \centering
    \begin{tikzpicture}[scale=1.5]
        \draw[->] (-2,0) -- (1,0) node[right] {$x$};
        \draw[->] (0,-0.5) -- (0,1.5) node[above] {$y$};
        \draw[domain=-2:1,samples=200,smooth,variable=\x,blue] plot ({\x},{exp(-\x*\x/0.05)+0.3*exp(-(\x+1)*(\x+1)/0.05)});
        \draw[<-] (-1.05,0.36) -- (-1.7,0.7) node[left] {$\hat{E}_0$};
        \draw[green] (-1,0.3) circle [radius=0.5];
        \node[above right] at (0.2,1) {\color{black}{Condition}};
        \node[above right] at (0.2,0.7) {\color{black}{around peak}};
      
      \end{tikzpicture}
    \captionof{figure}{As the distribution is peaked around $\hat{E}_0$, we condition around it in the next step of the verification.}
    \label{fig:scheme3}    
    \end{minipage}
    \hspace{0.05\textwidth}
\begin{minipage}[t]{0.35\columnwidth}
    \centering
    \begin{tikzpicture}[scale=1.5]
        \draw[->] (-2,0) -- (-0.4,0) node[right] {$x$};
        \draw[domain=-2:-0.5,samples=200,smooth,variable=\x,blue] plot ({\x},{exp(-\x*\x/0.05)+1*exp(-(\x+1)*(\x+1)/0.05)});
        \draw[<-] (-1.05,1.1) -- (-1.7,1.7) node[left] {$\hat{E}_0$};
        %\draw[green] (-1,0.3) circle [radius=0.5];
        \node[above right] at (-0.4,1) {\color{black}{Estimate}};
        \node[above right] at (-0.4,0.7) {\color{black}{variance}};
        \draw[<->] (-0.8,0.2) -- (-1.2,0.2) node[left] {$\sigma^2$};
      
      \end{tikzpicture}
    \captionof{figure}{We estimate the variance of the random variable conditioned around $\hat{E}_0$. If it is close to $\sigma^2$, we accept it. If it differs from $\sigma^2$ by $\sim\eta\epsilon^2$, we reject it.}
    \label{fig:scheme4}
    \end{minipage}
    \end{center}

\section{Conclusion and outlook}
\label{sec:conclusion}
To summarize, we have introduced a quantum algorithm for estimating the ground-state energy of a Hamiltonian using fewer operations per circuit than previous methods, making it more suitable for early fault-tolerant quantum computation. The circuit depth required by our algorithm depends on the spectral gap of the Hamiltonian rather than the target accuracy, and its overall runtime is shorter than that of previous algorithms with similar circuit depths. 

Furthermore, this algorithm can be adjusted to trade circuit depth for runtime, and its runtime can be further reduced by parallelizing sample collection across multiple quantum devices (i.e., trading time resources for space resources). These features make our algorithm a strong candidate for achieving quantum advantage in industrially-relevant problems on early fault-tolerant quantum computers \cite{wang2019accelerated, wang2021minimizing, lin2022heisenberg, dong2022ground, zhang2022computing, wang2022state, wang2023quantum, wang2023qubit}.

Additionally, we have proposed a novel method for performing rejection sampling on quantum computers and devised the first approach to certifying ground-state energy estimates in the regime of circuit depth \( \tilde{\mathcal{O}}(\Delta_{\rm true}^{-1}) \). While the current method has a high sample complexity, it enables certifiably correct ground-state energy estimates from arbitrarily small circuit depths, depending on the properties of the initial state.

Finally, we would like to highlight several research directions that merit further exploration:

\begin{itemize}

\item In this work, we have focused on GSEE in the regime of circuit depth \( \ge 1/\Delta_{\rm true} \). However, it remains unclear how challenging this problem is when the circuit depth is \( \ll 1/\Delta_{\rm true} \). Is it possible to devise an algorithm with reasonable runtime (e.g., \( \operatorname{poly}(1/\epsilon, 1/p_0) \)) in this setting?

\item While we have made progress in developing novel low-depth quantum algorithms for GSEE, an important future direction is to establish lower bounds on the runtimes of such algorithms, particularly in the case of \( \tilde{\mathcal{O}}(\Delta^{-1}) \) circuit depth. This would deepen our understanding of the strengths and limitations of early fault-tolerant quantum computers for simulating quantum systems. 

\item Estimating ground state properties beyond their energies is essential in many applications. However, this problem has not yet been extensively studied in the low-depth regime. An exception is Ref.~\cite{zhang2022computing}, which develops a low-depth algorithm for estimating the expectation value of an observable \( O \) with respect to the ground state of a Hamiltonian \( H \). This problem is known as \emph{ground state property estimation} (GSPE). To estimate \( \bra{E_0} O \ket{E_0} \) within additive error \( \epsilon \) with high probability, the algorithm of Ref.~\cite{zhang2022computing} requires circuit depth \( \tilde{\mathcal{O}}(\Delta^{-1} \operatorname{polylog}(\epsilon^{-1} p_0^{-1})) \) and overall runtime \( \tilde{\mathcal{O}}(\Delta^{-1} \epsilon^{-2} p_0^{-2}) \). Notably, this runtime is worse than our runtime for ground-state energy estimation. Since GSEE can be viewed as a special case of GSPE when \( O = H \), could the techniques developed in this paper improve the efficiency of GSPE?

\item Can we design more efficient methods for certifying the correctness of ground-state energy estimates? Our current certification method has a quadratically higher sample complexity than our GSEE algorithm. If we could develop a certification method with the same sample complexity as the GSEE algorithm, we could employ a binary search to find the minimal evolution time required for reliable ground-state energy estimates without additional assumptions.

\item Finally, we expect that our rejection sampling method could find applications beyond ground-state energy certification. In many contexts, one needs to generate samples from a continuous distribution \( X \) with probability density \( \nu(x) \). The traditional approach is to prepare a quantum state approximating \( \int_\Omega \sqrt{\nu(x)} \ket{x}  dx \) and then measure it. However, this approach has two drawbacks. First, creating this state may be challenging. Second, it requires discretizing the target distribution, where the number of grid points (i.e., the system's dimension) depends on the desired accuracy. Our approach avoids this discretization by encoding the sample value directly into the circuit parameter. Moreover, our circuit might be easier to implement than the direct approach and may require only \( \mathcal{O}(1) \) qubits, regardless of the target accuracy. Provided that rejection sampling is efficient (i.e., few raw samples are discarded), our method could accelerate this central component of many algorithms, potentially enhancing its suitability for near-term implementations.

\end{itemize}

We have contributed to the growing body of research on quantum algorithms for early fault-tolerant quantum computers. We hope that these results will inspire further exploration toward discovering the quantum algorithms that will achieve quantum advantage first.

\section*{Acknowledgements}
We thank the anonymous referees for their valuable comments on earlier versions of this paper and for suggesting the alternative approach discussed in Section \ref{subsec:alternative_method}.

\appendix 

\section{Minimizing degree in trigonometric approximations}
\label{sec:minimize_degree}
To utilize QET-U to implement the operations $V_{a,b}$ in Algorithm \ref{alg:basic_gsee} and $V$ in Algorithm \ref{alg:gsee_ref}, we need to find  trigonometric polynomials of the form $g(x) \defeq \sum_{j=0}^d \alpha_j \mycos{jx}$ that satisfy certain conditions. Specifically, these conditions require that
\begin{itemize}
    \item \( |g(x)| \le 1 \) for \( x \in [0, \pi] \);
    \item \( |g(x)| \le \epsilon \) for \( x \in [l_0, r_0] \), where $0 \le l_0<r_0 \le \pi$ and $\epsilon \in (0, 1)$ are given;  
    \item (for $V_{a,b}$ only) \( |g(x) - 1| \le \epsilon \) for \( x \in [l_1, r_1] \), where $0 \le l_1<r_1 \le \pi$ and $\epsilon \in (0, 1)$ are given;
    \item (for $V$ only) $g(x) \ge c'$ for $x \in [l', r']$, where $0 \le l'<r' \le \pi$ and $c' \in (0,1)$ are given.
\end{itemize}
Although Lemma \ref{lem:fourier_approx_threshold} provides explicit constructions for such polynomials with degree $\mytO{\frac{\mylog{1/\epsilon}}{\Delta}}$, this degree might be unnecessarily large. Instead, we can find the minimal-degree trigonometric polynomial that satisfies the above conditions by numerically solving a sequence of linear programming problems. This approach helps reduce the cost of implementing the operations $V$ and $V_{a,b}$. 

Precisely, for any given $d$, we can numerically test whether there exists a trigonometric polynomial of degree at most $d$ that fulfill the above conditions. First, we select a dense set of grid points ${x_1, x_2, \dots, x_N} \subset [0, \pi]$ \footnote{In our experiments, we choose a grid of $3000$  points in $[0,\pi]$ and find that this is sufficient for our purpose. This choice ensures that the trigonometric polynomial found using this grid meets the conditions not only at the grid points but also at non-grid points within the interval.}. Next, we define a linear programming problem as follows. The variables are $\alpha_0, \alpha_1, \dots, \alpha_d$ and $\delta$. We then introduce the following linear constraints based on the conditions above:
\begin{itemize}
    \item $-1 \le \sum_{j=0}^d \alpha_j \mycos{j x_i} \le 1$, for each $x_i$;
    \item $-\delta \le \sum_{j=0}^d \alpha_j \mycos{j x_i} \le \delta$, for each $x_i \in [l_0, r_0]$;
    \item (for $V_{a,b}$ only)
$\sum_{j=0}^d \alpha_j \mycos{j x_i} \ge 1-\delta$, for each $x_i \in [l_1, r_1]$;  
    \item (for $V$ only)
$\sum_{j=0}^d \alpha_j \mycos{j x_i} \ge c'$, for each $x_i \in [l', r']$.
\end{itemize}
The objective function is to minimize $\delta$. This problem can be efficiently solved using software packages like CVXPY. Let $\delta^*$ denote the optimal value. Then there exists a trigonometric polynomial of degree at most $d$ that meets the desired conditions if and only if $\delta^* \leq \epsilon$. Finally, we perform a binary search to find the smallest degree $d$ for which such a polynomial exists. We note that a similar approach was proposed in Ref.~\cite{dong2022ground}.

\section{Implementing Gaussian functions of the Hamiltonian}
\label{sec:implement_gaussian}
In this appendix, we demonstrate how to use QET-U to implement a block-encoding of a Gaussian function of the Hamiltonian \( H \). Specifically, given \( \sigma, \epsilon > 0 \) and \( \xi \in \mathbb{R} \), our goal is to construct an \( (n + m) \)-qubit unitary operation \( V_{\sigma, \xi} \) such that
\begin{align}
    \norm{\lrb{\bra{0^m} \otimes I} V_{\sigma,\xi} \lrb{\ket{0^m} \otimes I} - e^{-\frac{(H-\xi I)^2}{2\sigma^2}}} \le \epsilon, 
\end{align}
by using controlled time evolution of \( H \) along with elementary quantum gates.

We will use QET-U to accomplish this goal. To do so, we need to find a low-degree trigonometric approximation of Gaussian functions. Specifically, we claim that:

\begin{lemma}
For every $\sigma, \epsilon \in (0, 1)$, there exist an efficiently-computable even real polynomial $G_{\sigma;\epsilon}(x)$ of degree $\mathcal{O}(\sigma^{-1} T \sqrt{\mylog{1/\epsilon}})$, where $T=\max(2\pi, \Theta(\sigma \sqrt{\mylog{1/\epsilon}}))$, such that 
\begin{itemize}
    \item $|G_{\sigma;\epsilon}(\cos(\pi x/T))| \le 1$, for all $x \in \R$;
    \item $|G_{\sigma;\epsilon}(\cos(\pi x/T)) - e^{-x^2/(2\sigma^2)}| \le \epsilon$, for all $x \in [-\pi, \pi]$.
\end{itemize}
\label{lem:trigonometric_approx_gaussian_func}
\end{lemma}
The proof of this lemma will be provided later. Combining Lemma~\ref{lem:qetu} and Lemma~\ref{lem:trigonometric_approx_gaussian_func} yields:
\begin{corollary}
Suppose $H$ is a Hamiltonian with $\norm{H} \le \pi$. Then for every $\sigma, \epsilon \in (0, 1)$, we can implement a $(1, 1, \epsilon)$-block-encoding of $e^{-H^2/(2\sigma^2)}$ by using controlled evolutions of $H$  
for a total time of $\myO{\sigma^{-1} \sqrt{\mylog{1/\epsilon}}}$, 
along with $\mathcal{O}(\sigma^{-1} T \sqrt{\mylog{1/\epsilon}})$ primitive quantum gates, where $T=\max(2\pi, \Theta(\sigma \sqrt{\mylog{1/\epsilon}}))$.
\label{cor:qetu_circuit_gaussian}
\end{corollary}

Note that for any $\xi, t \in \R$, we can implement controlled-$e^{\i(H-\xi I)t}$ by using one controlled-$e^{\i Ht}$ and one $Z$-rotation gate on the control qubit. Combining this fact and Corollary \ref{cor:qetu_circuit_gaussian}, we conclude that:

\begin{corollary}
Suppose $H$ is a Hamiltonian with $\norm{H} \le \pi/2$. Then for every $\sigma, \epsilon \in (0, 1)$, $\xi \in [-\pi/2, \pi/2]$, we can implement a $(1, 1, \epsilon)$-block-encoding of $e^{-(H-\xi I)^2/(2\sigma^2)}$ by using controlled evolutions of $H$  
for a total time of $\myO{\sigma^{-1} \sqrt{\mylog{1/\epsilon}}}$, 
along with $\mathcal{O}(\sigma^{-1} T \sqrt{\mylog{1/\epsilon}})$ primitive quantum gates, where $T=\max(2\pi, \Theta(\sigma \sqrt{\mylog{1/\epsilon}}))$.
 \label{cor:qetu_circuit_shifted_gaussian}
\end{corollary}
This allows us to implement the block-encoding of $h_{\sigma}(H-\xi I)$ in Section \ref{subsec:rejection_sampling}, which is a key component of our ground-state energy certification algorithm.

\begin{proof}[Proof of Lemma \ref{lem:trigonometric_approx_gaussian_func}]
    We claim that for every $\sigma, \epsilon \in (0, 1)$, there exist $T=\max(2\pi, \Theta(\sigma \sqrt{\mylog{1/\epsilon}}))$, $N=\mathcal{O}(\sigma^{-1} T \sqrt{\mylog{1/\epsilon}})$ and
    $a_0, a_1, \dots, a_N \in \R^+$ such that $\sum_{j=0}^N a_j \le 1$ and  
   \begin{align}
   \abs{\sum_{j=0}^N a_j \cos(2\pi j x/T) - e^{-x^2/(2 \sigma^2)}} \le \epsilon,~&~\forall x \in [-\pi, \pi].       
   \end{align}

If this claim holds, then since $\cos(2\pi j x/T) = {\cal T}_{2j}(\cos(\pi x /T))$, where ${\cal T}_{2j}(x)$ is the $2j$-th Chebyshev polynomial of the first kind, $\sum_{j=0}^N a_j \cos(2 \pi jx/T)$ can be written as $G_{\sigma; \epsilon}(\cos(\pi x/T))$ for some even real polynomial $G_{\sigma; \epsilon}(x)$ of degree $2N={\mathcal O}(\sigma^{-1}T\sqrt{\mylog{1/\epsilon}})$. So we get 
\begin{align}
\abs{G_{\sigma; \epsilon'}(\cos(\pi x/T)) - e^{-x^2/(2\sigma^2)}} \le \epsilon, ~&~\forall x \in [-\pi, \pi].    
\end{align}
Furthermore, since $a_0, a_1, \dots, a_N \ge 0$, we have 
\begin{align}
G_{\sigma; \epsilon}(\cos(\pi x/T))=\sum_{j=0}^N a_j \cos(2\pi j x/T) \le \sum_{j=0}^N a_j \le 1,~&~\forall x \in \R.
\end{align}
Therefore $G_{\sigma; \epsilon}(x)$ satisfies all the requirements of Lemma \ref{lem:trigonometric_approx_gaussian_func}.

The proof of the above claim is inspired by the Nyquist–Shannon sampling theorem. Specifically, let $f(x) \defeq \sum_{k=-\infty}^\infty \delta(x-k T)$ be the Dirac comb with period $T$. It has Fourier transform $\hat{f}(\xi) = \frac{1}{T}\sum_{n=-\infty}^\infty \delta(\xi - n/T)$. Meanwhile, let $g(x) \defeq e^{-x^2/(2\sigma^2)}$ be Gaussian with mean $0$ and variance $\sigma^2$. It has Fourier transform $\hat{g}(\xi)=\sqrt{2\pi}\sigma e^{-2\pi^2 \sigma^2 \xi^2}$. Convolving $f$ and $g$ yields
\begin{align}
(f * g)(x) = \sum_{k=-\infty}^\infty g(x-kT).
\label{eq:conv1}
\end{align}
On the other hand, the Fourier transform of $f*g$ is 
\begin{align}
\widehat{f*g}(\xi) = \hat{f}(\xi) \hat{g}(\xi)=  \frac{1}{T}\sum_{n=-\infty}^\infty  \hat{g}(n/T) \delta(\xi - n/T).
\end{align}
Applying inverse Fourier transform to both sides of this equation leads to
\begin{align}
(f * g)(x) = \frac{1}{T}\sum_{n=-\infty}^\infty  \hat{g}(n/T) e^{\i 2\pi x n/T}. 
\label{eq:conv2}
\end{align}
Comparison of Eqs.~\eqref{eq:conv1} and \eqref{eq:conv2} indicates that
\begin{align}
    \sum_{k=-\infty}^\infty g(x-kT) =  \frac{1}{T}\sum_{n=-\infty}^\infty  \hat{g}(n/T) e^{\i 2\pi x n/T}. 
    \label{eq:conv3}
\end{align}
Now we claim that for some $T=\max(2\pi, \Theta(\sigma \sqrt{\mylog{1/\epsilon}}))$ and $N={\mathcal O}(\sigma^{-1} T \sqrt{\mylog{1/\epsilon}})$, 
\begin{align}
     \sum_{k=-\infty}^{-1} g(x-kT) + \sum_{k=1}^\infty g(x-kT)  \le \frac{\epsilon}{6}, ~&~\forall x \in [-\pi, \pi],
\label{eq:trunc_err1}
\end{align}
and
\begin{align}
    \abs{ \frac{1}{T}\sum_{n=-\infty}^{-N-1}  \hat{g}(n/T) e^{\i 2\pi x n/T}
+    \frac{1}{T}\sum_{n=N+1}^\infty  \hat{g}(n/T) e^{\i 2\pi x n/T}}
\le  \frac{\epsilon}{6},~&~\forall x \in \R.
\label{eq:trunc_err2}
\end{align}
If these claims are true, then combining them and Eq. \eqref{eq:conv3} yields
\begin{align}
\abs{ g(x) - \frac{1}{T}\sum_{n=-N}^N  \hat{g}(n/T) e^{\i 2\pi x n/T}}
=\abs{ g(x) - \hat{g}(0) - \frac{2}{T}\sum_{n=1}^N  \hat{g}(n/T) \cos(2\pi x n/T)} \le \frac{\epsilon}{3}, ~&~\forall x \in [-\pi, \pi].
\end{align}
This also implies that 
\begin{align}
\hat{g}(0) + \frac{2}{T}\sum_{n=1}^N  \hat{g}(n/T) \le 1+\frac{\epsilon}{3}.    
\end{align}
Now let $a_0=(1-\epsilon/3)\hat{g}(0)$ and $a_j=2T^{-1}(1-\epsilon/3)\hat{g}(j/T)$ for $j=1,2,\dots,N$. 
Then we have $a_0, a_1, \dots, a_N >0$ and
\begin{align}
    \sum_{j=0}^N a_j = (1-\epsilon/3) \lrb{\hat{g}(0) + \frac{2}{T}\sum_{n=1}^N  \hat{g}(n/T)} \le (1-\epsilon/3)(1+\epsilon/3) < 1,
\end{align}
and for any $x \in [-\pi, \pi]$,
\begin{align}
       \abs{\sum_{j=0}^N a_j \cos(2\pi j x/T) - g(x)}
       &\le
       \abs{\sum_{j=0}^N a_j \cos(2\pi j x/T) - \hat{g}(0) - \frac{2}{T}\sum_{n=1}^N  \hat{g}(n/T) \cos(2\pi x n/T)} \nonumber\\
       &\quad + \abs{\hat{g}(0) + \frac{2}{T}\sum_{n=1}^N  \hat{g}(n/T) \cos(2\pi x n/T) - g(x)}\\
        &\le \frac{\epsilon}{3} \cdot \abs{ \hat{g}(0) + \frac{2}{T}\sum_{n=1}^N  \hat{g}(n/T) \cos(2\pi x n/T)} + \frac{\epsilon}{3}\\
       &\le \frac{\epsilon}{3} \lrb{\hat{g}(0) + \frac{2}{T}\sum_{n=1}^N  \hat{g}(n/T) } + \frac{\epsilon}{3} \\
       &\le \frac{\epsilon}{3} \lrb{1+\frac{\epsilon}{3}} + \frac{\epsilon}{3}\\
       &\le \epsilon,
\end{align}
as desired.

It remains to prove Eqs.~\eqref{eq:trunc_err1} and \eqref{eq:trunc_err2}. To prove the former, we impose the constraint $T \ge 2\pi$. This means that $|x|\le T/2$ for all $x \in [-\pi, \pi]$. Then it follows that
\begin{align}
    \sum_{k=-\infty}^{-1} g(x-kT) + \sum_{k=1}^\infty g(x-kT)  
   & \le 2 \sum_{l=0}^{\infty} g((l+1/2)T)\\
   & = 2 \sum_{l=0}^{\infty} \myexp{-(l+1/2)^2T^2/(2 \sigma^2)} \\
   & = 2\myexp{-T^2/(8 \sigma^2)} \sum_{l=0}^{\infty} \myexp{-(l^2+l)T^2/(2 \sigma^2)} \\
   & \le 2\myexp{-T^2/(8 \sigma^2)} \sum_{l=0}^{\infty} \myexp{-l T^2/\sigma^2} \\
   & = \frac{2\myexp{-T^2/(8 \sigma^2)} }{1 - \myexp{-T^2/\sigma^2} }.
   \label{eq:trunc_err1_bound}
\end{align}
By picking some $T=\Theta(\sigma \sqrt{\mylog{1/\epsilon}})$, we can ensure that the RHS of Eq.~\eqref{eq:trunc_err1_bound} is at most ${\epsilon}/{6}$. Overall, Eqs.~\eqref{eq:trunc_err1} holds as long as we pick some $T=\max(2\pi, \Theta(\sigma \sqrt{\mylog{1/\epsilon}}))$. 

To prove Eq. \eqref{eq:trunc_err2}, we note that
\begin{align}
    \abs{ \frac{1}{T}\sum_{n=-\infty}^{-N-1}  \hat{g}(n/T) e^{\i 2\pi x n/T}
+    \frac{1}{T}\sum_{n=N+1}^\infty  \hat{g}(n/T) e^{\i 2\pi x n/T}}
&\le      \frac{1}{T}\sum_{n=-\infty}^{-N-1}  \hat{g}(n/T)
+    \frac{1}{T}\sum_{n=N+1}^\infty  \hat{g}(n/T) \\
&= \frac{2}{T}\sum_{n=N+1}^\infty  \hat{g}(n/T) \\
&= \frac{2\sqrt{2\pi} \sigma}{T} \sum_{n=N+1}^\infty \myexp{-2 \pi^2 \sigma^2 n^2/T^2} \\
&\le \frac{2\sqrt{2\pi} \sigma}{T} \int_{N}^\infty \myexp{-2 \pi^2 \sigma^2 y^2/T^2} dy \\
&= \frac{2}{\sqrt{\pi}} \int_{\sqrt{2}\pi \sigma N/T}^\infty \myexp{-z^2} dz \\
&= \erfc(\sqrt{2}\pi \sigma N/T) \\
&\le \myexp{-2 \pi^2 \sigma^2 N^2 / T^2}. 
\end{align}
By picking some $N=\mathcal{O}(\sigma^{-1} T \sqrt{\mylog{1/\epsilon}})$, we can guarantee that
$\myexp{-2 \pi^2 \sigma^2 N^2 / T^2} \le \epsilon/6$ and hence Eq. \eqref{eq:trunc_err2} holds for all $x \in \R$.

\end{proof}

\section{Certification of ground state energy estimates}\label{app:certification}

Our algorithm has the advantageous property of requiring only circuit depths that are inversely proportional to the spectral gap of the Hamiltonian. However, in many practical applications, the spectral gap may not be known in advance, and it may be necessary to verify the accuracy of the algorithm’s output.

Moreover, as we will explain shortly, depending on the structure of the initial state, the algorithm may still produce a correct estimate even with evolution times shorter than the inverse of the spectral gap. If we suspect that we have such a structured initial state, it is useful to have a method for verifying the correctness of the output.

We now outline a test that, given \( \sigma > 0 \), an estimate \( \hat{E}_0 \) of the ground state energy \( E_0 \) accurate within \( \sigma \), a precision tolerance \( \epsilon \), a failure probability \( \delta > 0 \), and a lower bound \( \eta \) on the initial overlap \( p_0 \), generates a new estimate $\hat{E}'_0$ of $E_0$, and certifies whether \( |\hat{E}'_0 - E_0| \leq \epsilon \), with a failure probability of at most \( \delta \) for incorrect certification. The quantum circuits required to run this algorithm have a depth of \( \tilde{\mathcal{O}}(\sigma^{-1}) \). Note that although the algorithm may occasionally reject correct estimates, it will consistently reject incorrect estimates with high probability. Moreover, we will demonstrate that if \( \sigma \) is proportional to the spectral gap \( \Delta_{\rm true} \), the algorithm is guaranteed to accept with probability at least \( 1 - \delta \), although it may also accept for larger values of \( \sigma \) (or shorter circuit depths) for more structured states.

Additionally, note that we do \emph{not} certify that the spectral gap of the Hamiltonian is at least some value \( \Delta \); rather, we certify only that the ground state energy estimate is accurate within \( \epsilon \), under the promise that the initial state has an overlap of at least \( \eta > 0 \) with the ground space. Certifying the spectral gap would require further assumptions about the initial state and its overlap with the first excited state. To clarify this point, consider the case where the initial state is \( \rho = \ketbra{E_0}{E_0} \), i.e., the ground state. In this scenario, it is straightforward to see that our algorithm yields the correct estimate with controlled evolution times of order \( \mathcal{O}(1) \). Thus, given this initial state and sufficient samples, our algorithm will always produce the correct estimate, regardless of circuit depth. Conversely, it is evident that the state \( \rho \) provides no information about the spectral gap of the Hamiltonian. In summary, our test will certify only that the energy estimate lies within an acceptable error margin.

Consider the mixture of Gaussians $p\ast n_\sigma$. If $\sigma$ is small enough to make sure that the contribution of the other energies in $n_\sigma$ at $E_0$ is sufficiently suppressed, then we have a peak close to the ground state energy. Otherwise, we have pollution from other energies, and they will make the peak around $E_0$ wider than if not. Thus, our test will consist on first making sure that we have a peak close to $\hat{E}_0$ and then measuring the width of the peak.

The following result about mixtures of Gaussians will be useful. It shows that if the mean of a Gaussian mixture \( p \ast n_\sigma \) deviates significantly from the minimum mean of its components, the collective variance will exceed the individual variances \( \sigma^2 \) of those components.

\begin{lemma}\label{lem:variance_exceeds}
Let $X=\sum_ip_iN_i$, where $N_i\sim\mathcal{N}(E_i,\sigma^2)$ are normal random variables with the same variance $\sigma^2$ but different means $E_i$. Furthermore, assume $E_0<E_1<\cdots<E_m$ and that $p_0\geq \eta$.
If $|\mathbb{E}(X)-E_0|\geq c\epsilon$ for some constants $c>1,\epsilon>0$, then:
\begin{align}\label{equ:variance_violated}
\mathbb{E}(|X-\mathbb{E}(X)|^2)\geq \sigma^2+\eta c^2\epsilon^2
\end{align}
\end{lemma}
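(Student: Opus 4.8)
The plan is to exploit the fact that $X$ is a Gaussian mixture, so that its variance splits cleanly into a \emph{within-component} piece and a \emph{between-component} piece. Writing the density of $X$ as $\sum_i p_i \phi_i$, where $\phi_i$ is the $\mathcal{N}(E_i, \sigma^2)$ density, I would first compute the first two moments directly. Using that each component has mean $E_i$ and second moment $\sigma^2 + E_i^2$, one gets $\mathbb{E}(X) = \sum_i p_i E_i =: \bar{E}$ and $\mathbb{E}(X^2) = \sum_i p_i(\sigma^2 + E_i^2) = \sigma^2 + \sum_i p_i E_i^2$.

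Subtracting then yields the key identity
\[
\mathbb{E}(|X - \mathbb{E}(X)|^2) = \sigma^2 + \sum_i p_i (E_i - \bar{E})^2,
\]
which is nothing but the law of total variance: the common component variance $\sigma^2$ plus the variance of the discrete random variable taking value $E_i$ with probability $p_i$. This identity carries essentially all of the structural content of the lemma, and once it is established the argument is almost complete.

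The final step is to lower-bound the between-component term by retaining only the $i = 0$ summand:
\[
\sum_i p_i (E_i - \bar{E})^2 \ge p_0 (E_0 - \bar{E})^2 \ge \eta (c\epsilon)^2,
\]
where the first inequality discards the nonnegative terms with $i \ge 1$, and the second uses $p_0 \ge \eta$ together with the hypothesis $|\bar{E} - E_0| = |\mathbb{E}(X) - E_0| \ge c\epsilon$. Combining the two displays gives $\mathbb{E}(|X - \mathbb{E}(X)|^2) \ge \sigma^2 + \eta c^2 \epsilon^2$, as required.

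I do not anticipate a serious obstacle here; the only point needing slight care is the mixture moment computation, specifically ensuring that the common variance $\sigma^2$ factors cleanly out of the within-component contribution and does not become entangled with the spread of the means. It is worth noting that the bound is tight: equality is approached when all of the weight beyond the ground-state component is pushed onto $E_0$ at exactly the prescribed distance, so the estimate cannot be sharpened without additional assumptions on the structure of the mixture.
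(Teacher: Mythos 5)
Your proof is correct and follows essentially the same route as the paper's: compute the mixture moments, use $\mathbb{E}(|X-\mathbb{E}(X)|^2)=\sigma^2+\operatorname{Var}(Y)$ where $Y$ is the discrete variable taking value $E_i$ with probability $p_i$, and lower-bound that discrete variance by the single term $p_0(E_0-\bar{E})^2\ge\eta c^2\epsilon^2$. The only quibble is your closing tightness remark, which is a loose aside (with $p_0=\eta>0$ fixed the inequality is in fact strict, and is only approached in the small-$\eta$ limit), but this does not affect the proof.
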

\begin{proof}
Recall the standard identity $\mathbb{E}(|X-\mathbb{E}(X)|^2)=\mathbb{E}(X^2)-\mathbb{E}(X)^2$. Let us compute these expectation values. Clearly, $\mathbb{E}(X)=\sum_ip_iE_i$. Furthermore,
\begin{align}
\mathbb{E}(X^2)=\sum_i p_i\mathbb{E}(N_i^2)=\sum_ip_i(\sigma^2+E_i^2).
\end{align}
Thus,
\begin{align*}
    \mathbb{E}(X^2)-\mathbb{E}(X)^2=\sigma^2+\sum_ip_iE_i^2-(\sum_ip_iE_i)^2.
\end{align*}
Now note that $\sum_ip_iE_i^2-(\sum_ip_iE_i)^2$ is nothing but the variance of a discrete random variable $Y$ that takes value $E_i$ with probability $p_i$ and $\mathbb{E}(Y)=\mathbb{E}(X)$. Then we have
\begin{align}
    \mathbb{E}(|Y-\mathbb{E}(Y)|^2)\geq \eta c^2\epsilon^2.
\end{align}
To see this, note that by our assumption that $|\mathbb{E}(X)-E_0|\geq c\epsilon$, with probability $p_0$ the difference $|Y-\mathbb{E}(Y)|^2$ will be at least $c^2\epsilon^2$, which yields the claim.
\end{proof}

Our test will be based on ``zooming into'' the peak around the given  estimate and testing whether the variance is larger than expected. That is, we will condition the distribution around $\hat{E}_0$ and estimate its variance.

But it is not given that conditioning around $\hat{E}_0$ will also yield a distribution that is a mixture of Gaussians. Thus, we will show that this is approximately the case if we have a well-defined peak in the distribution.

Before we prove our theorem showing how conditioning on an interval yields another mixture of Gaussians, let us again give some intuition. Given some peak $\hat{E}_0$ that we identified, we will truncate to some interval $[\hat{E}_0- L, \hat{E}_0+L]$, with 
\begin{align}\label{equ:size_L}
L=\sqrt{2c\myln{\sigma^{-1}\tau^{-1}}}\sigma
\end{align}
for some constants $c>1$, $0<\tau<1$ to be specified later. The mass of energies that are located to the right of $E_0+3L$ is negligible on the interval $[\hat{E}_0- L, \hat{E}_0+L]$, as it is at most $\myO{L\tau^{2c}}$. So we can safely discard these random variables when truncating to the interval. Furthermore, we will also approximate well all Gaussians whose means are located in $[\hat{E}_0- L/2, \hat{E}_0+L/2]$, as their mass outside this interval is $\myO{\tau^{c/4}}$. 

Thus, the only energies that ``pose a threat" of pollution are those that have substantial weight on the edge of our interval $[\hat{E}_0-L,\hat{E}_0+L]$. This is where we use the assumption that we found a peak: we will impose that the mass going around the edges of the intervals is small. Note that only the right edge of the interval has to be considered, as we assume in our algorithm that we know the ground state energy up to $\sigma$ and, thus, there are no energies to the left of the interval.
More formally, we will impose:
\begin{align}\label{equ:integral_small}
    \int_{\hat{E}_0+\tfrac{L}{2}}^{\hat{E}_0+2L}(p\ast n_\sigma)(x)dx\leq c'\epsilon^2\eta.
\end{align}
for some constant $c'$ we will choose later. We will call distributions that satisfy this property \emph{peaked around $\hat{E}_0$.}

\begin{definition}[Distribution peaked around value]
Given a point distribution $p$ on $\mathbb{R}$, $\epsilon>0$, $\eta>0$ and $n_\sigma$ the probability density function of a Gaussian random variable with standard deviation $\sigma$, we say that $p\ast n_\sigma$ is peaked around a point $\hat{E}_0$ up to $\epsilon$ with parameters $c,c',\tau>0$ if for $L=\sigma\sqrt{2c\myln{\sigma^{-1}\tau^{-1}}}$ we have that:
\begin{align}
    \int_{\hat{E}_0+\tfrac{L}{2}}^{\hat{E}_0+2L}(p\ast n_\sigma)(x)dx\leq c'\epsilon^2\eta.
\end{align}
\end{definition}

To further simplify notation we will define $n_{\sigma,i}$ to be the density of the random variable $\mathcal{N}(E_i,\sigma^2)$.
We will start by proving that whenever Eq.~\eqref{equ:integral_small} holds, the mass on $[\hat{E}_0- L, \hat{E}_0+L]$
is mostly given by energies in the interval 
$[\hat{E}_0- L/2, \hat{E}_0+L/2]$.
\begin{lemma}\label{lem:contribution_small}
Let $X=\sum_ip_iN_i$, where $N_i\sim\mathcal{N}(E_i,\sigma^2)$ are normal random variables with the same variance $\sigma^2$ but different means $E_i$ and $L$ be defined as in Eq.~\eqref{equ:size_L}.
For some $\hat{E}_0$ that is within $\sigma$ of $E_0$, group the means of the $N_i$ into three subsets:
\begin{align}
    G&=\{i:|E_i-\hat{E}_0| \le \tfrac{L}{2}\},\\
    B&=\{i:\tfrac{L}{2}<|E_i-\hat{E}_0|\leq 2L\},\\
    F&=\{i:2L<|E_i-\hat{E}_0|\}.
\end{align}
If $p\ast n_\sigma$ is peaked around $\hat{E}_0$ with $c=4$ and $\tau=\frac{c'\epsilon^2\eta}{4}$ then:
\begin{align}\label{equ:good_approximation}
\left|\int_{\hat{E}_0-L}^{\hat{E}_0+L}(p\ast n_\sigma)(x)dx-\sum_{i\in G}p_i\right|\leq 4c'\epsilon^2\eta.
\end{align}

\end{lemma}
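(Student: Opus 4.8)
The plan is to expand the convolution as a Gaussian mixture and compare it termwise against $\sum_{i\in G}p_i$. Writing $I_i := \int_{\hat E_0 - L}^{\hat E_0 + L} n_{\sigma,i}(x)\,dx$ for the mass that the window $[\hat E_0-L,\hat E_0+L]$ collects from the $i$-th component, we have $\int_{\hat E_0 - L}^{\hat E_0 + L}(p\ast n_\sigma)(x)\,dx = \sum_i p_i I_i$, and hence
\[
\sum_i p_i I_i - \sum_{i\in G}p_i = \sum_{i\in G} p_i(I_i - 1) + \sum_{i\in B}p_i I_i + \sum_{i\in F}p_i I_i .
\]
I would bound the three groups separately by the triangle inequality, using the elementary tail estimate $\mathbb{P}(\sigma|Z|>t)=\erfc\!\big(t/(\sqrt2\sigma)\big)\le e^{-t^2/(2\sigma^2)}$ (the erfc bound of Appendix~\ref{apd:truncated_gaussian_mean}) for $G$ and $F$, and the \emph{peaked} hypothesis for $B$. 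I would also record at the outset that, since $E_i\ge E_0\ge \hat E_0-\sigma$ and $\sigma<L/2$ in the relevant parameter regime, no eigenvalue lies to the left of $\hat E_0-L/2$; consequently every index in $B$ satisfies $E_i\ge \hat E_0+L/2$ (only the right half of $B$ is populated) and all far-left tails are negligible.

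For $G$, each $E_i$ is within $L/2$ of $\hat E_0$, so the window misses only the two tails beyond distance $L/2$, giving $1-I_i\le e^{-L^2/(8\sigma^2)}=(\sigma\tau)^{c/4}=\sigma\tau\le\tau$ (using $c=4$ and $L^2=8\sigma^2\log{\sigma^{-1}\tau^{-1}}$), so $\sum_{i\in G}p_i(1-I_i)\le\tau$. For $F$, each $E_i$ is at distance $\ge 2L$ from $\hat E_0$, so the whole window lies beyond distance $L$ from $E_i$ and $I_i\le e^{-L^2/(2\sigma^2)}=(\sigma\tau)^{c}\le\tau^4$, whence $\sum_{i\in F}p_i I_i\le\tau^4$.

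The crux is the $B$ term, where $I_i$ may be of order one and must be controlled by the hypothesis $\int_{\hat E_0+L/2}^{\hat E_0+2L}(p\ast n_\sigma)\le c'\epsilon^2\eta^2$. Setting $J_i:=\int_{\hat E_0+L/2}^{\hat E_0+2L}n_{\sigma,i}$, the goal is $I_i\le 2J_i+(\text{negligible})$ for each $i\in B$. I would split $I_i$ across $\hat E_0+L/2$: the right piece $\int_{\hat E_0+L/2}^{\hat E_0+L}n_{\sigma,i}$ is $\le J_i$ by interval inclusion, while for the left piece I would write $\Phi$ for the standard normal CDF and $\varphi$ for its density, put $u=E_i-\hat E_0\in[L/2,2L]$, $a=(u-L/2)/\sigma$, $b=(2L-u)/\sigma$ with $a+b=3L/(2\sigma)=:s$ fixed. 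A short computation reduces the left-piece bound to $2\big(1-\Phi(a)\big)\le\Phi(b)+\tfrac12 e^{-9L^2/(8\sigma^2)}$; this I would verify by noting that $\psi(a):=\Phi(s-a)+2\Phi(a)-2$ has derivative $2\varphi(a)-\varphi(s-a)$ with a single sign change on $[0,s]$, so its minimum is at an endpoint, and $\psi(0)=\Phi(s)-1\ge-\tfrac12 e^{-s^2/2}$ while $\psi(s)>0$. Summing then gives $\sum_{i\in B}p_i I_i\le 2\sum_{i\in B}p_i J_i+(\text{tiny})\le 2\int_{\hat E_0+L/2}^{\hat E_0+2L}(p\ast n_\sigma)+(\text{tiny})\le 2c'\epsilon^2\eta^2+(\text{tiny})$.

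Finally I would assemble the three bounds, obtaining a total of at most $\tau+2c'\epsilon^2\eta^2+\tau^4+(\text{tiny})$, and substitute $\tau=c'\epsilon^2\eta^2/4$ so that, after absorbing the exponentially small remainders, the right-hand side is at most $4c'\epsilon^2\eta^2$, which is the claim. The main obstacle is precisely the $B$-term inequality $I_i\le 2J_i+(\text{negligible})$: the left tail of the window is \emph{not} automatically small (it tends to $1/2$ as $E_i\to\hat E_0+L/2$), and the whole point is that $J_i$ simultaneously captures nearly half of the mass in that regime, so these two near-$1/2$ quantities must be paired through the monotonicity argument above rather than bounded crudely.
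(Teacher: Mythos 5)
Your proposal is correct and follows essentially the same route as the paper's proof: the same $G$/$B$/$F$ decomposition, the same Gaussian tail estimates for the $G$ and $F$ contributions, and the same use of the peaked hypothesis to control the $B$ term via $I_i \le 2J_i$ (up to an exponentially small remainder), summed against Eq.~\eqref{equ:integral_small}. If anything, your monotonicity argument for $\psi(a)=\Phi(s-a)+2\Phi(a)-2$ supplies a rigorous justification for the one step the paper treats only informally (``as the mass of the Gaussian is maximal around the mean, the inequality follows''), so your write-up is more complete precisely on the crucial $B$-term inequality.
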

\begin{proof}
Clearly:
\begin{align}
    \int_{\hat{E}_0-L}^{\hat{E}_0+L}(p\ast n_\sigma)(x)dx=\sum_{i\in G}p_i\int_{\hat{E}_0-L}^{\hat{E}_0+L}n_{\sigma,i}(x)dx+\sum_{i\in B}p_i\int_{\hat{E}_0-L}^{\hat{E}_0+L}n_{\sigma,i}(x)dx+\sum_{i\in F}p_i\int_{\hat{E}_0-L}^{\hat{E}_0+L}n_{\sigma,i}(x)dx.
\end{align}
We will estimate each term in the sum separately. Let us start with energies in $F$. As, by definition, all energies in that set are at least $2L$ away from $\hat{E_0}$, they are at least $L$ away from the upper end of the integration limits. Thus, by our choice of $L$ we have for $i\in F$:
\begin{align}
    \int_{\hat{E}_0-L}^{\hat{E}_0+L}n_{\sigma,i}(x)dx\leq 2L\frac{e^{c\mylog{\tau\sigma}}}{\sigma\sqrt{2\pi}}=\mathcal{O}(\tau^{c}\mylog{\tau^{-1}})
\end{align}
and so 
\begin{align}\label{equ:contribution_F}
    \sum_{i\in F}p_i\int_{\hat{E}_0-L}^{\hat{E}_0+L}n_{\sigma,i}(x)dx=\mathcal{O}(\tau^{c}\mylog{\tau^{-1}}).
\end{align}
Let us now estimate the contribution of terms in $G$. We have for $i\in G$:
\begin{align}
\int_{\hat{E}_0-L}^{\hat{E}_0+L}n_{\sigma,i}(x)dx=1- \int_{-\infty}^{\hat{E}_0-L}n_{\sigma,i}(x)dx-\int_{\hat{E}_0+L}^{+\infty}n_{\sigma,i}(x)dx.
\end{align}
By the property that all energies in $G$ are $\tfrac{L}{2}$ away from $\hat{E}_0$, we see that both integrals above are at most $\mathcal{O}(\tau^{\tfrac{c}{2}})$. Thus:
\begin{align}\label{equ:contribution_G}
    \left|\sum_{i\in G}p_i\int_{\hat{E}_0-L}^{\hat{E}_0+L}n_{\sigma,i}(x)dx-\sum_{i\in G}p_i\right|=\mathcal{O}(\tau^{\tfrac{c}{2}}).
    \end{align}
By our choice of $c$ and $\tau$, we have  $\tau^{\tfrac{c}{2}}=\left(\tfrac{c'\epsilon^2\eta}{4}\right)^2$. Thus, both  Eq.~\eqref{equ:contribution_F} and Eq.~\eqref{equ:contribution_G} are bounded by $c'\epsilon^2\eta$.

To finish the proof, we need to consider the contribution of the energies in $B$. Note that for $i\in B$:
\begin{align}
    \sum_{i\in B}p_i\int_{\hat{E}_0-L}^{\hat{E}_0+L}n_{\sigma,i}(x)dx\leq 2\sum_{i\in B}p_i\int_{\hat{E}_0+\tfrac{L}{2}}^{\hat{E}_0+2L}n_{\sigma,i}(x)dx\leq \sum_{i}2p_i\int_{\hat{E}_0+\tfrac{L}{2}}^{\hat{E}_0+2L}n_{\sigma,i}(x)dx\leq 2c'\epsilon^2\eta.
\end{align}
To see this, note that by definition of $B$, the mean of these Gaussians are greater than $\hat{E}_0+\frac{L}{2}$, and the integral on the RHS is over a larger interval that is closer to their mean. Thus, as the mass of the Gaussian is maximal around the mean, the inequality follows. By our assumption in Eq.~\eqref{equ:integral_small} we conclude that:
\begin{align}\label{equ:contribution_B}
   \sum_{i\in B}p_i \int_{\hat{E}_0-L}^{\hat{E}_0+L}n_{\sigma,i}(x)dx\leq 2c'\epsilon^2\eta.
\end{align}
Putting together the inequalities in Eqs.~\eqref{equ:contribution_F},~\eqref{equ:contribution_G} and~\eqref{equ:contribution_B}, Eq.~\eqref{equ:good_approximation} follows.

\end{proof}

We can show in a similar fashion that the following holds:
\begin{corollary}\label{cor:moments_similar}
Under the same conditions as Lemma~\ref{lem:contribution_small}, assume further that $\sigma$ is small enough to ensure that $|\hat{E}_0\pm L|\leq 1$.
If $p\ast n_\sigma$ is peaked around $\hat{E}_0$ with $c=4$ and $\tau=\frac{c'\epsilon^2\eta}{4}$ then:
\begin{align}\label{equ:good_approximation_expect}
\left|\int_{\hat{E}_0-L}^{\hat{E}_0+L}x (p\ast n_\sigma)(x)dx-\sum_{i\in G}p_iE_i\right|\leq 4c'\epsilon^2\eta.
\end{align}
and 
\begin{align}\label{equ:good_approx_variance}
    \left|\int_{\hat{E}_0-L}^{\hat{E}_0+L}x^2(p\ast n_\sigma)(x)dx-\sum_{i\in G}p_i(\sigma^2+E_i^2)\right|\leq 4c'\epsilon^2\eta.
\end{align}
\end{corollary}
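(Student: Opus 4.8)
The plan is to mirror the proof of Lemma~\ref{lem:contribution_small} almost verbatim, now carrying the weights $x$ and $x^2$ through the computation and exploiting that a single component integrates exactly: $\int_{\mathbb{R}} x\, n_{\sigma,i}(x)\,dx = E_i$ and $\int_{\mathbb{R}} x^2\, n_{\sigma,i}(x)\,dx = \sigma^2 + E_i^2$. Concretely, I would first split both target integrals over the three index sets $G$, $B$, $F$, writing $\int_{\hat{E}_0-L}^{\hat{E}_0+L} x\,(p\ast n_\sigma)(x)\,dx = \sum_{i\in G} p_i \int_{\hat{E}_0-L}^{\hat{E}_0+L} x\, n_{\sigma,i}(x)\,dx + \sum_{i\in B}(\cdots) + \sum_{i\in F}(\cdots)$, and analogously with $x^2$ in place of $x$, then estimating each group separately as before.

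For the groups $B$ and $F$ the new hypothesis that $\sigma$ is small enough to ensure $|\hat{E}_0 \pm L| \le 1$ does the work: it guarantees $|x| \le 1$, hence $x^2 \le 1$, for every $x$ in the integration window $[\hat{E}_0 - L, \hat{E}_0 + L]$. Thus for $i \in F$ and $i \in B$ the weighted integrals are bounded in absolute value by the corresponding unweighted mass integrals $\int_{\hat{E}_0-L}^{\hat{E}_0+L} n_{\sigma,i}(x)\,dx$ already controlled in Lemma~\ref{lem:contribution_small}: the $F$-contribution is $\mathcal{O}(\tau^{c}\log{\tau^{-1}})$ by the same exponential-suppression argument, and the $B$-contribution is at most $2c'\epsilon^2\eta^2$ by the peakedness assumption Eq.~\eqref{equ:integral_small}. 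So both groups contribute the same order as in the mass estimate, for the $x$ and the $x^2$ weight alike.

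The $G$ terms are the crux, and the only place a genuinely new estimate is needed. For $i \in G$ I would write $\int_{\hat{E}_0-L}^{\hat{E}_0+L} x\, n_{\sigma,i}(x)\,dx = E_i - \int_{-\infty}^{\hat{E}_0-L} x\, n_{\sigma,i}(x)\,dx - \int_{\hat{E}_0+L}^{\infty} x\, n_{\sigma,i}(x)\,dx$, and likewise the $x^2$ integral equals $\sigma^2 + E_i^2$ minus the two corresponding tails. On these tails the polynomial weight is unbounded, so it cannot simply be bounded by $1$; instead I would invoke the closed-form truncated-Gaussian moment identities (equivalently, integration by parts), e.g. $\int_a^\infty x\, n_{\sigma,i}(x)\,dx = E_i\, Q((a-E_i)/\sigma) + \tfrac{\sigma}{\sqrt{2\pi}}\, e^{-(a-E_i)^2/(2\sigma^2)}$ with $Q$ the Gaussian upper-tail, together with its second-moment analogue. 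Since $E_i$ lies within $L/2$ of $\hat{E}_0$, the endpoints sit at distance at least $L/2$ from the mean, so $(a-E_i)/\sigma \ge \tfrac{1}{2}\sqrt{2c\log{\sigma^{-1}\tau^{-1}}}$; both the tail mass $Q$ and the density prefactor are then exponentially small in $L/\sigma$, i.e. $\mathcal{O}(\tau^{c/4})$ up to the bounded factors $|E_i|\le 1$ and $\sigma, a = \mathcal{O}(1)$ supplied precisely by the $|\hat{E}_0 \pm L|\le 1$ assumption. Summing against $\sum_{i\in G} p_i \le 1$ yields a $G$-error of the same order.

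Finally I would combine the three bounds and, choosing $c = 4$ and $\tau = \tfrac{c'\epsilon^2\eta^2}{4}$ exactly as in Lemma~\ref{lem:contribution_small}, every contribution is $\mathcal{O}(c'\epsilon^2\eta^2)$ and the constants assemble into the claimed bound $4c'\epsilon^2\eta^2$ in both Eq.~\eqref{equ:good_approximation_expect} and Eq.~\eqref{equ:good_approx_variance}. I expect the main obstacle to be the $G$-group tail estimate for the second moment, where the $x^2$ weight must be shown not to overturn the exponential Gaussian decay; this is handled cleanly by the explicit truncated-moment formula rather than any crude bound, with the $|\hat{E}_0 \pm L| \le 1$ hypothesis exactly the ingredient that keeps the polynomial prefactors $\mathcal{O}(1)$.
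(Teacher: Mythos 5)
Your proposal is correct and takes essentially the same route as the paper's proof: split both weighted integrals over $G$, $B$, $F$, use the new hypothesis $|\hat{E}_0\pm L|\le 1$ so that $|x|\le 1$ and $x^2\le 1$ on the integration window reduce the $B$ and $F$ contributions to the mass bounds already established in Lemma~\ref{lem:contribution_small}, and then bound the truncation error of the first two moments for $i\in G$. The only difference is that the paper dismisses the $G$-tail step with ``it is not difficult to show,'' whereas you spell it out via the closed-form truncated-Gaussian moment identities, with correct bookkeeping (endpoints at distance at least $L/2$ from each mean $E_i$, giving tails $\mathcal{O}(\tau^{c/4})$, and polynomial prefactors kept $\mathcal{O}(1)$ by the boundedness hypothesis), which is a sound and welcome filling-in of the paper's omitted detail.
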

\begin{proof}
We can apply the same estimates as we used to show Lemma~\ref{lem:contribution_small} to show that the contributions from energies in $B$ and $F$ are small. This is because we assume that $[\hat{E}_0-L,\hat{E}_0+L]\subset[-1,1]$ and for $x\in[-1,1]$:
\begin{align}
    |xn_{\sigma,i}|\leq n_{\sigma,i},\quad x^2n_{\sigma,i}\leq n_{\sigma,i}
\end{align}
Furthermore, it is also not difficult to show that for $i\in G$ the value of the first two moments are approximated if we integrate over $[\hat{E}_0- L, \hat{E}_0+L]$.
\end{proof}

This shows that under Eq.~\eqref{equ:integral_small} the moments of the random variable conditioned on the interval $[\hat{E}_0- L, \hat{E}_0+L]$ are indeed well-approximated by those of the energies in the interval $[\hat{E}_0- L/2, \hat{E}_0+L/2]$. Thus, if we verify that Eq.~\eqref{equ:integral_small} holds and estimate the moments conditioned on the interval, then we can approximate the quantity in Eq.~\eqref{equ:variance_violated} and determine whether the variance is larger than expected, assuming the energy estimate is correct up to $c\epsilon$.

\subsection{Certification algorithm and its correctness}
With these technical results at hand, let us give a description of our energy certification in Algorithm~\ref{alg:certification}, which we refer to as $\mathrm{GSEE}\_\mathrm{CERT}$.
\begin{algorithm}[ht]

    \caption{Certification of ground state energy estimate}
    \begin{algorithmic}[1]

    \Procedure{GSEE\_CERT}{$\epsilon$, $\hat{E}_0$, $\sigma$, $\eta$, $p$}
        \State $\tau\gets \frac{\epsilon^{2}\eta}{160}$
        \State Let $X\sim\sum_ip_iN(E_i,\sigma^2)$ with pdf $p\ast n_\sigma$
        \State $L=\sqrt{8\myln{\sigma^{-1}\tau^{-1}}}\sigma$
        \State  $\tau'\gets$ estimate of $\int_{\hat{E}_0+\tfrac{L}{2}}^{\hat{E}_0+2L}(p\ast n_\sigma)(x)dx$ up to $\tfrac{\epsilon^2\eta}{160}$.
        \If{$\tau'<\tfrac{\epsilon^2\eta}{80}$}
        \State $M\gets$ estimate $\mathbb{E}(X|X\in[\hat{E}_0-L,\hat{E}_0+L])$ up to $\tfrac{\epsilon^2\eta}{160}$
        \State $S\gets$ estimate $\mathbb{E}((X-M)^2|X\in[\hat{E}_0-L,\hat{E}_0+L])$ up to $\tfrac{\epsilon^2\eta}{160}$.
        \If{$|S-\sigma^2|\leq2\epsilon^2\eta$}
        \State \Return Accept estimate and return $M$.
        \Else 
        \State \Return Reject estimate.
        \EndIf
        \Else 
        \State \Return Reject estimate.\label{line:if}
        \EndIf
    \EndProcedure
    \end{algorithmic}
            \label{alg:certification}

    \end{algorithm}
We deliberately formulated it in a way independent of how we obtained the estimates for the relevant quantities. This is because different algorithms for GSEE have different access models to the function $p\ast n_\sigma$, be it samples, or to estimates of the density, as in Ref.~\cite{wang2023quantum}. Furthermore, we assume that we are certain about the correctness of the estimates used in the algorithm for simplicity. But it is straightforward to extend the analysis to the case where the estimates are only correct up to some given failure probability.
We then have:
\begin{theorem}
Suppose that $\hat{E}_0$ satisfies $|\hat{E}_0-E_0|\leq \sigma$, Algorithm~\ref{alg:certification} accepts, and all estimates within Algorithm~\ref{alg:certification} are accurate. Then we have
\begin{align}\label{equ:good_estimate}
|M-E_0|\leq 4\epsilon.
\end{align}
\end{theorem}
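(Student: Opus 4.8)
The plan is to translate the two acceptance tests into statements about the true moments of $X$ conditioned on the window $I \defeq [\hat E_0 - L, \hat E_0 + L]$, and then to run the variance-versus-mean argument of Lemma~\ref{lem:variance_exceeds} in reverse: a small conditional variance must force the conditional mean, and hence $M$, to sit close to $E_0$. First I would record what acceptance buys us. Since every estimate is correct to within $\epsilon^2\eta/160$, the test $\tau<\epsilon^2\eta/80$ gives the true bound $\int_{\hat E_0 + L/2}^{\hat E_0+2L}(p\ast n_\sigma)(x)\,dx \le 3\epsilon^2\eta/160$, i.e. $p\ast n_\sigma$ is peaked around $\hat E_0$ in the sense required by Lemma~\ref{lem:contribution_small} and Corollary~\ref{cor:moments_similar}; likewise $|S-\sigma^2|\le 2\epsilon^2\eta$ combined with $S = \mathbb{E}((X-M)^2\mid X\in I)\pm \epsilon^2\eta/160$ yields $|\mathbb{E}((X-M)^2\mid X\in I)-\sigma^2|\le 2\epsilon^2\eta+\epsilon^2\eta/160$. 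I would also check at the outset that the hypothesis $|\hat E_0-E_0|\le\sigma$ together with $L/2\ge\sigma$ (which holds because $\tau^{-1}$ is large, so $\sqrt{8\log{\tau^{-1}}}\ge 2$) places the ground-state index in the good set $G=\{i:|E_i-\hat E_0|\le L/2\}$; in particular $P_G\defeq\sum_{i\in G}p_i\ge p_0\ge\eta$. This last bound is what keeps every subsequent division by the normalization $Z\defeq\int_I (p\ast n_\sigma)$ under control.

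Next I would pass from the conditional moments of $X$ to the moments of the finite mixture supported on $G$. Writing $Y_G$ for the discrete variable taking value $E_i$ with probability $p_i/P_G$ for $i\in G$, Lemma~\ref{lem:contribution_small} and Corollary~\ref{cor:moments_similar} give $Z=P_G\pm 4c'\epsilon^2\eta^2$, $\int_I x\,(p\ast n_\sigma)=\sum_{i\in G}p_iE_i\pm 4c'\epsilon^2\eta^2$, and $\int_I x^2\,(p\ast n_\sigma)=\sum_{i\in G}p_i(\sigma^2+E_i^2)\pm 4c'\epsilon^2\eta^2$. Dividing by $Z$ and using $Z\ge\eta/2$ (say) together with $|E_i|\le 1$, the $\eta^2$-errors become $O(\epsilon^2\eta)$ errors, so the conditional mean satisfies $\mathbb{E}(X\mid X\in I)=\mathbb{E}(Y_G)\pm O(\epsilon^2\eta)$ and the conditional variance satisfies $\mathbb{E}((X-\mathbb{E}(X\mid X\in I))^2\mid X\in I)=\sigma^2+\mathrm{Var}(Y_G)\pm O(\epsilon^2\eta)$, exactly mirroring the identity $\mathbb{E}(X^2)-\mathbb{E}(X)^2=\sigma^2+\mathrm{Var}(Y_G)$ used in the proof of Lemma~\ref{lem:variance_exceeds}. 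Combining this with the variance test (and noting $(M-\mathbb{E}(X\mid X\in I))^2$ is negligible) then yields $\mathrm{Var}(Y_G)\le 2\epsilon^2\eta+O(\epsilon^2\eta)$.

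Finally I would run the key inequality of Lemma~\ref{lem:variance_exceeds} in the contrapositive direction on $Y_G$. Since $0\in G$ and $Y_G$ takes the value $E_0$ with probability $p_0/P_G\ge\eta$, we have $\mathrm{Var}(Y_G)\ge (p_0/P_G)(\mathbb{E}(Y_G)-E_0)^2\ge\eta(\mathbb{E}(Y_G)-E_0)^2$, whence $|\mathbb{E}(Y_G)-E_0|\le\sqrt{\mathrm{Var}(Y_G)/\eta}=O(\epsilon)$. The triangle inequality $|M-E_0|\le |M-\mathbb{E}(X\mid X\in I)|+|\mathbb{E}(X\mid X\in I)-\mathbb{E}(Y_G)|+|\mathbb{E}(Y_G)-E_0|$, in which the first term is at most $\epsilon^2\eta/160$ and the second is $O(\epsilon^2\eta)$, then closes the argument.

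The genuinely delicate part is purely bookkeeping. I must choose the constant $c'$ and track the $O(\epsilon^2\eta)$ contributions coming from the division by $Z$ and from the cross term $(M-\mathbb{E}(X\mid X\in I))^2$ carefully enough that the accumulated constant inside $|\mathbb{E}(Y_G)-E_0|\le\sqrt{\mathrm{Var}(Y_G)/\eta}$ lands below $4\epsilon$ rather than merely $O(\epsilon)$. In doing so one must also reconcile the $\eta$ versus $\eta^2$ powers appearing in the peakedness hypotheses of Lemma~\ref{lem:contribution_small} against the $\epsilon^2\eta$ thresholds hard-coded into Algorithm~\ref{alg:certification}; the reconciliation works precisely because the normalization $Z$ is $\Theta(\eta)$, converting the $\epsilon^2\eta^2$-scale moment errors into the $\epsilon^2\eta$-scale tolerances used by the test, but it is where the estimate must be handled with the most care.
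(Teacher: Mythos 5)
Your proposal follows essentially the same route as the paper's proof: the same three ingredients (Lemma~\ref{lem:contribution_small}, Corollary~\ref{cor:moments_similar}, Lemma~\ref{lem:variance_exceeds}), with the paper's proof-by-contradiction replaced by its contrapositive. In one respect your version is actually more faithful to the statement than the paper's: the paper assumes $|\hat{E}_0-E_0|>4\epsilon$ and shows rejection, i.e.\ read literally it certifies the \emph{input} $\hat{E}_0$ rather than the returned value $M$, and it measures the second moment about $\hat E_0$ rather than about $M$; your chain $|M-E_0|\le |M-\mathbb{E}(X\mid X\in I)|+|\mathbb{E}(X\mid X\in I)-\mathbb{E}(Y_G)|+|\mathbb{E}(Y_G)-E_0|$, together with the explicit treatment of the cross term $(M-\mathbb{E}(X\mid X\in I))^2$ and of the normalization $Z$, is what the theorem as stated actually requires.

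There is, however, one genuine gap, and it sits exactly where you yourself say the most care is needed. Acceptance of the first test only yields $\int_{\hat E_0+L/2}^{\hat E_0+2L}(p\ast n_\sigma)(x)\,dx\le 3\epsilon^2\eta/160$, i.e.\ peakedness at scale $\epsilon^2\eta$, whereas Lemma~\ref{lem:contribution_small} and Corollary~\ref{cor:moments_similar} require peakedness at scale $c'\epsilon^2\eta^2$ for a \emph{constant} $c'$; these do not coincide for small $\eta$. Your proposed reconciliation runs in the wrong direction: dividing by $Z=\Theta(\eta)$ is precisely what turns the lemma's $\epsilon^2\eta^2$-scale unnormalized moment errors into $\epsilon^2\eta$-scale conditional errors, but it cannot upgrade the test's weaker $\epsilon^2\eta$-scale guarantee, which after the same division leaves conditional-moment errors of order $\epsilon^2$ with no factor of $\eta$. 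With only what the test certifies, one gets $\mathrm{Var}(Y_G)\le 2\epsilon^2\eta+O(\epsilon^2)$ and hence $|\mathbb{E}(Y_G)-E_0|\le\epsilon\sqrt{2+O(1/\eta)}$, which exceeds $4\epsilon$ once $\eta$ is small. Closing this requires either tightening the first acceptance threshold in Algorithm~\ref{alg:certification} to $\Theta(\epsilon^2\eta^2)$, or re-proving the lemma under $\eta$-scale peakedness and accepting a final bound that degrades with $\eta$. In fairness, the paper's own proof makes the identical silent leap (``if the algorithm accepted, then \ldots $p\ast n_\sigma$ is peaked''), so you inherited this inconsistency rather than introduced it --- but your assertion that the bookkeeping ``works precisely because $Z=\Theta(\eta)$'' is the one step of your proposal that is not justified as written.
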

\begin{proof}
If the algorithm accepted, then we already know that the density $p\ast n_\sigma$ is peaked around $\hat{E}_0$. Otherwise we would have already rejected in the first at Line~\ref{line:if}.
Let us assume that 
\begin{align}\label{equ:bad_estimate}
    |M-E_0|> 4\epsilon.
\end{align}
We will now show that the test will reject. As before, let $X=\sum_ip_i N(E_i,\sigma^2)$ and $\tilde{X}$ be the same random variable conditioned on the interval $[\hat{E}_0-L,\hat{E}_0+L]$. Furthermore, also let as before $G$ be the set of energies $E_i\in[\hat{E}_0-L/2,\hat{E}_0+L/2]$.
By Lemma~\ref{lem:contribution_small} and Corollary~\ref{cor:moments_similar}, we have that the variance of $\tilde{X}$ approximates up to $\frac{\epsilon^2\eta}{160}$ that of the random variable $Y=\left(\sum_{i\in G}p_i N(E_i,\sigma^2)\right)/\left(\sum_{i\in G}p_i\right)$.

By Lemma~\ref{lem:variance_exceeds}, if Eq.~\eqref{equ:bad_estimate} holds, then the variance of $X$ conditioned on $[\hat{E}_0-L,\hat{E}_0+L]$ must exceed 
\begin{align}
\mathbb{E}(|X-\hat{E}_0|^2|X\in[\hat{E}_0-L,\hat{E}_0+L])\geq \sigma^2+16\eta\epsilon^2-\frac{\eta\epsilon^2}{160}>\sigma^2+2\eta\epsilon^2,
\end{align}
which means that the test will have rejected.
Thus, if we accept, then Eq.~\eqref{equ:good_estimate} must hold.

\end{proof}

We now demonstrate that the test will accept if $\sigma$ is smaller than the spectral gap:

\begin{lemma}
Let $\Delta_{\operatorname{true}}$ be the spectral gap of $H$, $p$ be a spectral measure with $p_0\geq \eta$, $\epsilon>0$, and $|\hat{E}_0-E_0|\leq \sigma$. Let $\sigma_0$ be the largest $\sigma>0$ such that Algorithm~\ref{alg:certification} accepts when run with arguments   $(\epsilon, \hat{E}_0, \sigma, \eta, p)$. Then:
\begin{align}\label{equ:maximal_variance}
 \sigma_0 \geq \frac{\Delta_{\operatorname{true}}}{10\sqrt{\myln{2\eta^{-1}\epsilon^{-1}}}}.
\end{align}
\end{lemma}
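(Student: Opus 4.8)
The plan is to prove the bound on $\sigma_0$ by exhibiting a concrete regime of acceptance: I will show that whenever $\sigma \le \frac{\Delta_{\rm true}}{10\sqrt{\log{2\eta^{-1}\epsilon^{-1}}}}$, both tests performed by Algorithm~\ref{alg:certification} pass, so that the algorithm accepts. Since acceptance is governed by the two checks $\tau < \epsilon^2\eta/80$ (peakedness) and $|S-\sigma^2|\le 2\epsilon^2\eta$ (variance), it suffices to verify each of these for the stated $\sigma$. Throughout I fix $\tau = \epsilon^2\eta/160$ and $L=\sqrt{8\log{\tau^{-1}}}\,\sigma$ as in the algorithm, and use that $\hat{E}_0$ is within $\sigma$ of $E_0$, so $E_0-\sigma\le \hat{E}_0\le E_0+\sigma$. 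Acceptance at this $\sigma$ immediately gives $\sigma_0\le \frac{\Delta_{\rm true}}{10\sqrt{\log{2\eta^{-1}\epsilon^{-1}}}}$, as claimed.

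First I would establish the peakedness check. The key geometric observation is that for $\sigma$ at most the claimed threshold the window $[\hat{E}_0+\tfrac{L}{2},\hat{E}_0+2L]$ lies strictly below the first excited energy $E_1=E_0+\Delta_{\rm true}$, with a comfortable margin. Indeed, since $\log{\tau^{-1}}=\log{160\epsilon^{-2}\eta^{-1}}=O(\log{2\eta^{-1}\epsilon^{-1}})$, the constant $10$ is chosen so that $2L\le\gamma\,\Delta_{\rm true}$ for some $\gamma<1$ and hence $d\defeq \Delta_{\rm true}-\sigma-2L=\Omega(\sigma\sqrt{\log{2\eta^{-1}\epsilon^{-1}}})$. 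I then split $\int_{\hat{E}_0+L/2}^{\hat{E}_0+2L}(p\ast n_\sigma)(x)\,dx$ into the ground-state term $p_0\int n_{\sigma,0}$ and the excited-state terms $\sum_{i\ge1}p_i\int n_{\sigma,i}$. The ground-state Gaussian is centered within $\sigma$ of $\hat{E}_0$, so its mass beyond $\hat{E}_0+\tfrac{L}{2}$ is at most $e^{-(L/2-\sigma)^2/(2\sigma^2)}$; the excited Gaussians are centered at or beyond $E_1$, so their total mass below $\hat{E}_0+2L$ is at most $\sum_{i\ge1}p_i\,e^{-d^2/(2\sigma^2)}\le e^{-d^2/(2\sigma^2)}$. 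Both are exponentially small in $\log{2\eta^{-1}\epsilon^{-1}}$, i.e.\ polynomially small in $\epsilon\eta$, and the constants are arranged so that their sum is at most $\epsilon^2\eta/160$; adding the $\epsilon^2\eta/160$ estimation error yields an estimate below $\epsilon^2\eta/80$, so the check passes.

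Next I would handle the variance check. Because $E_1-\hat{E}_0\ge \Delta_{\rm true}-\sigma>\tfrac{L}{2}$ in our range, the set $G=\{i:|E_i-\hat{E}_0|\le \tfrac{L}{2}\}$ equals $\{0\}$. Having verified peakedness, Lemma~\ref{lem:contribution_small} and Corollary~\ref{cor:moments_similar} apply and show that the zeroth, first and second moments of $p\ast n_\sigma$ restricted to $[\hat{E}_0-L,\hat{E}_0+L]$ agree, up to additive error $O(\epsilon^2\eta^2)$, with $p_0$, $p_0E_0$ and $p_0(\sigma^2+E_0^2)$, the moments of the single Gaussian $N_0\sim\mathcal{N}(E_0,\sigma^2)$. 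Dividing by the conditional normalization, which by the same lemma is at least $p_0-O(\epsilon^2\eta^2)=\Omega(\eta)$, converts the $O(\epsilon^2\eta^2)$ errors into $O(\epsilon^2\eta)$ errors on the conditional mean $M$ and variance $S$; the (negligible, $O(\tau)$) shrinkage from truncating a Gaussian only improves matters. Choosing constants so that the total deviation of $S$ from $\sigma^2$, together with the $\epsilon^2\eta/160$ estimation error, stays within $2\epsilon^2\eta$, the variance check passes, so the algorithm accepts.

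The main obstacle I anticipate is the constant-and-logarithm bookkeeping in the peakedness step, rather than any conceptual difficulty. One must reconcile the algorithm's $L=\sqrt{8\log{\tau^{-1}}}\,\sigma$ with the $\sqrt{\log{2\eta^{-1}\epsilon^{-1}}}$ appearing in the bound, and verify that the excited-state margin $d=\Delta_{\rm true}-\sigma-2L$ is a large enough multiple of $\sigma\sqrt{\log{2\eta^{-1}\epsilon^{-1}}}$ that $e^{-d^2/(2\sigma^2)}$ is dominated by $\epsilon^2\eta/160$; this is exactly what fixes the numerical constant $10$ in the denominator. A secondary point needing care is that the conditional-moment errors from Corollary~\ref{cor:moments_similar} are stated for the \emph{unnormalized} integrals, so one must divide by the conditional mass and use $p_0\ge\eta$ to keep the normalized variance error at the $O(\epsilon^2\eta)$ scale demanded by the test's tolerance.
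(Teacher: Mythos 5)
Your proposal follows essentially the same route as the paper's proof: establish acceptance at the claimed threshold $\sigma$ by (i) showing the peakedness check passes because the excited-state Gaussians lie many standard deviations beyond the window $[\hat{E}_0+L/2,\hat{E}_0+2L]$ (so their mass there is polynomially small in $\epsilon\eta$), and (ii) showing the conditioned distribution on $[\hat{E}_0-L,\hat{E}_0+L]$ is then essentially the single ground-state Gaussian, so the variance estimate stays within the $2\epsilon^2\eta$ tolerance and the algorithm accepts, which bounds $\sigma_0$. Your write-up is in fact more detailed than the paper's (which merely asserts the excited-state contribution is $\mathcal{O}((\epsilon\eta)^{50})$, never tracks the ground-state tail in the peakedness window, and does not spell out the normalization step via Lemma~\ref{lem:contribution_small} and Corollary~\ref{cor:moments_similar}), and the constant-and-logarithm bookkeeping you flag as the remaining obstacle is glossed over at the same points in the paper's own argument.
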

\begin{proof}
Let us first show that for our value of $\sigma_0$ in Eq.~\eqref{equ:maximal_variance}, the distribution will be peaked around $E_0$. By our choice of $\sigma_0$, it is easy to see  that the contribution of the other energies on $[E_0-2L,E_0+2L]$ will be of order $\mathcal{O}((\epsilon\eta)^{50})$, which immediately implies that the distribution is peaked.
Furthermore, the probability distribution on $[E_0-2L,E_0+2L]$ is well-approximated by a single Gaussian up to corrections of order $\mathcal{O}((\epsilon\eta)^{50})$. Thus, the variance conditioned on $[E_0-L,E_0+L]$ will deviate from $\sigma^2$ by less than $\epsilon^2\eta$. As our estimate $S$ of the variance deviates from the true one by at most $\frac{\epsilon^2\eta}{160}$, we have $|S-\sigma^2|\leq 2\epsilon^2\eta$, which means we will accept.
\end{proof}

We conclude that the test is sound: in the worst case, if we select \( \sigma \) with a magnitude similar to the spectral gap, acceptance is guaranteed. However, it may also be the case that \( \sigma_0 \gg \Delta_{\operatorname{true}} \), depending on the initial state. To illustrate this, consider once again the case where \( p_0 = 1 \). In this scenario, the test will accept regardless of the gap size. This occurs because \( n_\sigma \) will contain only a single Gaussian component, resulting in a constant variance of \( \sigma^2 \).

\subsection{Sample complexity of $\mathrm{GSEE}\_\mathrm{CERT}$ with rejection sampling}
Let us analyze the sample complexity of the test described above, given access to samples and a success probability of correct verification of $1-\delta$.

\begin{corollary}
Under the same conditions as Theorem \ref{thm:main_in_he}, assume further that \( |E_0 - \hat{E}_0| \leq \sigma \). Then we can perform Algorithm~\ref{alg:certification} with a probability of correct rejection or acceptance of \( 1 - \delta \), using
\begin{align}
\mytO{\epsilon^{-4}\eta^{-3}\mylog{\delta^{-1}}}
\end{align}
block-encodings of Gaussian functions of $H$, as provided in Corollary \ref{cor:qetu_circuit_shifted_gaussian},

\end{corollary}
\begin{proof}
There are two steps that need to be performed for running Algorithm~\ref{alg:certification}. First, we need to determine if the distribution is peaked around $\hat{E}_0$. If it is peaked, we need to determine the variance up to an error $\myO{\eta\epsilon^2}$ on the interval $[\hat{E}_0-L,\hat{E}_0+L]$.
Let us start by showing how to determine if the distribution is peaked. For that we will perform rejection sampling with the Gaussian conditioned on $[\hat{E}_0+\tfrac{L}{2},\hat{E}_0+2L]$. As explained in Section~\ref{subsec:rejection_sampling} in Eq.~\eqref{equ:constant_rejection}, the probability of accepting a sample is given by 
\begin{align}
\frac{\sqrt{2\pi} \sigma \int_{\hat{E}_0+L/2}^{\hat{E}_0+2L}(p \ast n_\sigma)(x) dx}{3L/2}
=  \frac{\sqrt{2\pi}  \int_{\hat{E}_0+L/2}^{\hat{E}_0+2L}(p \ast n_\sigma)(x) dx}{3\sqrt{2 \myln{\sigma^{-1}\tau^{-1}}}}.
\end{align}
Thus, by running $\mytO{\epsilon^{-2}\eta^{-1}\mylog{\delta^{-1}}}$ rounds of the rejection sampling, we can decide if the distribution is peaked around $\hat{E}_0$ or not with probability of failure at most ${\delta}/{2}$ by checking the fraction of accepted samples.

If it is peaked, then we can perform rejection sampling again, but now on the interval $[\hat{E}_0-L,\hat{E}_0+L]$. With access to $\mathcal{O}(\eta^{-2}\epsilon^{-4}\mylog{\delta^{-1}})$ samples from the conditioned distribution, we can estimate the variance of the conditioned random variable to a precision of $\mathcal{O}(\epsilon^2\eta)$, with a failure probability of at most ${\delta}/{2}$. 

The probability of accepting a sample from the interval is given by
\begin{align}
\frac{\sqrt{2\pi} \sigma \int_{\hat{E}_0-L}^{\hat{E}_0+L}(p \ast n_\sigma)(x) dx}{2L}
=  \frac{\sqrt{2\pi} \int_{\hat{E}_0-L}^{\hat{E}_0+L}(p \ast n_\sigma)(x) dx}{2\sqrt{8 \myln{\sigma^{-1}\tau^{-1}}}}.    
\end{align}
By our choice of $L$ and the assumption that $|E_0-\hat{E}_0|\leq\sigma$, we have $E_0\in[\hat{E}_0-L,\hat{E}_0+L]$. It follows 
that 
\begin{align}
\int_{\hat{E}_0-L}^{\hat{E}_0+L}(p\ast n_{\sigma})(x)dx=\Omega(\eta).    
\end{align}
Thus, we need to run the rejection sampling an expected \( \tilde{\mathcal{O}}(\eta^{-1}) \) times to accept a sample. This results in a total complexity  of \( \tilde{\mathcal{O}}(\eta^{-3} \epsilon^{-4} \log{\delta^{-1}}) \) for this step, which dominates the overall sample complexity of the certification algorithm. Additionally, the probability of obtaining incorrect estimates is at most \( \delta \). Since the algorithm always produces a  correct output if given accurate estimates, this completes the proof of the claim.
\end{proof}
Note that the assumption that we are given an estimate that satisfies $|E_0-\hat{E}_0|\leq\sigma$ does not increase the overall depth of circuits required for the certification method. This is because we can always obtain such an estimate with circuits of depth $\mytO{\sigma^{-1}}$ using e.g. Algorithm~\ref{alg:basic_gsee}.

\end{document}